\newcommand{\bm}[1]{\mbox{\boldmath $ #1 $}}
\newtheorem{theorem}{Theorem}[section]
\newtheorem{lemma}[theorem]{Lemma}
\newtheorem{proposition}[theorem]{Proposition}
\theoremstyle{definition}
\newtheorem{definition}[theorem]{Definition}
\theoremstyle{remark}
\newtheorem{remark}[theorem]{Remark}
\newcommand{\be}{\begin{equation}}
\newcommand{\ee}{\end{equation}}
\newcommand{\Odane}{\O}
\newcommand{\II}{{ I\hspace{-.8mm}I}}
\newcommand{\IIo}{\mathring{\!{ I\hspace{-.8mm} I}}{\hspace{.2mm}}}
\newcommand{\ba}{\begin{array}}
\newcommand{\ea}{\end{array}}
\newcommand{\beq}{\begin{eqnarray}}
\newcommand{\eeq}{\end{eqnarray}}
\newtheorem{lm}{lemma}
\newtheorem{thee}{theorem}
\newtheorem{proo}{proposition}
\newtheorem{co}{corollary}
\newtheorem{rem}{remark}
\newtheorem{deff}{definition}
\newcommand{\bd}{\begin{deff}}
\newcommand{\ed}{\end{deff}}
\newcommand{\bl}{\begin{lm}}
\newcommand{\el}{\end{lm}}
\newcommand{\bp}{\begin{proo}}
\newcommand{\ep}{\end{proo}}
\newcommand{\bt}{\begin{thee}}
\newcommand{\et}{\end{thee}}
\newcommand{\bc}{\begin{co}}
\newcommand{\ec}{\end{co}}
\newcommand{\brm}{\begin{rem}}
\newcommand{\erm}{\end{rem}}
\newcommand{\newc}{\newcommand}
\renewcommand{\exp}{\operatorname{exp}}
\let\ccdot\cdot
\def\cdot{\hbox to 2.5pt{\hss$\ccdot$\hss}}
\newc{\aR}{\mbox{\boldmath{$ R$}}}
\newc{\aS}{\mbox{\boldmath{$ S$}}}
\newc{\aT}{\mbox{\boldmath{$ T$}}}
\newc{\aW}{\mbox{\boldmath{$ W$}}}
\newc{\aD}{\mbox{\boldmath{$ D$}}\hspace{-.2mm}}
\newc{\aK}{\mbox{\boldmath{$ K$}}}
\newc{\aL}{\mbox{\boldmath{$ L$}}}
\newcommand{\Rho}{{\it P}}
\newcommand{\Ric}{{\it Ric}}
\newcommand{\Sc}{\it Sc}
\newcommand{\nn}[1]{(\ref{#1})}
\newcommand{\J}{{\mbox{\it J}}}
\newc{\obstrn}[2]{B^{#1}_{#2}}
\newcommand{\rpl}                         
{\mbox{$
\begin{picture}(12.7,8)(-.5,-1)
\put(0,0.2){$+$}
\put(4.2,2.8){\oval(8,8)[r]}
\end{picture}$}}
\newcommand{\lpl}                         
{\mbox{$
\begin{picture}(12.7,8)(-.5,-1)
\put(2,0.2){$+$}
\put(6.2,2.8){\oval(8,8)[l]}
\end{picture}$}}
\newc{\tensor}[1]{#1}
\newc{\Mvariable}[1]{\mbox{#1}}
\newc{\down}[1]{{}_{#1}}
\newc{\up}[1]{{}^{#1}}
\newc{\JulyStrut}{\rule{0mm}{6mm}}
\newc{\midtenPan}{\mbox{\sf S}}
\newc{\midten}{\mbox{\sf T}}
\newc{\midtenEi}{\mbox{\sf U}}
\newc{\ATen}{\mbox{\sf E}}
\newc{\BTen}{\mbox{\sf F}}
\newc{\CTen}{\mbox{\sf G}}
\def\sideremark#1{\ifvmode\leavevmode\fi\vadjust{\vbox to0pt{\vss
 \hbox to 0pt{\hskip\hsize\hskip1em
 \vbox{\hsize2cm\tiny\raggedright\pretolerance10000
  \noindent #1\hfill}\hss}\vbox to8pt{\vfil}\vss}}}
\newcommand{\edz}[1]{\sideremark{#1}}
\numberwithin{equation}{section}
\newcommand{\hh}{{\hspace{.3mm}}}
\newcommand{\nablab}{\bar\nabla}
\DeclareMathOperator{\divergence}{div}
\newcommand{\smallJ}{\scalebox{.7}{$\J$}}
\newcommand{\db}{{d-1}}
\newcommand{\sss}{\scriptscriptstyle}
\newcommand{\smidge}{{\hspace{-1mm}}}
\newcommand{\Smidge}{{\hspace{-.8mm}}}
\newcommand{\SSmidge}{{\hspace{-.2mm}}}
\renewcommand\geq{\geqslant}
\renewcommand\leq{\leqslant}
\DeclareMathOperator{\w}{w}
\DeclareMathOperator{\ext}{d}
\DeclareMathOperator{\D}{L}
\DeclareMathOperator{\s}{s}
\DeclareMathOperator{\I2}{S}
\DeclareMathOperator{\h}{h}
\DeclareMathOperator{\x}{x}
\DeclareMathOperator{\y}{y}
\DeclareMathOperator{\Vol}{Vol}
\DeclareMathOperator{\GJMS}{P}
\DeclareMathOperator{\Operator}{O}
\begin{document}

\renewcommand{\today}{}
\title{
{Renormalized Volume}\\[4mm]
}
\author{ A. Rod Gover${}^\sharp$ \& Andrew Waldron${}^\natural$}

\address{${}^\sharp$Department of Mathematics\\
  The University of Auckland\\
  Private Bag 92019\\
  Auckland 1142\\
  New Zealand
  } \email{r.gover@auckland.ac.nz}
  
  \address{${}^{\natural}$Department of Mathematics\\
  University of California\\
  Davis, CA95616, USA} \email{wally@math.ucdavis.edu}

\vspace{10pt}

\renewcommand{\arraystretch}{1}

\begin{abstract} 

We develop a universal distributional calculus for regulated volumes of metrics that are singular along hypersurfaces. When the hypersurface is a conformal infinity we give  simple integrated distribution expressions for the divergences and anomaly of the regulated volume functional valid for any choice of regulator.
For closed hypersurfaces or
conformally compact geometries,  methods from a previously developed boundary calculus for conformally compact manifolds can be applied to give explicit holographic formula\ae\   for the divergences and anomaly expressed as hypersurface integrals over local quantities (the method also extends to non-closed hypersurfaces).
The resulting anomaly does not depend on any particular choice of regulator, while the regulator dependence of the  
divergences is precisely captured by these formul\ae.
Conformal hypersurface invariants can be studied by demanding that the singular metric obey, smoothly and formally to a suitable order,
a Yamabe type problem with boundary data along the conformal infinity. We prove that the volume anomaly for these singular Yamabe solutions 
is a conformally invariant integral of a local~$Q$-curvature that generalizes the Branson~$Q$-curvature by including data of the embedding.
In each dimension this canonically defines a higher  dimensional generalization of the Willmore  energy/rigid string action.
Recently Graham proved that the first variation of the volume anomaly recovers the density obstructing smooth solutions to this singular Yamabe problem; we give a new proof of this result employing our boundary calculus. Physical applications of our results include studies of quantum corrections to entanglement entropies.

\vspace{10cm}

\noindent
{\sf \tiny Keywords: 
 AdS/CFT, anomaly, calculus of variations,   conformally compact, conformal geometry, 
 entanglement entropy, hypersurfaces, renormalized volume, Willmore energy, Yamabe problem}

\end{abstract}


\maketitle

\pagestyle{myheadings} \markboth{Gover \& Waldron}{Renormalized Volume}

\newpage

\tableofcontents

\section{Introduction}

The problem of defining and computing volumes for 
manifolds with singular metrics
\begin{equation}\label{ds2}
ds^2= \frac{dx^2 + h(x)}{x^2}\, ,
\end{equation}
has played a central role in the anti de Sitter/conformal field theory (AdS/CFT) correspondence as well as in conformal geometry~\cite{Mal,AdSCFTreview,FGQ,GZ}. Volumes of regions approaching the hyper\-surface/bo\-u\-ndary~$\Sigma$ 
diverge at $x=0$. However, by a suitable cut-off and renormalization, a renormalized volume
functional can be defined that is invariant under conformal transformations of the bound\-ary metric $h$ up to a (conformally invariant) anomaly. An early and spectacular AdS/CFT success was the work of Henningson and Skenderis that identified this as the Weyl or trace anomaly of the boundary quantum field theory~\cite{Henningson}. Significant mathematical progress was made 
when Fefferman and Graham ~\cite{FGQ} showed that for Poincar\'e--Einstein structures (Euclidean signature, asymptotically AdS, Einstein manifolds), the renormalized volume anomaly recovered Branson's~$Q$-curvature~\cite{BQ} for the boundary manifold. This is an important  invariant of conformal geometries~(see~\cite{GJ,DM} and the reviews~\cite{BG,WhatQ}).
The renormalized volume is usually obtained by computing a Fefferman--Graham coordinate expansion of a bulk metric tensor solving, to some order, a bulk problem with boundary data at a conformal infinity~$\Sigma$. This expansion is  inserted first in the metric determinant and, in turn, into a regulated volume integral.
We shall present a 
general,  simplifying  and efficient approach to volume computations for singular metrics 
that, in contrast to previous studies, does not  rely on solving any particular bulk problem.

\medskip

Let $(M,g^o)$ be a Riemannian manifold whose metric $g^o$ is singular along an  hypersurface $\Sigma$. 
For simplicity we take all structures to be oriented. Given a compact region~$D$ such that $\partial D\cap\Sigma\neq \emptyset$, we define the regulated volume of $D$ as follows (see also the diagram in Display~\nn{iamnotclosed}).

\begin{definition}\label{def1}
Given $(M,g^o,D)$ as above, let $ \varepsilon \geq 0$ and $\Sigma_\varepsilon$ be a smooth, one parameter family of oriented hypersurfaces such
that 
\begin{enumerate}[(i)]
\item
$\Sigma_0=\Sigma$, 
\item $\Sigma_{\varepsilon>0} \cap \Sigma=\emptyset$, and
\item $\Sigma_{\varepsilon>0}$ separates $D$ into a disjoint union  $D=D_\varepsilon\cup (D\backslash D_\varepsilon)$, where $g^o$ is non-singular in $D_\varepsilon$. \end{enumerate}
Then the {\it regulated volume} of $D$ is defined to be
$$
\Vol_\varepsilon(D,\Sigma):=\int _{D_\varepsilon} \sqrt{\det g^o}\, .
$$
\end{definition}

Our methods can in principle  be applied to 
quite general metric singularities, but we focus  
on the  
mathematically and physically 
central 
conformally compact case for which the hypersurface $\Sigma$ is a conformal infinity for the metric $g^o$.
In this case the regulated volume may be expanded as a sum of divergences (poles in $\varepsilon$), an anomaly (a $\log \varepsilon$ term) and 
the~$\varepsilon$-independent renormalized volume plus $\mathcal O(\varepsilon)$ contributions. 
We give simple results for the divergences and  anomaly in terms of integrals over Dirac-delta distributions, and their derivatives,  depending on a defining function for the hypersurface. These results encode the precise dependence of the divergences on the choice of regulator~$\Sigma_\varepsilon$, while the anomaly is independent of the regulator and is conformally invariant (in a suitable sense).

For applications, results for the anomaly and divergences given as hypersurface integrals over local quantities are required. 
Here it is propitious to assume  
that the hypersurface $\partial D\cap \Sigma$ is closed. We also indicate how to handle non-closed boundaries in the current work, but reserve a detailed treatment to a sequel article.
The key tool for both cases is the boundary calculus for conformally compact manifolds developed in~\cite{GW,GLW}. For conformally compact structures, we present exact and explicit formulas for both the divergences and the anomaly in the regulated volume. These are expressed as boundary integrals over local quantities and hold for {\it any} regulator and {\it any} conformally compact manifold.
 
Our results can be applied to study  the
  conformal geometry of hypersurface embeddings.
  Quantities that depend only on the conformal embedding of the hypersurface~$\Sigma$, can be found and studied by
requiring that the metric $g^o$ 
solves a singular version of the Yamabe problem of finding conformally rescaled metrics with constant scalar curvature~\cite{CRMouncementCRM,GW15}. In fact, 
since a unique asymptotic  solution to the singular Yamabe problem exists (at least up to the order required for the anomaly) 
for any conformally compact stucture, there is a corresponding canonical result for the 
anomaly which   is given by an integral over a density that can be defined for any hypersurface in a Riemannian manifold; this gives a new $Q$-curvature that includes 
extrinsic curvature data. In particular, 
by construction, it only depends on the conformal data of how the hypersurface~$\Sigma$ is embedded in the bulk.

\medskip

Since we need not   impose the bulk Einstein equation, our results apply to general bulk/boundary problems and thus extend an important aspect of the AdS/CFT program.
  A second motivation for our study is that this general setting allows us to study the extrinsic conformal geometry of the boundary geometry. Mathematically, our results are part of a general program to understand conformal hypersurface geometry~\cite{GW15} (see~\cite{CG15} for an overview),
     and  to develop  the calculus for integrated conformal hypersurface invariants begun in~\cite{GGHW15}.
Indeed, we wish to   
  initiate a new approach to geometric invariant theory based on holographic renormalization.
This program is also of substantial physical interest:
Soon after the original AdS/CFT duality was proposed, Graham and Witten showed how 
the renormalized volume method could be extended to bulk minimal surfaces in order to analyze holographic observables for boundary submanifolds~\cite{GrahamWitten}. This study produced conformal hypersurface invariants, the most notable of which, perhaps,  is the  Willmore energy for surfaces embedded in 3-manifolds. More recently, classes of these observables have been related to entanglement entropies of boundary field theories~\cite{RT,Astaneh,Perlmutter}.

\medskip

A key observation underlying our approach is that the metric in Equation~\nn{ds2} is determined by the pair
$$
g=dx^2 + h(x) \ \mbox{ and } \ \sigma = x\, ,
$$
where $(g,\sigma)$ are a non-singular bulk metric and function. However, we equally well could have chosen the pair $(\Omega^2 g, \Omega \sigma)$ where $\Omega$ is any smooth, positive function of the bulk manifold. The equivalence 
$$
g\sim \Omega^2 g
$$
defines a conformal class of metrics $\bm c:=[\hh g\hh ]=[\hh\Omega^2g\hh]$ and suggests that conformal, rather than Riemannian, geometry is the correct tool for simultaneously handling  bulk 
and boundary geometries in an AdS/CFT 
setting. The  equivalence $(g,\sigma)\sim (\Omega^2 g,\Omega \sigma)$ defines a bulk, weight one, conformal density $\bm \sigma :=[\hh g\, ;\, \sigma]=[\hh \Omega^2 g\, ;\, \Omega \sigma]$.
When the function~$\sigma$ has a suitable non-empty, nowhere dense zero locus, the data~$(M,\bm \sigma)$ is called an {\it almost Riemannian geometry}~\cite{Goal}
 (note that  the canonical equivalence class representative $[\sigma^{-2}g\, ;\, 1]$ defines a singular Riemannian metric~$ds^2$ as in Equation~\nn{ds2}). 
When this zero locus~$\Sigma$ is a hypersurface or boundary component and 
the function~$\sigma$ is for it a defining function, then~$\Sigma$ is a {\it conformal infinity} for the singular metric $g/\sigma^2$. 
When $M$ is compact with boundary the zero locus of $\sigma$, then $(M,\bm \sigma)$ is said to be conformally compact. In fact, for our purposes, it suffices to work in a collar neighborhood of the boundary, therefore we shall say that $(M,\bm \sigma)$ is {\it conformally compact} in any case where 
 $\Sigma$ is closed.
Reformulating the renormalized volume problem in terms of almost Riemannian geometry brings to bear a potent boundary calculus of conformally compact manifolds that utilizes the bulk conformal structure~\cite{Goal,GW,GLW}.

One of our main results is that for 
{\it any} conformally compact manifold, the anomaly  is given  as an integral over the corresponding  extrinsically coupled $Q$-curvature first introduced in~\cite{GW}. 
When regulating a quantum field theory, a dimensionful scale must be introduced. 
A powerful way to handle dimensionful quantities is to use conformal densities. 
Physically, a dimensionful quantity, such as a length, will vary across spacetime if different choices of local unit systems
are employed.  For example, the invariant property of a length is its linear homogeneity under Weyl transformations. Hence to regulate renormalized volumes we introduce a nowhere vanishing, unit weight, bulk conformal density $\bm \tau$ and cut off the bulk geometry at a regulating surface $\Sigma_\varepsilon$ determined by 
$$
\bm \sigma/\bm \tau =\varepsilon\in {\mathbb R}_+\, .
$$
The renormalized volume anomaly is then given, in $d$ bulk dimensions, by a  boundary/hypersurface integral
$$
{\mathcal A}=\frac{1}{(d-1)!(d-2)!}\, \int_\Sigma \bm Q^{\scalebox{.7}{$\bm\sigma$}}\, ,
$$
where  $\bm Q^{\scalebox{.7}{$\bm\sigma$}}=[\hh g\, ;\, Q\, ]$ is an {\it extrinsically coupled $Q$-curvature} of~$\Sigma$ which generalizes the standard Branson~$Q$-curvature. When the singular metric is determined by the conformal hypersurface embedding through the singular Yamabe problem,  it has a  simple explicit formula$$
\bm Q := (-\D)^{d-1} \log \bm \tau \, \Big|_\Sigma\, .
$$
Equally compact formul\ae\  are available for the integrated, local coefficients of the~$\frac{1}{\varepsilon^k}$ ($d-1\geq k\geq 1$) divergences in the regulated volume; these necessarily depend on the choice of regulator~$\bm \tau$ and are proportional to 
$$
\int_\Sigma\, \D^{d-k-1}\Big(\frac1{\bm \tau^k}\Big)\, .
$$ 
Details are given in Sections~\ref{RV} and~\ref{singYam}, but the main features of these results are as follows:
\begin{itemize}
\item \label{shift} The quantity $\bm Q$ is a weight $1-d$ density and   is invariant under simultaneous conformal rescalings  $g\to \Omega^2g$ and $\tau\to \Omega \tau$.
Fixing a choice of regulator~$\tau$ and transforming only the metric, the $Q$-curvature then has the famous linear shift property
\begin{equation*}
Q\mapsto \Omega^{1-d} (Q- \GJMS_{d-1} \log\Omega)\, .
\end{equation*}
Here, $\GJMS_{d-1}$ is a so-called extrinsic conformal Laplacian power~\cite{GW15}, which is a canonical extrinsically coupled analog of the
conformally invariant 
GJMS operators of~\cite{GJMS}.
The quantity $\GJMS_{d-1} \log\Omega$ is a total divergence along~$\Sigma$, and hence the $Q$-curvature integrates to an invariant of the (closed) boundary conformal manifold.

\vspace{1mm}
\item The anomaly is in general  non-vanishing.
However, when the bulk geometry is Einstein, the extrinsic $Q$-curvature vanishes for odd dimensional $\Sigma$, while for even dimensional~$\Sigma$  it reduces to the standard $Q$-curvature of the boundary conformal geometry.

\vspace{2mm}
\item
The operator $\D$ is the so-called Laplace--Robin operator (see Section~\ref{LAPROB}) determined by the conformal unit defining density~$\bm \sigma$ (see Section~\ref{singYam}). Along the boundary~$\Sigma$ it is a conformally invariant Robin-type (Dirichlet plus Neumann) operator that controls conformally invariant boundary data for conformal infinities, while in the bulk it is a Laplace-type operator that generates wave equations for matter fields~\cite{GoSigma,GoverS,Shaukat,GLW}.

\vspace{1mm}
\item The Laplace--Robin operator forms part of an $\mathfrak{sl}(2)$ solution generating algebra~\cite{GW}; this is the key technical tool for our computations.

\vspace{1mm}
\item In dimension $d=3$, the anomaly is 
a sum of the  Euler characteristic  for 2-manifolds and the rigid string action/Willmore energy for embedded surfaces (see Equation~\nn{Willd=2}).

\vspace{1mm}
\item The simplicity of the integrands appearing in the above
 formul\ae\ for the anomaly and divergences  is achieved by expressing these as local bulk quantities restricted to the hypersurface. This type of bulk boundary correspondence often carries the moniker ``holography'', so
 expressions for  hypersurface invariants given by the restriction of  bulk quantities are termed 
  {\it holographic formul\ae}~\cite{GW}.

\vspace{1mm}
\item The above simple formul\ae\ for the  
extrinsically coupled   $Q$-curvature and divergences 
 rely on the existence of asymptotic solutions to a singular version of the Yamabe problem.  
 As already mentioned, there exist 
 also extremely simple distributional formul\ae\  for these quantities valid
both  for general singular metrics and
for non-closed~$\Sigma$; see Theorem~\ref{distdivsan}. For conformally compact structures
the local boundary integral expressions for these are given in Proposition~\ref{divs} and Theorem~\ref{anomalytheorem}.
\end{itemize}

Variational problems for $Q$-curvatures are also a subject of intense study. In particular, the metric variation of the Branson $Q$ curvature yields the Fefferman--Graham obstruction tensor~\cite{GraHi}. This latter quantity determines whether log terms must be introduced when solving Einstein's equations in a Fefferman--Graham expansion off a conformal infinity. For the extrinsically coupled $Q$-curvature, an analogous problem  is to treat variations of  the anomaly~${\mathcal A}$ with respect to variations of the embedding of the hypersurface~$\Sigma$. In~\cite{GGHW15}, an efficient calculus for this type of variation was developed by writing boundary energy functionals holographically in terms of bulk integrals. This is also a key part of  our extrinsic $Q$-curvature computation. Indeed the hypersurface variation of the anomaly  plays the role of an obstruction to smoothly solving a bulk problem, but rather than Einstein's equations, the relevant problem is the singular Yamabe problem. This problem was found to be obstructed in~\cite{ACF}, with the obstruction shown to be a 
non-trivial  conformal invariant of embedded surfaces when~$d=3$. Generally, the obstruction was shown to give a natural conformal hypersurface invariant and called the obstruction density in~\cite{GW15}.  Low dimensional examples are known to be variational~\cite{GGHW15}.
Very recently, Graham has proved that the obstruction density of~\cite{CRMouncementCRM,GW15} is the variation of the renormalized volume anomaly~\cite{Grahamnew}. In Section~4 we rederive this result within our framework. 

Our results can  be applied to the situation encountered in entanglement entropy studies where the relevant renormalized volume computation applies to the renormalized ``area'' of a minimal hypersurface in a (spatial) bulk geometry whose boundary is some (codimension two with respect to the spatial bulk geometry) closed hypersurface separating entangled spatial regions in a boundary quantum field theory. For that, one only needs to compute the induced metric along the minimal hypersurface and then treat the entangling hypersurface as the boundary for the minimal hypersurface.
The Laplace--Robin operator characterization of volume divergences is extremely simple, but naturally will produce complicated formul\ae\  in terms of both intrinsic and extrinsic curvatures when higher divergences in higher dimensions are considered.
However, 
since quantum corrections to holographic entanglement entropies are of current topical interest (see for example~\cite{Lewkowycz,Engelhardt}),
we have converted our compact Laplace--Robin-type formul\ae\  
into integrated local curvature expressions for the first four divergences; see  Equations~\nn{firsttwo} and~\nn{nnlo} and Appendix~\ref{nnnlo}.

 
 Many of our results
 were originally obtained using a tractor calculus approach~\cite{BEG},  and then rederived using conformal densities with a view to making the materially generally accessible. We refer the interested reader to our work~\cite{GW15} for further details in this direction.


\subsection{Geometry conventions}\label{conventions}

All structures will be assumed to be smooth ({\it i.e.}~$C^\infty$).
We work  with oriented manifolds~$M$ of dimension~$d$ and hypersurfaces in $M$, meaning
compatibly oriented, codimension~1 submanifolds embedded in~$M$.  When the dimension~$d$ equals 
 three or four,
 we often refer to the latter   as surfaces and spaces, respectively, and we will refer interchangeably to the manifold $M$ as the ``bulk/ambient/host'' manifold. 
 (Note that the exterior derivative will be denoted by $\ext$, to avoid confusion with the dimension~$d$.)
 When~$M$ is equipped with a Riemannian metric~$g$ (for simplicity we assume Euclidean signature), its Levi-Civita connection will be denoted by  
$\nabla$
or $\nabla_a$. 
The corresponding Riemann curvature tensor $R^g$  is
$$R(u,v)w=[\nabla_u,\nabla_v]w-\nabla_{[u,v]}w\, ,$$
for arbitrary vector fields $u$, $v$ and $w$ (we drop the superscript indicating the dependence on the metric $g$ on geometric quantities when this is clear by context). In an index notation, $R$  is denoted by
$R_{ab}{}^c{}_d$ and $R(u,v)w$ is $u^a v^b R_{ab}{}^c{}_d w^d$.  
Cotangent and tangent spaces will be canonically identified using the metric tensor~$g_{ab}$, meaning that  this will be used to raise and lower indices in the standard fashion.

The Riemann curvature can be decomposed into the 
trace-free {\it Weyl curvature} $W_{abcd}$ and the symmetric {\it Schouten tensor}~$\Rho_{ab}$ according to 
$$
R_{abcd}=W_{abcd}+ 2g_{a[c} \Rho_{d]b}-2g_{b[c} \Rho_{d]a}\, .
$$
Here antisymmetrization over a pair of indices is denoted by square brackets so that $X_{[ab]}:=\frac{1}{2}\big(X_{ab}-X_{ba}\big)$. The Schouten and Ricci tensors are related by
$$
\Ric_{bd}:=R_{ab}{}^a{}_d=(d-2)\Rho_{bd}+g_{ab}\J\, ,\quad \J:=\Rho_a^a\, .
$$
The scalar curvature $\Sc=g^{ab}\Ric_{ab}$, thus $\J=\Sc/(2(d-1))$. In two dimensions the Schouten tensor defined above is pure trace with $\J=\frac12 \Sc$. 

Given an embedded hypersurface~$\Sigma$, intrinsic analogs of the above geometric quantities will be decorated with  bars, so for example, the induced metric is $\bar g_{ab}$ and its Riemann tensor is $\bar R_{ab}{}^c{}_d$. The same indices are used  for hypersurface tensors as for those in the host space~$M$ (remembering, of course, that the former are orthogonal to the unit normal vector). Equalities that hold only along the hypersurface $\Sigma$ are denoted by $\stackrel\Sigma=$.

We use 
$|u|:=\sqrt{u_au^a{}^{\phantom 2}\!\!}:=\sqrt{u^2}$ to denote the length of a vector $u$. Symmetrization over groups of indices is indicated by round brackets, and the notation $(\cdots)\circ$ denotes the trace-free, symmetric  part of a group of indices.

\section{Mathematical background}

\subsection{Conformal densities}\label{confdenintro}

A {\it conformal manifold} is a $d$-manifold $M$ equipped with
a conformal class of metrics
$$
{\bm c}
:=[\hh g]=[\Omega^2 g]\, ,
$$
where $\Omega:=\exp(\varpi)$ is any smooth, strictly-positive function. On a conformal manifold, a {\it conformal density}  of weight $w\in{\mathbb R}$ is 
a equivalence class of (metric, function)
pairs
  defined by $${\bm \tau}:=[\hh g\, ;\, \tau]=[\Omega^2g\, ;\, \Omega^w \tau]\, .$$
In the following we use {\it density} as a moniker for  conformal density. 
A weight $w=0$ density is a function on $M$, in which case we may denote $[\hh g\, ;\, f\, ]$ by~$f$. Equal weight densities ${\bm f}=[\hh g\, ;\, f\, ]$ and ${\bm h}=[\hh g\, ;\, h\, ]$ may be added according to 
${\bm f}+{\bm h}:=[\hh g\, ;\, f+h\, ]$
yielding a density of the same weight, while multiplication ${\bm f}{\bm h}:=[\hh g\, ;\, fh\, ]$  yields a density with weight given by the sum of weights (here
${\bm f}$, ${\bm h}$ need not be equally weighted). The unit density is the weight $0$ density $1:=[\hh g\, ;\, 1\, ]$. Tensor-valued conformal densities can be defined analogously to their scalar counterparts.
For example, if ${\bm f}=[\hh g\, ; f]$ is a weight zero density then its {\it conformal gradient}
\begin{equation}\label{conformalgradient}
\bm \nabla_a {\bm f}:=[\hh g\, ; \, \nabla_a f]\, ,
\end{equation}
defines a weight zero covector-valued density.

When  $w=1$ and the function $\tau$ is strictly positive, we call ${\bm \tau}=[\hh g\, ;\, \tau\, ]$ a {\it true scale}, or  simply a ``scale'' 
(which dovetails nicely with its physical interpretation). A true scale canonically determines a Riemannian geometry~$(M,g^o_{ab})$ via the  equivalence class representative~$
{\bm \tau}=[\hh g^o\, ;\, 1\, ]
$. Conversely, given a true scale ${\bm \tau}$ and a density ${\bm f}$, this canonically determines a function $f$ by expressing ${\bm f}=[\hh g^o,f\, ]$. We will often perform computations involving densities in terms of such a function $f$ and term this ``working in a scale'', which we will label either by specifying a given metric~$g\in\bm c$ or a true scale $\bm \tau$. In contexts where  the choice of scale/metric 
is clear, we will use unbolded symbols for the corresponding equivalence class representatives for densities.

Given a unit weight density ${\bm \sigma}$ (which need not be a true scale) and a weight $w$ density ${\bm f}:=[\hh g\, ;\, f]$, then we obtain a well-defined  weight $w+1$ covector-valued density $ \triangledown^{\scalebox{.7}{$\bm \sigma$}}_a  {\bm f}$ by~\cite{GLW}
$$
 \triangledown^{\scalebox{.7}{$\bm \sigma$}}_a {\bm f}:=[\hh g\, ;\, (\sigma \nabla_a - n_a w)f] \, ,
$$
where $n_a:=\nabla_a \sigma$.
Also,
 if $\bm \omega_a=[\hh g\, ;\, \omega_a]$ is a weight $2-d$ covector-valued density, then its {\it divergence} 
$$
\divergence \bm\omega :=[\hh g\, ;\, \nabla^a \omega_a]
$$
is a well-defined weight $-d$ density.

A weight~$w$ {\it log-density} 
 is also defined by an equivalence class of (metric,function) pairs
 as follows~\cite{GW} 
$${\bm \lambda}:=[\hh g\, ;\, \lambda]=[\Omega^2 g\, ;\, \lambda+w\, \varpi]\, .$$
In particular, given a strictly  positive, weight $w$ density~$\bm \tau$,  we may define its logarithm as the weight $w$ log density
$$\log {\bm \tau}:=[\hh g\, ;\, \log \tau\, ]\, .$$

On occasion it  will be useful to employ the {\it weight operator}~$\w$ defined
acting on the {\it conformal metric} $\bm g_{ab}:=[\hh g\, ; \, g_{ab}]$ and its inverse 
$\bm g^{ab}:=[\hh g\, ; \, g^{ab}]$, a weight $w$ density~$\bm \tau$ and a weight $w$ log-density $\bm \lambda$ by 
$$\w  {\bm g}_{ab}=2{\bm g}_{ab}\, ,\quad
\w  {\bm g}^{ab}=-2{\bm g}^{ab}\, ,\quad
\w  {\bm \tau}=w{\bm \tau}\, ,\quad
\w  {\bm \lambda}=w\, .
$$
Note that the conformal metric and its natural inverse can be employed to perform index contractions for products of tensor densities.

The operator $ \triangledown^{\scalebox{.7}{$\bm \sigma$}}_a$ is well-defined acting on log-densities, for example,
\begin{equation}\label{trianglelog}
 \triangledown^{\scalebox{.7}{$\bm \sigma$}}_a \log {\bm \tau} = [\hh g\, ;\, \sigma \nabla_a \log \tau - n_a]
\end{equation}
is a unit weight density.

It is worth remarking that any dimensionful physical quantity can be regarded as a conformal density, since the transformation $g_{ab}\mapsto \Omega^2 g_{ab}$ amounts to a local choice of unit system while conformal weights then measure physical dimensions
of observables.

\subsection{Defining density}\label{DD}
Given an embedded  hypersurface $\Sigma\subset M$, a {defining density} $\bm \sigma$ is a  weight $w=1$ density ${\bm \sigma}=[\hh g\, ;\, \sigma\, ]$ with zero locus $${\mathcal Z}({\bm \sigma}):=\{P\in M\, |\, \sigma(P)=0\}=\Sigma\, ,$$ and such that $\ext\!\sigma |_P\neq 0$, $\forall P\in \Sigma$ (so the function $\sigma$ is a {\it defining function} for $\Sigma$). For a given hypersurface, a defining density  always exists, at least locally.
 
The {\it ${\bm {\mathcal S}}$-curvature} of a conformal metric~${\bm c}$ and defining density~${\bm \sigma}$
is the weight $w=0$ density ({\it i.e.}, function) defined by
\begin{equation}\label{Scurvy}
{\bm {\mathcal S}}:=\Big[\hh g\, ;\ g^{ab} (\nabla_a \sigma)(\nabla_b \sigma)-\frac{2\sigma}d\,\big(g^{ab}\nabla_a\nabla_b\sigma +\sigma \J\, \big)\Big]\, .
\end{equation}
Working in the scale $g_{ab}$, and denoting $n_a:=\nabla_a\sigma$ and $\rho:=-\frac{1}{d}(\Delta + \J\, )\sigma$, the   ${\bm {\mathcal S}}$-curvature is given by the function
$
n^2+2\rho\sigma
$.

\subsection{The Laplace--Robin operator}\label{LAPROB}
Let $\bm \sigma=[\hh g\, ;\, \sigma]$ be a weight~1 density. Then
the corresponding {\it Laplace--Robin operator}~$\D$ maps weight $w$ scalar conformal densities to 
weight $w-1$~conformal densities according to 
\begin{equation}\label{Ldef}
\D\!{\bm f}:=\big[\hh g\, ;\, (d+2w-2)(\nabla_n+w\rho)f-\sigma
(\Delta+w\J\, )f\, 
\big]\, .
\end{equation}
Note that this a Laplacian-type operator that is degenerate along the zero locus of $\bm \sigma$. In the case that $\bm \sigma$ is a defining density, this restricts to a Robin-type (``Dirichlet plus Neumann'') operator along the corresponding hypersurface~$\Sigma$.

The Laplace--Robin operator  also maps weight~$w$ log-densities to weight $-1$ densities via
\begin{equation}\label{Dlog}
\D\bm \lambda:=
\big[\hh g\, ; (d-2)(\nabla_n \lambda+w\rho) 
 -\sigma (\Delta \lambda+w\J\, )\big]\, .
\end{equation}

The weight and Laplace--Robin operators obey the algebra
$$
[\w,\D]=-\D\, .
$$
The  multiplicative operators 
$
\s\!{\bm f}:={\bm \sigma}{\bm f}
$
and $
\I2\!{\bm f}:={\bm {\mathcal S}}{\bm f}
$,
mapping weight $w$ densities to weight $w-1$ and  $w$ densities  respectively, obey 
$$
[\w,\s]=\s\, ,\qquad \, [\w,\I2]=0\, .
$$
Importantly,  for {\it any} conformal structure and defining density the following algebra holds~\cite{GW}
\begin{equation}\label{algebra}
[\D,\s]=\I2\, \circ \,  (d+2\w)\, .
\end{equation}
Thus, when the ${\bm {\mathcal S}}$-curvature is non-vanishing, the operators
$
x:=\s$, $h:=d+2\w$ and $y:=-\I2^{-1} \D
$
obey the  ${\mathfrak sl}(2)$ Lie algebra
\begin{equation}\label{solgen}
[\x,\y]=\h\, ,\qquad [\h,\x]=2\x\, ,\qquad[\h,\y]=-2\y\, ;
\end{equation}
for reasons linked to its applications, we call this the {\it solution generating algebra}.

The  algebra~\nn{solgen} also holds upon replacing $y:=-\I2^{-1} \circ \D$ by $y:=-\D \circ \I2^{-1}$. The difference between these two choices is encoded by the following lemma:
\begin{lemma}\label{C1}
Suppose the $\bm{\mathcal S}$-curvature is nowhere vanishing, then
acting on densities, the following operator identity holds:
$$[\D, \I2^{-1}]=(\D {\bm {\mathcal S}}^{-1}) 
-2
(\bm \nabla_a {\mathcal S}^{-1}) \, {\bm g}^{ab}
   \triangledown^{\scalebox{.7}{$\bm \sigma$}}_b\, .
$$
\end{lemma}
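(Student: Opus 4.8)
The plan is to prove this operator identity by evaluating both sides on an arbitrary weight~$w$ density and computing in a fixed choice of scale $g\in\bm c$, so that all densities are represented by honest functions. Since $\bm{\mathcal S}$ is nowhere vanishing, $\I2^{-1}$ is just multiplication by the weight-zero representative $\mathcal S^{-1}=1/\mathcal S$ (with $\mathcal S=n^2+2\rho\sigma$), and the weight algebra $[\w,\D]=-\D$, $[\w,\I2]=0$ guarantees that both sides lower weight by one; hence it suffices to check equality of the representatives, and no separate conformal-invariance argument is needed. Writing $\bm f=[\hh g\,;\,f]$, I would expand $\D(\I2^{-1}\bm f)$ from the defining formula~\nn{Ldef} at weight~$w$, and likewise $\I2^{-1}(\D\bm f)=[\hh g\,;\,\mathcal S^{-1}\big((d+2w-2)(\nabla_n+w\rho)f-\sigma(\Delta+w\J)f\big)]$.

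First I would apply the Leibniz rule to the two differential operators in $\D$ acting on the product $\mathcal S^{-1}f$, using $\nabla_n(\mathcal S^{-1}f)=(\nabla_n\mathcal S^{-1})f+\mathcal S^{-1}\nabla_n f$ and $\Delta(\mathcal S^{-1}f)=(\Delta\mathcal S^{-1})f+2(\nabla^a\mathcal S^{-1})(\nabla_a f)+\mathcal S^{-1}\Delta f$. Upon subtracting $\I2^{-1}\D\bm f$, every term in which $\mathcal S^{-1}$ is left undifferentiated is precisely $\mathcal S^{-1}$ times the representative of $\D\bm f$, and so cancels identically. What survives is
$$
[\D,\I2^{-1}]\bm f=\big[\hh g\,;\,(d+2w-2)(\nabla_n\mathcal S^{-1})f-\sigma(\Delta\mathcal S^{-1})f-2\sigma(\nabla^a\mathcal S^{-1})(\nabla_a f)\big]\,.
$$

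The decisive step is to reorganize this surviving expression into the two pieces on the right-hand side. The combination $(d-2)(\nabla_n\mathcal S^{-1})f-\sigma(\Delta\mathcal S^{-1})f$ is exactly the representative of $(\D\bm{\mathcal S}^{-1})\bm f$, because $\D$ applied to the weight-zero density $\bm{\mathcal S}^{-1}$ reads $[\hh g\,;\,(d-2)\nabla_n\mathcal S^{-1}-\sigma\Delta\mathcal S^{-1}]$. The gap between the coefficient $(d+2w-2)$ produced above and the coefficient $(d-2)$ of $\D\bm{\mathcal S}^{-1}$ is the weight-dependent remainder $2w(\nabla_n\mathcal S^{-1})f$, and I would show that this remainder, together with the leftover gradient term $-2\sigma(\nabla^a\mathcal S^{-1})(\nabla_a f)$, assembles exactly into $-2(\bm\nabla_a\mathcal S^{-1})\,{\bm g}^{ab}\,\triangledown^{\scalebox{.7}{$\bm\sigma$}}_b\bm f$. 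Indeed, unpacking $\triangledown^{\scalebox{.7}{$\bm\sigma$}}_b\bm f=[\hh g\,;\,(\sigma\nabla_b-n_b w)f]$ and contracting with $(\nabla^b\mathcal S^{-1})$ (using $n_b\nabla^b=\nabla_n$) gives $\sigma(\nabla^a\mathcal S^{-1})(\nabla_a f)-w(\nabla_n\mathcal S^{-1})f$, whose $(-2)$ multiple is precisely the required expression.

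I expect the only genuine obstacle to be careful bookkeeping of the weight factors: one must confirm that the $w$-dependence manufactured by the shift operator $\triangledown^{\scalebox{.7}{$\bm\sigma$}}$ through its $-n_b w$ term exactly accounts for the $(d+2w-2)$ versus $(d-2)$ discrepancy, while the purely second-order and $\rho,\J$ contributions play no role because they always leave $\mathcal S^{-1}$ undifferentiated. Everything else is a routine application of the Leibniz rule, and the weight count above certifies that both sides are the same well-defined weight $w-1$ operator.
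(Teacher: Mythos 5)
Your proposal is correct and takes essentially the same route as the paper's own proof: a direct computation in a fixed scale, expanding $[\D,\I2^{-1}]$ on a weight-$w$ density by the Leibniz rule, cancelling every term in which $\mathcal S^{-1}$ is undifferentiated (including the $\rho$ and $\J$ pieces), and recognizing the survivor $(d+2w-2)(\nabla_n\mathcal S^{-1})f-\sigma(\Delta\mathcal S^{-1})f-2\sigma(\nabla^a\mathcal S^{-1})\nabla_a f$ as $(\D\bm{\mathcal S}^{-1})\bm f-2(\bm\nabla_a\mathcal S^{-1})\,\bm g^{ab}\triangledown^{\scalebox{.7}{$\bm\sigma$}}_b\bm f$. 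Your bookkeeping of the $2w(\nabla_n\mathcal S^{-1})f$ discrepancy via the $-n_bw$ term in $\triangledown^{\scalebox{.7}{$\bm\sigma$}}_b$ is exactly the paper's middle step, so there is nothing to add.
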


\begin{proof}
Acting on a weight $w$ density ${\bm f}:=[\hh g\, ;\, f]$ and remembering that ${\bm {\mathcal S}}=[\hh g\, ;\, {\mathcal S}]$ has weight~0,  we have
\begin{equation*}
\begin{split}
[\D, \I2^{-1}] {\bm f}&=\big[\hh g\, ;\, 
\big((d+2w-2)(\nabla_n+w\rho\big)-\sigma (\Delta+w J)\big)
({\mathcal S}^{-1}f)\\
&
\phantom{\big[\hh g\, ;\, 
\big((d+\, \, \, }
-{\mathcal S}^{-1}\big((d+2w-2)(\nabla_n+w\rho\big)-\sigma (\Delta+w J)\big)f
\big]\\[1mm]
&=\big[\hh g\, ;\, 
\big((d-2)(\nabla_n{\mathcal S}^{-1})
-\sigma (\Delta {\mathcal S}^{-1})\big) f
-2\, (\nabla^a {\mathcal S}^{-1})(\sigma \nabla_a-n_aw)f
\big]
\\[1mm]
&=
(\D {\bm {\mathcal S}}^{-1}) {\bm f}
-2
(\bm \nabla_a {\mathcal S}^{-1}) \, {\bm g}^{ab}
   \triangledown^{\scalebox{.7}{$\bm \sigma$}}_a{\bm f}
\, .
\end{split}
\end{equation*}
\end{proof}

The Laplace--Robin operator also enjoys  an integration by parts formula:

\begin{theorem}\label{parts}
Let $\bm f$ and $\bm g$ be densities of weight $1-d-w$ and $w$, respectively. Then~$\D$ is formally self-adjoint and moreover
$$
\bm f \D \bm g - (\D \bm f)\,  \bm g
+\divergence \bm j=0\, ,
$$
where the weight $2-d$ covector-valued density 
$$\bm j_a=\big [\hh g\, ;\, \sigma\big(f\, \nabla_a g-(\nabla_a f)\, g\big)-(d+2w-1)\, n_a fg\big]\, .$$
\end{theorem}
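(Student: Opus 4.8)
The plan is to reduce the pointwise identity to a computation in a fixed but arbitrary scale $g\in\bm c$, and to first recognize the current $\bm j$ as built from the $\bm\sigma$-connection so that its weight is manifest. For the weight $w$ density $\bm g$ and the weight $1-d-w$ density $\bm f$ one has $\triangledown^{\scalebox{.7}{$\bm \sigma$}}_a\bm g=[\hh g;(\sigma\nabla_a-n_a w)g]$ and $\triangledown^{\scalebox{.7}{$\bm \sigma$}}_a\bm f=[\hh g;(\sigma\nabla_a-n_a(1-d-w))f]$, whence a one-line computation gives
$$
\bm f\,\triangledown^{\scalebox{.7}{$\bm \sigma$}}_a\bm g-(\triangledown^{\scalebox{.7}{$\bm \sigma$}}_a\bm f)\,\bm g=\big[\hh g;\sigma\big(f\nabla_a g-(\nabla_a f)g\big)-(d+2w-1)\,n_a fg\big]=\bm j_a\, .
$$
Since $\triangledown^{\scalebox{.7}{$\bm \sigma$}}_a$ raises weight by one, both terms on the left are weight $2-d$ covector densities, so $\bm j_a$ is a well-defined weight $2-d$ density and $\divergence\bm j$ makes sense by the definition in Section~\ref{confdenintro}. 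This disposes of well-definedness without any separate conformal-invariance check, which would otherwise be the fiddliest part, namely verifying that the non-tensorial pieces of $\bm j_a$ assemble into a genuine density.

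It then remains to verify the identity in the scale $g$. First I would expand $\divergence\bm j=[\hh g;\nabla^a j_a]$ by the Leibniz rule, using $\nabla^a\sigma=n^a$, $n^a\nabla_a=\nabla_n$ and $\nabla^a n_a=\Delta\sigma$; the two $\sigma\,(\nabla^a f)(\nabla_a g)$ cross terms cancel, leaving only first-order normal derivatives, the Laplacian pieces $\sigma(f\Delta g-g\Delta f)$, and a term proportional to $(\Delta\sigma)\,fg$. Substituting $\Delta\sigma=-d\rho-\sigma\J$, which follows from $\rho=-\tfrac1d(\Delta+\J)\sigma$, converts this last contribution into $\rho$- and $\sigma\J$-terms. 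Next I would write out $\bm f\D\bm g-(\D\bm f)\bm g$ from the definition~\nn{Ldef}, taking care that $\bm f$ has weight $1-d-w$, so that the prefactor $(d+2w-2)$ is replaced by $d+2(1-d-w)-2=-d-2w$ and $w$ by $1-d-w$ in the $\rho$- and $\J$-coefficients.

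Adding the two expressions, the cancellation is checked type by type. The $\nabla_n g$ and $\nabla_n f$ terms cancel against the corresponding terms from the current; the $\sigma\Delta$ terms cancel since $\bm f\D\bm g-(\D\bm f)\bm g$ supplies exactly $\sigma(g\Delta f-f\Delta g)$, opposite to the divergence contribution; and the $\sigma\J$ terms cancel because their combined coefficient is $(1-d-2w)+(d+2w-1)=0$. The single genuinely nontrivial check is the $\rho\,fg$ coefficient, where one must verify $(d+2w-2)w+(d+2w)(1-d-w)+d(d+2w-1)=0$: the first two terms collapse to $d(1-d-2w)$ and the last is $d(d+2w-1)$, which indeed sum to zero. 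This establishes $\bm f\D\bm g-(\D\bm f)\bm g+\divergence\bm j=0$ in the scale $g$; since $g$ was arbitrary and all three summands are weight $-d$ densities, the identity holds as densities. Formal self-adjointness then follows by integrating over a closed region bounding $\Sigma$ (or a closed manifold): the exact term $\divergence\bm j$ integrates to zero by the divergence theorem, yielding $\int\bm f\,\D\bm g=\int(\D\bm f)\,\bm g$.

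The main obstacle I anticipate is purely organizational: keeping the weight-dependent coefficients straight under the substitution $w\mapsto 1-d-w$ in $\D\bm f$, since it is precisely the conspiracy of these coefficients, together with $\Delta\sigma=-d\rho-\sigma\J$, that forces the $\rho$- and $\J$-terms to cancel; a sign or arithmetic slip in the $\rho$-coefficient identity is the most likely failure point.
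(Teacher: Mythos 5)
Your proposal is correct and takes essentially the same approach as the paper, whose proof likewise consists of writing out $\bm f\D\bm g-(\D\bm f)\bm g+\divergence\bm j$ in an arbitrary scale and checking by direct computation that $\bm j_a$ is a weight $2-d$ density; all of your coefficient checks, including the crucial $\rho\, fg$ identity $(d+2w-2)w+(d+2w)(1-d-w)+d(d+2w-1)=0$, are accurate. Your identification $\bm j_a=\bm f\,\triangledown^{\scalebox{.7}{$\bm\sigma$}}_a\bm g-(\triangledown^{\scalebox{.7}{$\bm\sigma$}}_a\bm f)\,\bm g$ is a tidy refinement that makes the density property of $\bm j_a$ manifest from the operator defined in Section~\ref{confdenintro}, in place of the paper's unspecified direct verification of that point.
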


\begin{proof}
The first equality follows simply from writing out  the left hand side of the display  in some scale~$g_{ab}$. 
Thereafter, it remains to 
verify that ${\bm j}_a$ is indeed a density of the quoted weight, which again follows from a direct computation.
\end{proof}

\noindent Because the above result holds for generally curved conformal structures, we expect Theorem~\ref{parts} to be of interest beyond our current context.

\subsection{Conformal hypersurface invariants}

Consider an embedded  hypersurface described by a defining function $\Sigma={\mathcal Z}(\sigma)$.
A {\it hypersurface preinvariant}\ $\mathcal P(g,\sigma)$ amounts to a diffeomorphism invariant quantity built from $\sigma$ and the metric such that 
$$
\mathcal P(g,\sigma)\big|_\Sigma=\mathcal P(g,v\sigma)\big|_\Sigma
$$
for any positive function~$v$ (see~\cite{GW15} for a precise definition). 
A {\it hypersurface invariant} $P(g_{ab},\Sigma)$  is the restriction of a hypersurface preinvariant
to $\Sigma$; per its definition, this depends only on the Riemannian embedding of the hypersurface~$\Sigma$, and in particular  not on the choice of  a defining function. Key examples include the
{\it unit normal}
\begin{equation}\label{unorm}
\hat n_a:=\frac{\nabla_a \sigma}{|\nabla \sigma|}\Big|_\Sigma\, ,
\end{equation}
the {\it first fundamental form}
$$
{ I}_{ab}:=\Big(g_{ab}-
\frac{(\nabla_a \sigma)}{|\nabla \sigma|}\frac{(\nabla_b \sigma)}{|\nabla \sigma|}\Big)\Big|_\Sigma\, ,
$$
the {\it mean curvature}
\begin{equation}\label{H}
H:=\frac{1}{d-1}\, \nabla^a\Big(\frac{\nabla_a \sigma}{|\nabla \sigma|}\Big)\Big|_\Sigma\, ,
\end{equation}
and the {\it second fundamental form} 
$$
\II_{ab}:=\big(\nabla_a
-\frac{(\nabla_a \sigma)}{|\nabla \sigma|}
\frac{(\nabla^c \sigma)}{|\nabla \sigma|}\, \nabla_c\big)
\Big(\frac{\nabla_b \sigma}{|\nabla \sigma|}\Big)\Big|_\Sigma\, .
$$
Hypersurface invariants obey various non-trivial identities, the most of important of which include  the identification of the intrinsic hypersurface metric~$\bar g_{ab}$ with the first fundamental form, and the Gau\ss\ equation expressing the difference between ambient and hypersurface curvatures in terms of the second fundamental form:
\begin{equation}\label{Gauss}
{ I}_{ab}=\bar g_{ab}\, , \qquad
R^\top_{abcd}|_\Sigma=\bar R_{abcd}-2\II_{a[c}\II_{d]b}\, .
\end{equation}
Here and throughout, we  use a superscript~$\top$ to denote orthogonal projection
onto hypersurface-tangential directions.
Note that $I_{ab}^\top=I_{ab}$  and $\II_{ab}^\top =\II_{ab}$. Indeed, using that the projected tangent bundle $TM^\top|_\Sigma$ and the hypersurface tangent bundle $T\Sigma$ are isomorphic, we may use the same indices to label host space and hypersurface tensors.

We will need the following technical result for the mean curvature:
\begin{lemma}\label{Hlemma}
Let $(g,\sigma)$ be a metric and a defining function for a hypersurface~$\Sigma$ such that the corresponding
 {\it ${\bm {\mathcal S}}$-curvature} 
 obeys
 \begin{equation}\label{S1}
 {\bm {\mathcal S}}=\big[\hh g\, ;\, 1+\mathcal{O}(\sigma^2)\big]\, .
 \end{equation}
 Then along $\Sigma$
 $$
 \rho:=-\, \frac{\Delta\hspace{-2.8mm} \phantom{\J}^{\, g}\,  \sigma + \J^{\, g}\sigma}{d}\stackrel\Sigma= -H\, .
 $$
\end{lemma}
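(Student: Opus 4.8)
The plan is to evaluate the mean‑curvature formula \eqref{H} directly and then feed in the two pieces of information carried by the hypothesis \eqref{S1}. Writing $n_a:=\nabla_a\sigma$ and $n^2:=n_an^a$, the quotient rule applied to $\hat n_a=n_a/|\nabla\sigma|$ gives, in a neighborhood of $\Sigma$ where $|\nabla\sigma|\neq 0$,
$$
\nabla^a\Big(\frac{n_a}{|\nabla\sigma|}\Big)
=\frac{\Delta\sigma}{|\nabla\sigma|}-\frac{n^a n^c\,\nabla_a n_c}{|\nabla\sigma|^{3}}\, ,
$$
where I have used $\nabla_a|\nabla\sigma|=n^c\nabla_a n_c/|\nabla\sigma|$ together with the symmetry of the Hessian $\nabla_a n_c=\nabla_a\nabla_c\sigma$. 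Hence, by \eqref{H}, computing $H$ reduces to controlling $|\nabla\sigma|$ and the normal–normal second derivative $n^an^c\nabla_a n_c$ along $\Sigma$; both will be supplied by the $\bm{\mathcal S}$‑curvature condition.

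Working in the scale $g_{ab}$, the $\bm{\mathcal S}$‑curvature is the function $\mathcal S=n^2+2\rho\sigma$, so the hypothesis \eqref{S1} reads $n^2+2\rho\sigma=1+\mathcal O(\sigma^2)$. Restricting to $\Sigma$, where $\sigma=0$, gives $n^2\stackrel\Sigma=1$ and hence $|\nabla\sigma|\stackrel\Sigma=1$. Because $\sigma$ is a defining function, the symbol $\mathcal O(\sigma^2)$ means $\mathcal S-1=\sigma^2\phi$ for some smooth $\phi$, so \emph{both} $\mathcal S-1$ and its first derivative vanish on $\Sigma$. Differentiating $\mathcal S=n^2+2\rho\sigma$ and restricting therefore yields
$$
0\stackrel\Sigma=\nabla_a\mathcal S=2\,n^c\nabla_a n_c+2(\nabla_a\rho)\sigma+2\rho\,n_a
\stackrel\Sigma= 2\,n^c\nabla_a n_c+2\rho\,n_a\, ,
$$
so that $n^c\nabla_a n_c\stackrel\Sigma=-\rho\,n_a$. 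Contracting with $n^a$ and using $n^2\stackrel\Sigma=1$ gives the remaining ingredient $n^an^c\nabla_a n_c\stackrel\Sigma=-\rho$.

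Substituting $|\nabla\sigma|\stackrel\Sigma=1$ and $n^an^c\nabla_a n_c\stackrel\Sigma=-\rho$ into the first display and invoking \eqref{H} gives $(d-1)H\stackrel\Sigma=\Delta\sigma+\rho$. Finally, from $\rho=-\tfrac1d(\Delta\sigma+\J\sigma)$ and $\sigma\stackrel\Sigma=0$ I read off $\Delta\sigma\stackrel\Sigma=-d\rho$, whence $(d-1)H\stackrel\Sigma=-d\rho+\rho=-(d-1)\rho$ and thus $\rho\stackrel\Sigma=-H$, as claimed. The only delicate point is the middle step: one must exploit the \emph{full} strength of the $\mathcal O(\sigma^2)$ hypothesis, so that not merely $\mathcal S\stackrel\Sigma=1$ (which only normalizes $|\nabla\sigma|$) but also $\nabla_a\mathcal S\stackrel\Sigma=0$ holds; it is precisely this first‑derivative information that pins down $n^an^c\nabla_a n_c$ and lets the normalization of the normal convert into the identity for the mean curvature.
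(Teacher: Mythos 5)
Your proof is correct and takes essentially the same approach as the paper: both evaluate the divergence defining $H$ in Equation~\nn{H} directly and use the full strength of hypothesis~\nn{S1} to get $|\nabla\sigma|\stackrel\Sigma=1$, $\Delta\sigma\stackrel\Sigma=-d\rho$, and the first-derivative identity (your $n^an^c\nabla_a n_c\stackrel\Sigma=-\rho$ is exactly the paper's $\nabla_n(|\nabla\sigma|^{-1})\stackrel\Sigma=\rho$). The only cosmetic difference is that you extract the latter by differentiating $\mathcal S=n^2+2\rho\sigma$ and restricting to $\Sigma$, whereas the paper reads it off from the expansion of $|\nabla\sigma|^{-1}$.
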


\begin{proof}
This result was originally obtained in~\cite[Section 3.1]{Goal} for the case ${\bm {\mathcal S}}=1$ and the proof proceeds along similar lines to that given there. Starting with the preinvariant  on the right hand side of Equation~\nn{H} we have
$$
\nabla^a\Big(\frac{\nabla_a \sigma}{|\nabla \sigma|}\Big)
=\frac{\Delta \sigma}{|\nabla \sigma|}
\ +\ 
\nabla_n\,  (|\nabla \sigma|^{-1})\, ,
$$
where $n_a:=\nabla_a\sigma$. Comparing Equations~\nn{Scurvy} and~\nn{S1} yields $|\nabla \sigma|^2 + 2\rho \sigma=1+{\mathcal O}(\sigma^2)$, so that along~$\Sigma$ it follows that $|\nabla\sigma|=1$, $\Delta \sigma=-d\rho$  and
$$
\nabla_n\,  (|\nabla \sigma|^{-1})\stackrel\Sigma=\rho\, .
$$
Thus
$$
\nabla^a\Big(\frac{\nabla_a \sigma}{|\nabla \sigma|}\Big)\stackrel\Sigma=-(d-1)\rho\, .$$
\end{proof}

\medskip

When $P(\Omega^2g,\Sigma)=\Omega^w P(g,\Sigma)$, the equivalence class of hypersurface invariants 
$$
{\bm P}:=[\hh g\, ;\, P(g,\Sigma)]=[\Omega^2g\, ;\, \Omega^{w}P(g,\Sigma)]
$$
defines a {\it conformal hypersurface invariant}. Important standard examples include the 
weight $w=1$~{\it unit normal density} and weight $w=2$ {\it first fundamental form density}
$$
{\bm {\hat n}}_a:=[\hh g\, ;\, \hat n_a]\ \mbox{ and } \ 
{\bm {I}}_{ab}:=[\hh g\, ;\, I_{ab}]\, ,
$$
as well as the  (weight $w=1$) {\it trace-free second fundamental form density}
$$
{\bm {\IIo}}_{ab}:=[\hh g\, ;\, \II_{ab}-H\,  I_{ab}]\, .$$
We define the 
weight $w=-2$ density
$$
\bm K:=\, \bm{\IIo}_{ab}\, \bm{\IIo}^{ab}\, .
$$
For rigid surfaces, this gives a
 measure of the energy density due to bending. It also appears as the Lagrangian density for a rigid string~\cite{Polyakov}; hence we call $\bm K$ the {\it rigidity density}. As a simple consequence of the Gau\ss\  Equation~\nn{Gauss}, in ambient dimension $d\geq 3$, the rigidity density can be reexpressed in terms of  Riemann and mean curvatures:
\begin{equation}\label{K2J}
\bm K=(d-2)\big[\hh g\, ;\, 
2\big(\J-\Rho_{ab}\hat n^a\hat n^b
-\, \bar{\!\! \J}\, \big)+(d-1)H^2\big]\, .
\end{equation}

\medskip

We shall also need the weight $w=0$ {\it Fialkow tensor}
defined in dimensions $d> 3$ by~\cite{Grant,Stafford}
\begin{eqnarray*}
\bm {\mathcal F}_{ab}&:=&\Big[\, g\, ; \, \Rho^\top_{ab}-\bar\Rho_{ab}+H\, \IIo_{ab}+\frac12\,  \bar g_{ab} H^2\Big]\\[1mm]
&=&\frac1{d-3}\Big(\bm \IIo_{\!a}^c\, \bm \IIo_{cb}-\frac1{2(d-2)} \bm{  I}_{ab}\,  \bm{K} - \bm W_{\!cabd}\, \bm{\hat n}^c\bm{\hat n}^d\Big)
\, .
\end{eqnarray*}
The second line above follows from a standard application of the Gau\ss\  equations (see \cite{YuriThesis,GW15}); we  have used conformal invariance of the Weyl tensor $W_{ab}{}^c{}_d$ to define the weight $2$ density~$\bm W_{\! abcd}:=[\hh g\, ;\, W_{abcd}]$. Finally, in dimension~$d=4$, the {\it hypersurface Bach tensor density} of weight~$-1$ is defined by~\cite{GGHW15}
$$
\bm B_{ab}:=\big[\hh g\, ;\, \big(\hat n^c C_{c(ab)}\big)^{\!\top}+H W_{cabd}\,  \hat n^c \hat n^d -\nablab^c\big((\hat n^d W_{d(ab)c})^{\!\top}\big)\big]\, .
$$
In the above, $C_{abc}$ is the ambient Cotton tensor. 
Continued to dimensions greater than four, for almost Einstein structures, the first term on the right hand side above is linked to the ambient Bach tensor~\cite{Goal,GLW}.

\subsection{Extrinsic conformal Laplacian powers and BGG operators}

Given a hypersurface $\Sigma$ and a corresponding defining density $\bm \sigma$,
a smooth operator~$\Operator$, whose domain is densities on $M$, 
is said to be {\it tangential}  if 
$$\Operator\circ \s = \s\circ \, \widetilde \Operator\, ,$$
for some other smooth operator $\widetilde\Operator$.
Tangential operators are useful since they can be used to define and efficiently treat operators on hypersurface densities~$\bm{\bar f}$ via
$$
\ \ \overline{\Smidge\Operator\smidge}\,  \bm{\bar f} := \big(\!\Operator \!\bm f\big)\big|_\Sigma\, ,
$$
where $\bm f$ is {\it any} smooth extension of $\bm {\bar f}$ to $M$.

\medskip

A key point for us is that
nontrivial tangential operators can be constructed using the solution generating algebra~\nn{solgen} 
 by employing the standard $\mathfrak{sl}(2)$ enveloping algebra identity
$$
[y^k,x]=-ky^{k-1}(h-k+1)\, .
$$
This implies that the operator
\begin{equation}\label{tangID}
\GJMS_k^{\scalebox{.7}{$\bm\sigma$}}:= (-\I2^{-1} \D)^k
\end{equation}
is tangential when acting on densities of weight $\frac{k-d+1}{2}$.
In general this operator depends on the choice of defining density $\bm \sigma$. However, in Section~\ref{singYam} we present a canonical defining density~$\bm{\bar\sigma}$ obtained by solving a singular version of the Yamabe problem,
this yields {\it extrinsic conformal Laplacian powers} 
$$\GJMS_k:=\overline{\GJMS\hspace{-.13mm}}{\,}_k^{\scalebox{.8}{$\bm{\bar\sigma}$}}\, ,$$ 
determined entirely by the data $(M,\bm c,\Sigma)$
(for orders $k\geq d$ the above definition must be slightly modified, see~\cite{GW15} for details). The simplest example is when $k=2$. In this case $P_2$ is an  extrinsic generalization of the hypersurface Yamabe operator
$$
\GJMS_2\,  [\hh g\, ;\, \bar f\, ]=\Big[\hh g\, ;\, \Big\{\bar \Delta+\Big(1-\frac\db 2\Big)\Big(\ \bar{\!\!\J} \ -\frac{K}{2(d-2)}\  \Big) \Big\} \ \bar f \, 
\Big]\, .
$$
Here $K:=\IIo_{ab}\, \IIo^{ab}$ is the rigidity density.
For $k$ even, the operators $\GJMS_k$ have leading term proportional to the Laplacian power $\bar\Delta^{\frac k2}$, and are therefore extrinsic analogs of GJMS operators.

A second class of non-trivial hypersurface operators is linked to the BGG construction of~\cite{CSS}.
The very general BGG technology provides  sequences of conformally invariant  operators associated to finite dimensional irreducible 
representations of the conformal group.
Specializing to hypersurfaces, the {\it first BGG operator} associated to the defining (or vector) representation acts on weight one densities and therefore also conformal hypersurface invariants $\bm {\bar f}$ according to  
$$
\D_{ab} \bm {\bar f}=[\hh\bar g\, ; \, (\nablab_{(a}\nablab_{b)\circ} + \bar \Rho_{(ab)\circ}) \bar f\, ] \, ,\quad d \geq 4\, .
$$
In hypersurface dimension two, the above (intrinsically defined) operator is unavailable. However, in that case, there exists an  {\it extrinsic hypersurface BGG operator}~\cite{GGHW15}. We will need the formal adjoint of this operator 
which maps 
rank~2, weight $-3$ symmetric, trace-free, conformal hypersurface tensor densities $\bm X^{ab}:=[\bar g\, ;\, X^{ab}]$ to a
conformal hypersurface density of weight $-3$ according to
\begin{equation}\label{dualBGG}
\D_{ab}^* \bm X^{ab} = [\hh g\, ;\, \nablab_a \nablab_b  X^{ab}+\Rho_{ab}X^{ab}+H\, \IIo_{ab}X^{ab} ]\, . 
\end{equation}

\subsection{Integrated densities}\label{integrated}

Recall that a weight $-d$ density~$\bm f=[\hh g\, ;\, f]$ can be invariantly integrated over a conformal $d$-manifold $M$ or some region $D\subset M$ since the volume element~$\ext \! V^g$  of $g_{ab}\in {\bm c}$  
defines a weight $d$, measure-valued density
$$
\bm{\ext \! V}:=[\hh g\, ;\, \ext \! V^g]\, ,
$$
because $\ext \! V^{\Omega^2g}=\Omega^{d} \ext \! V^g$. 
Hence, we may define the conformally invariant integral over~$\bm f$~by
$$
\int_D\bm f \, :=\,  \int_D \ext \! V^g \, f\, .
$$
Similarly, for hypersurface conformal invariants, the induced metric $\bar g_{ab}=I_{ab}$ defines an ``area'' element $\ext \! A^{\bar g}$ ({\it i.e.}  the volume form of $\bar g$ along the hypersurface~$\Sigma$). From this we may build the  
weight $d-1$ density $\bm{\ext \! A}:=[\hspace{.5mm}\bar g\, ;\, \ext \! A^{\bar g}]$. Thus, for any weight $1-d$, scalar,  conformal hypersurface invariant $\bm P:=[\hh g\, ;\, P(g,\Sigma)]$  we define
$$
\int_\Sigma \bm P\, :=\, \int_\Sigma \ext \! A^{\bar g} \, P\, .  
$$ 

\subsection{The Dirac-delta density}\label{Dirac}

We now describe of the main ideas
of our approach:
We will employ the Dirac delta function to express hypersurface integrals as bulk integrals.
  
Given a defining function~$s$ for a hypersurface~$\Sigma$ 
and 
$\bar f := f|_\Sigma$ with $f\in C^\infty M$, we may then  rewrite the integral of $\bar f$ as a bulk integral according to (see, for example~\cite{GGHW15} or~\cite{Osher})
\begin{equation}
\label{deltasurface}
\int_\Sigma \ext \! A^{\bar g}\,  \bar f \, = \, 
\int_{\, \widetilde{\!D}} \ext \! V^g\, \delta(s)\,  |\nabla s|\,  f\, ,\end{equation}
where $\, \widetilde{\!D} \supset  {\rm supp}(\bar f)\subset \Sigma$ is some region in $M$ that includes the support of $\bar f$. 

Given a metric $g$ the function $f$  determines a weight $1-d$ density $[\hh g\, ;\, f]=:\bm f$, and the above display can be expressed as an integral over densities. This is particularly important for us when the hypersurface is given in terms of a defining density~$\bm \sigma=[\hh g\, ;\, \sigma]$. 
Then we may use the
the distributional identity (valid for non-vanishing $\Omega$; see Section~\ref{distributions} below)
$$
\delta(\Omega\sigma)=\Omega^{-1}\delta(\sigma)
$$
to infer that
$$
\bm\delta:=[\hh g\, ;\, \delta(\sigma)]
$$
is a weight $w=-1$ (distribution-valued) density. Since, in a scale $g_{ab}$, we have that~$\sigma$ is a defining function, it follows that  the $\bm {\mathcal S}$-curvature of~$\bm \sigma$ 
obeys
$$
{\mathcal S} = |\nabla \sigma|^2\quad \mbox{ along }\Sigma\, .
$$
Hence
\begin{equation}\label{deltaintegral}
\int_{\, \widetilde{\!D}} \bm \delta \, \sqrt{\bm{\mathcal S}} \, \bm f
= \int_{\, \widetilde{\!D}} \ext \! V^g \, \delta(\sigma)\, \sqrt{\mathcal S}\, f = \int_\Sigma \bm {\bar f}\, , 
\end{equation}
where $\bm{\bar f}=\big[\bar g_{ab}\, ;\, f|_\Sigma\big]$. We will often drop the bar notation when using this formula.
This identity allows efficient handling of integrated conformal hypersurface invariants. Note that this does not require using an  extension $\bm f$ of $\bm{\bar f}$ which is a  hypersurface preinvariant, but for variational problems it will be useful to do~so. 

\subsection{Distributional identities}\label{distributions}

Standard distributional identities (on ${\mathbb R}$) for the Dirac delta and Heaviside step function such as
$$
\theta'(x)=\delta(x)\, ,\quad\!
x\delta(x)=0\, ,\quad\!
x\delta'(x)=-\delta(x)\ \mbox{ and }\ 
x\delta^{(n)}(x)=-n\delta^{(n-1)}(x)\, ,\!\quad n\in {\mathbb Z}_{\geq 1}\, , 
$$
and their consequences will play a crucial role in our derivation of volume anomalies and divergences. Such identities hold when integrating against suitable test functions. Some care is required to justify their use, but the  details are essentially the same in each case.
Therefore we explain the key ideas here and suppress the details when presenting the computations below.

We wish to apply distributional identities  to the situation where the variable~$x\in {\mathbb R}$ is replaced by a defining function~$\sigma$ for a  hypersurface $\Sigma$ embedded in a manifold $M$; in particular we will be dealing with the distribution $\delta(\sigma)$ and derivatives thereof.
In our computations we assume that the hypersurface~$\Sigma$ is closed (compact without boundary) and that  in a neighborhood of~$\Sigma$ the bulk manifold~$M$ is a product $\Sigma\times I\subset M$
 where $I$ is some small open interval about $0$. Moreover, we 
 assume that the defining function $\sigma$ pulls back to the standard coordinate $x$ on $I$.
 In particular, in what follows, we assume that bulk integrals are over regions contained in $\Sigma\times I$ and so can be treated by Fubini's theorem.
 
 Then to treat  distributional computations in detail, 
 we introduce  a fixed, smooth,  cutoff function
 $\chi$ taking the value 1 on 
 the neighborhood $\Sigma\times I'$, for some open  interval~$I'\subset I$.
 Thus, integrals involving the distributions $\theta(\sigma)$ or $\delta(\sigma)$ and their derivatives are  defined by the expressions given below but with the  
 insertion of  the test  function $\chi$.
 It is then easily verified that these integrals have their intended meaning and we leave the details of
 the distributional calculations to the reader.
 
The results we obtain this way are local terms integrated along the hypersurface~$\Sigma$. Hence, they apply
beyond the situation where~$\Sigma$ 
is closed, to more general settings as depicted in Diagram~\nn{iamnotclosed} and applied in the  example computation  given in Section~\ref{Kasner}.

\section{Renormalized volume}\label{RV}

\subsection{Conformal infinity}

Let~$(M,{\bm c},\Sigma)$ denote a conformal manifold $(M,\bm c)$ equipped with an 
 embedded, oriented, hypersurface or boundary component~$\Sigma$.
 Given this data and  some choice of defining density~${\bm \sigma}$ for~$\Sigma$ (see Section~\ref{DD}),  then  on the manifold~$\widehat{ \!M}:=M\backslash \Sigma$ we may  extract a canonical metric $g^o$ such that on one side of $\Sigma$
$$
{\bm \sigma}=[\hh g^o\, ;\, 1]\, .
$$
The metric $g^o$ is  then singular along~$\Sigma$ and the hypersurface~$\Sigma$ is a {conformal infinity} of~$g^o$. 
 The metric $g^o$ may be used to compute volumes of bounded domains $\, \widehat{\! D}\, \subset \,\widehat{\! M}$ via
$$
\Vol(\, \widehat{\! D};{\bm \sigma})=\int_{\, \widehat{\!D}} \ext \! V^{g^o}\, ,
$$
where $ \ext \! V^{g^o}$ is the volume form of the metric $g^o$. Rewriting the above display in terms of a general equivalence class representative $[\hh g\, ;\, \sigma]$ we have
\begin{equation}\label{VolDhat}
\Vol(\, \widehat{\! D};{\bm \sigma})=\int_{\, \widehat{\!D}} \, \frac{\ext \! V^{g}}{\sigma^d}\ =\ 
\int_{\, \widehat{\!D}} \, \frac{1}{\bm \sigma^d}
 \, ,
 \end{equation}
which at the same time manifests the conformal invariance  of $\Vol(\, \widehat{\! D}\, ;\hh {\bm \sigma})$ (as a functional of $(\bm c,\bm \sigma)$) while emphasizing that it would be singular for regions intersecting the  hypersurface~$\Sigma$.

\subsection{The regulated volume}
\label{regvol_sect}

We now wish to study bounded regions $D$  for which the
intersection  $\partial D\cap \Sigma$ is non-vanishing
and admits a finite collar neighborhood contained in~$D$,
as depicted in the first diagram below.
In that case the analog of the expression~\nn{VolDhat} is divergent. 
  Therefore, working on the side of~$\Sigma$ where $\sigma$ is positive, we regulate this expression by inserting a cut-off 
$$
\theta(\sigma/\tau-\varepsilon)\, ,
$$
where $\theta:{\mathbb R}\to\{0,1\}$ is the Heaviside step function (with support  ${\mathbb R}_{\geq 0}$) and $\bm\tau=[\hh g\, ;\, \tau]$ is any true scale. The freedom to choose different regulators is captured 
by the choice of the true scale $\bm\tau$. Given $\bm \tau$, we define the corresponding  {\it regulated volume}~$\Vol_\varepsilon$ by
\begin{equation}\label{vregd}
\Vol_\varepsilon(D,\Sigma):=
\int_D\, \frac{\bm \theta_\varepsilon}{\bm {\sigma}^d}=
\int_D \ext \! V^g\, \frac{ \theta(\sigma/\tau-\varepsilon)}{{\sigma}^d}\, .
\end{equation}
Here we have used the weight~0 density  $\bm \theta_\varepsilon:=[\hh g\, ;\, \theta(\sigma/\tau-\varepsilon)]$.
By construction this definition agrees with Definition~\ref{def1} with $\Sigma_\varepsilon$ determined by the zero locus of the function $\bm \sigma/\bm \tau-\varepsilon$.
The above integral computes the volume of the darker shaded region $D_{\varepsilon}$ depicted in the second diagram displayed below:
\begin{equation}\label{iamnotclosed}\end{equation}
\vspace{-2.5cm}
\begin{center}
\includegraphics[scale=.24]{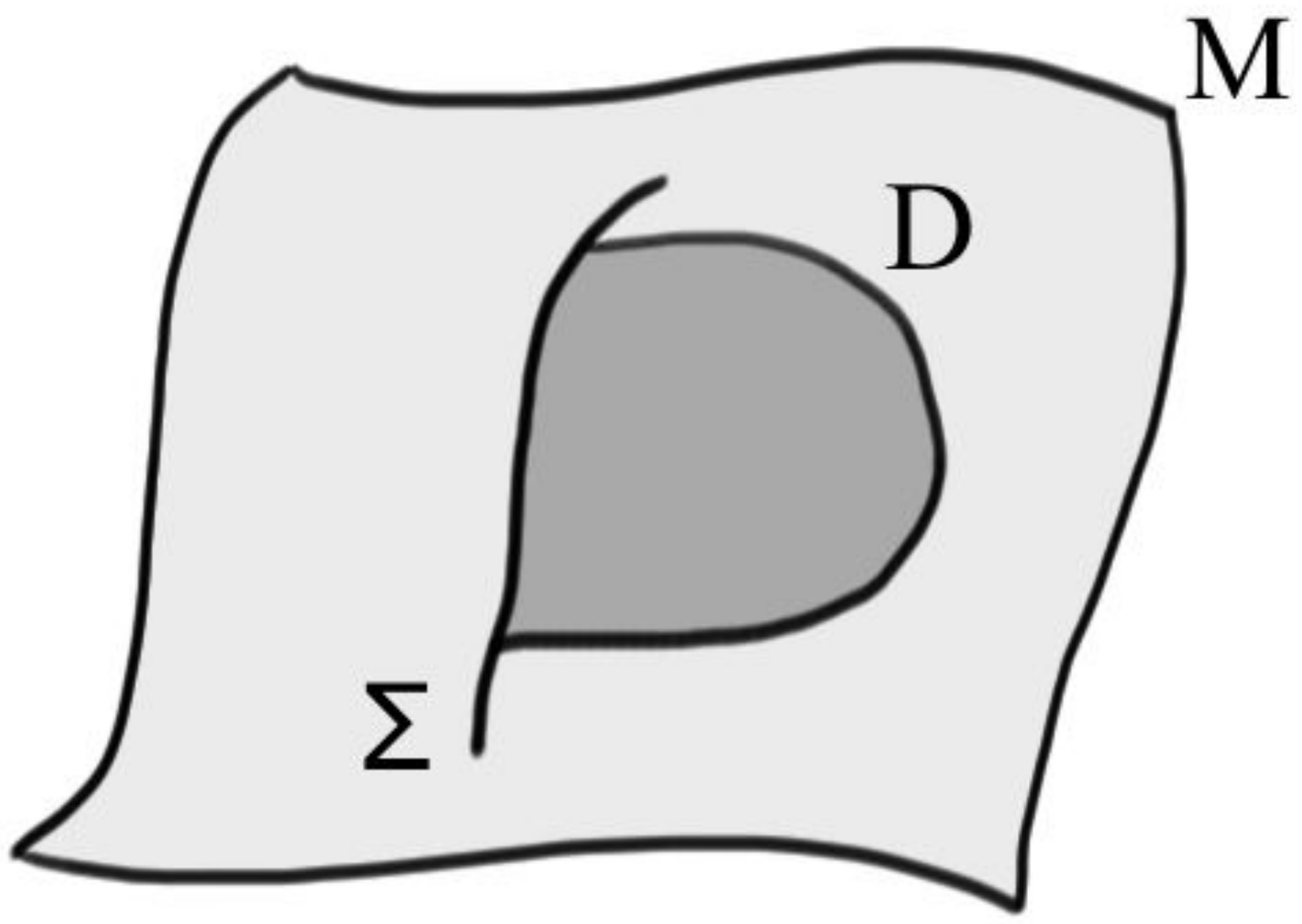}
\qquad
\includegraphics[scale=.24]{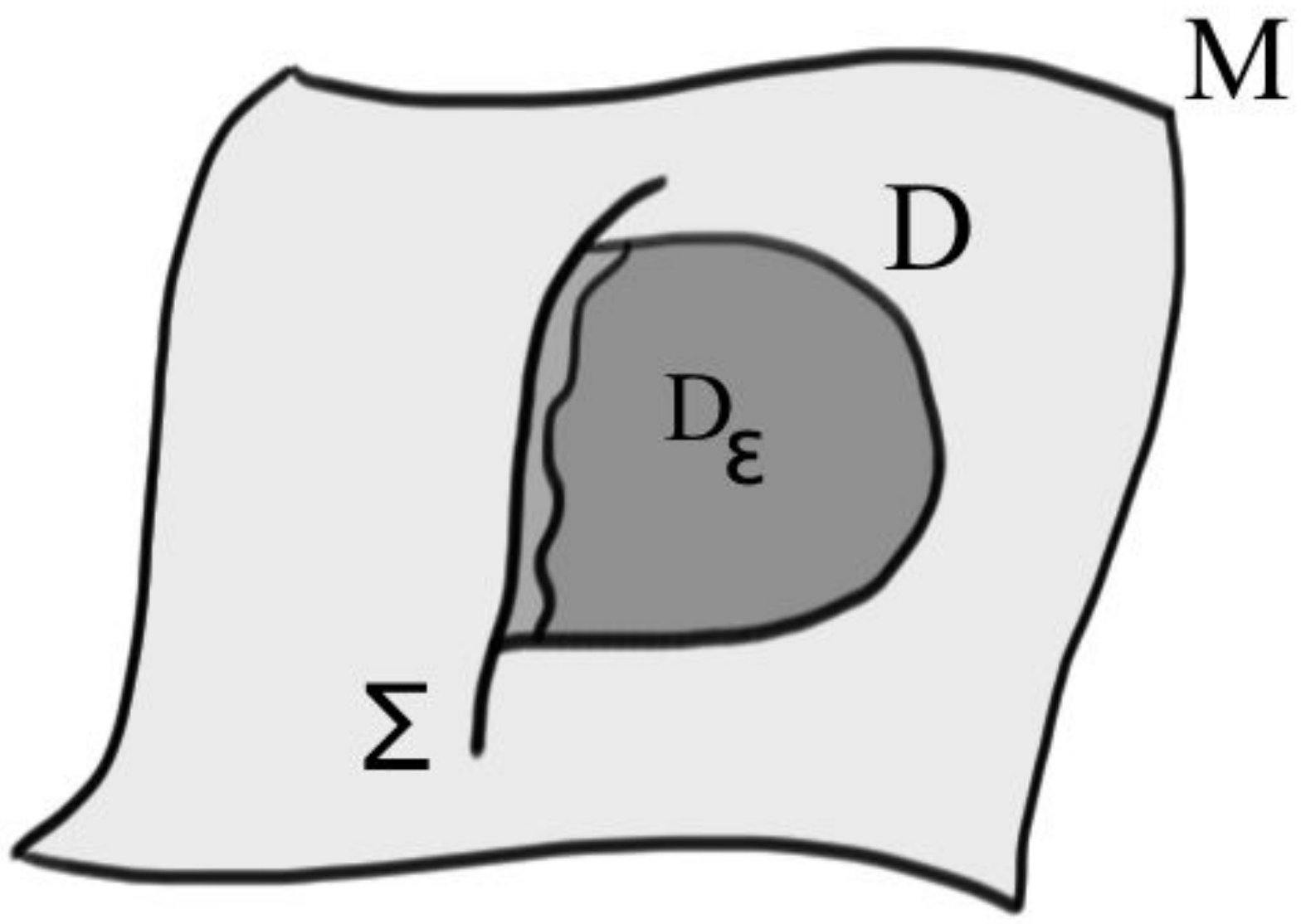}
\end{center}
\vspace{-1cm}

A technical remark will be important when dealing with surface terms  in Section~\ref{HolFor}: The regulated volume is unchanged if we extend the region of integration $D$ beyond the hypersurface $\Sigma$ to a new, compact, region~$\, \widetilde{\!D}$ as depicted below. We assume this is always possible; for the case $\Sigma=\partial M$ we choose an extension to enable this.
\vspace{-2.4cm}
\begin{center}
\includegraphics[scale=.32]{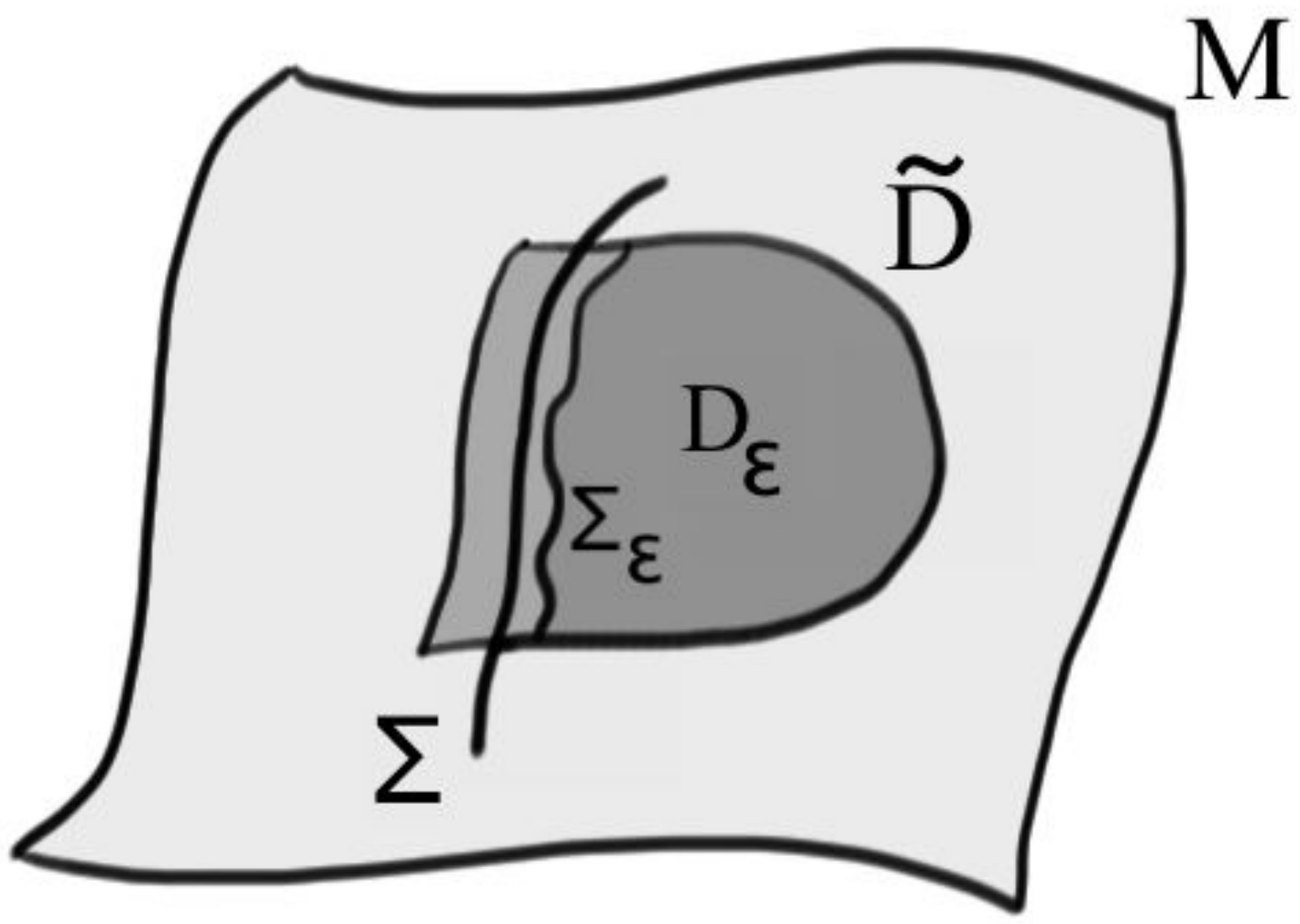}
\end{center}
\vspace{-2.3cm}
 Since we are ultimately interested in the dependence of the regulated volume on the hypersurface embedding, in the following we will  write  $\Sigma$ for the intersection~$\Sigma\cap D$. 
Alternatively, one can consider the conformally compact setting common in applications where $\Sigma=\partial M$ and $D=M$. In the case where $M$ has a puct structure and $\Sigma$ is compact as discussed in Section~\ref{distributions}, the last diagram above is replaced by:

\vspace{-5.0cm}
\begin{center}
\includegraphics[scale=.53]{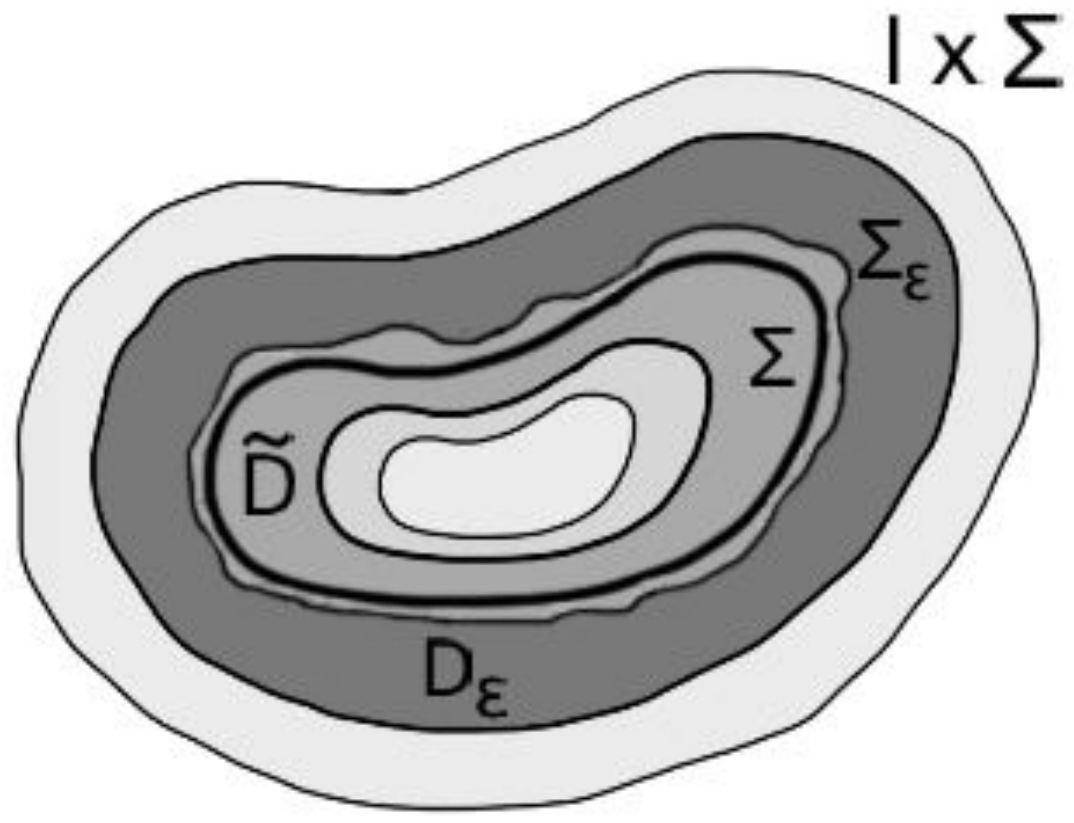}
\end{center}
\vspace{-4.9cm}

 \noindent
 In all cases, the
 regulated volume is given by
\begin{equation}\label{regvol}
\Vol_\varepsilon(D,\Sigma):=
\int_{\, \widetilde{\!D}}\, \frac{\bm \theta_\varepsilon}{\bm{\sigma}^d}\, \ .
\end{equation}

\smallskip

\subsection{The $\varepsilon$ expansion}

Our strategy will be to show that  the  regulated volume is a Laurent series plus a logarithm in~$\varepsilon$. Except for the 
constant term, the coefficient of each term will be  a hypersurface integral over $\Sigma$.
For our purposes the standard  distributional identity 
$$
\frac{d\theta(\sigma/\tau-\varepsilon)}{d\varepsilon}=-\delta(\sigma/\tau-\varepsilon)
$$
is key to studying the analyticity properties of the regulated volume $\Vol_\varepsilon$ as a function of $\varepsilon$. By the meaning of this identity this implies
$$
\frac{d\Vol_{\varepsilon}}{d\varepsilon}=-\int_{\, \widetilde{\!D}} \frac{\ext \! V^g}{\sigma^d}\, \delta(\sigma/\tau-\varepsilon)=-\, \varepsilon^{-d}\int_{\, \widetilde{\!D}} \frac{\ext \! V^g}{\tau^d}\, \delta(\sigma/\tau-\varepsilon)\, .
$$
We now need to analyze the integral $$I(\varepsilon):=\int_{\, \widetilde{\!D}} \frac{\ext \! V^g}{\tau^d}\, \delta( \sigma/\tau-\varepsilon)\, .$$ To that end, consider the function $ s:=\sigma/\tau$. Since $f=\tau^{-d}/|\nabla  s|$ is  smooth in a neighborhood including~$\Sigma$, we may rewrite this expression as a hypersurface integral by employing the delta function 
identity~\nn{deltasurface}:
$$I(\varepsilon)=\int_{\Sigma_{\varepsilon}} \ext \! A^{\bar g_\varepsilon}\,  (\tau^{-d}/|\nabla  s|)|_{\Sigma_\varepsilon}\, .$$
 We have assumed $D$ such that $\Sigma_{\varepsilon}$ is bounded. Since all functions in the integral are smooth,  the hypersurface integral $I(\varepsilon)$ depends smoothly on $\varepsilon$ and, for small enough $\varepsilon>0$, 
 may be written as a Taylor series with error term.
Hence it follows that the regulated volume is the sum of Laurent series terms about $\varepsilon=0$, plus a $\log$ term:
\begin{equation}\label{vs}
\Vol_\varepsilon=\sum_{\scalebox{.7}{$\scriptstyle k\, \in\, \big\{d-1,\ldots, 1
\big\}$}}\frac{v_k}{\varepsilon^k}\  +\ \Vol_{\rm ren}\  + \   {\mathcal A} \, \log\varepsilon+\varepsilon \, \mathcal{R}(\varepsilon)\, ,
\end{equation}
where $\mathcal{R}(\varepsilon)$ is smooth.
The $\varepsilon$ independent part of this series $\Vol_{\rm ren}=:v_0$ defines  the renormalized volume and the $\log \varepsilon$ coefficient~${\mathcal A}$ is the anomaly. Computing ${\mathcal A}$ in full generality and understanding its link to 
extrinsically coupled $Q$-curvatures
is a main goal of our work.

\subsection{Expansion coefficients}
\label{expcoeffs}

To extract the anomaly we 
employ the formula
$$
{\mathcal A}=
\frac{1}{(d-1)!}\, 
\frac{d^{d-1}}{d\varepsilon^{d-1}}
\Big(\varepsilon^d\, \frac{d \Vol_\varepsilon}{d\varepsilon}
\Big)\Big|_{\varepsilon=0}
=-\frac{1}{(d-1)!}\, 
\frac{d^{d-1}I(\varepsilon)}{d\varepsilon^{d-1}} \Big|_{\varepsilon=0}\, ,
$$
where 
$$
I(\varepsilon)=\int_{\, \widetilde{\!D}}\frac{\ext \! V^g}{\tau^{d-1}}\, \delta(\sigma-\varepsilon\tau) \, .
$$
This gives a  simple formula for the anomaly
\begin{equation}\label{preQ}
{\mathcal A}=\frac{(-1)^d}{(d-1)!}\, \int_{\, \widetilde{\!D}}
\, \bm \delta^{(d-1)}\, .
\end{equation}
Here $\delta^{(k)}(x):=\frac{d^k\delta(x)}{dx^k}$ and $$\bm \delta^{(k)}:=[\hh g\, ;\, \delta^{(k)}(\sigma)]$$ is a weight $-k-1$ distribution-valued density. Importantly, Equation~\nn{preQ} shows that the anomaly~${\mathcal A}$ is independent of the choice of regulating scale~$\bm \tau$.

It is also not difficult to generate similar formul\ae\ for the coefficients $v_{k\neq 0}$ (and $k\leq d-1$) by noting 
$
v_{k\neq 0}=\frac{1}{(d-1-k)!\, k}
\frac{d^{d-1-k}I(\varepsilon)}{d\varepsilon^{d-1-k}}\Big|_{\varepsilon=0}
$
so that
\begin{equation}\label{coeffs}
v_{k\neq 0}=\frac{(-1)^{d-k-1}}{(d-1-k)!\, k}\, \int_{\, \widetilde{\!D}} \frac{\bm \delta^{(d-1-k)}}{\bm \tau^k}
\, .
\end{equation}
As expected, these coefficients do depend on the regulating scale~$\bm \tau$. We gather together the results  established above in the following theorem:

\begin{theorem}\label{distdivsan}
The regulated volume $\Vol_\varepsilon:=\Vol_\varepsilon(D,\Sigma)$ as defined in Equation~\nn{regvol}
depends on $\varepsilon$ according to
\begin{equation*}
\begin{split}
\Vol_\varepsilon & \ =\ 
\frac1{d-1}\frac1{\varepsilon^{d-1}}\int_{\, \widetilde{\!D}}
\frac{\bm \delta}{{\bm \tau}^{d-1}}
\ -\ \frac{1}{d-2}\frac1{\varepsilon^{d-2}}
\int_{\, \widetilde{\!D}}
\frac{\bm \delta'}{{\bm \tau}^{d-2}}
\ +\ 
\frac{1}{2(d-3)}\frac1{\varepsilon^{d-3}}
\int_{\, \widetilde{\!D}}
\frac{\bm \delta''}{{\bm \tau}^{d-3}}\ +\ 
\\[2mm]
&\ \  \cdots \ +\ 
\frac{\ (-1)^{d-2}\ }{(d-2)!}\frac1{\varepsilon}
\int_{\, \widetilde{\!D}}
\frac{\bm \delta^{(d-2)}}{{\bm \tau}}
\ + \
\frac{(-1)^d}{(d-1)!}\,\log\varepsilon\  \int_{\, \widetilde{\!D}}
\, \bm \delta^{(d-1)}
\ +\ \Vol_{\rm ren}\ +\  \varepsilon\,  {\mathcal R}\, (\varepsilon)\, ,
\end{split}
\end{equation*}
where the renormalized volume $\Vol_{\rm ren}$ is independent of~$\varepsilon$ and  ${\mathcal R}(\varepsilon)$ is smooth.

\end{theorem}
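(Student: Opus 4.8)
The plan is to establish the $\varepsilon$-expansion of $\Vol_\varepsilon$ by first differentiating in $\varepsilon$, identifying the result as a smooth function of $\varepsilon$, Taylor-expanding that smooth function, and then integrating back up to recover the full Laurent-plus-log structure. The key observation, already laid out in the excerpt, is that the distributional identity $\frac{d}{d\varepsilon}\theta(\sigma/\tau-\varepsilon)=-\delta(\sigma/\tau-\varepsilon)$ converts the $\varepsilon$-derivative of the regulated volume into an integral against a Dirac delta. First I would write
\begin{equation*}
\frac{d\Vol_\varepsilon}{d\varepsilon}=-\int_{\, \widetilde{\!D}}\frac{\ext\! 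V^g}{\sigma^d}\,\delta(\sigma/\tau-\varepsilon)=-\varepsilon^{-d}\, I(\varepsilon)\, ,
\end{equation*}
where, by homogeneity of the delta function and the substitution $s:=\sigma/\tau$, one has $I(\varepsilon)=\int_{\, \widetilde{\!D}}\tau^{-d}\,\delta(s-\varepsilon)\,\ext\! V^g$. Using the surface identity~\nn{deltasurface}, this bulk integral is rewritten as a genuine hypersurface integral $I(\varepsilon)=\int_{\Sigma_\varepsilon}\ext\! A^{\bar g_\varepsilon}\,(\tau^{-d}/|\nabla s|)|_{\Sigma_\varepsilon}$ over the regulating surface $\Sigma_\varepsilon=\{s=\varepsilon\}$.

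The crucial analytic input is then that $I(\varepsilon)$ is a \emph{smooth} function of $\varepsilon$ near $\varepsilon=0$. This follows because $\Sigma_\varepsilon$ is a smooth one-parameter family of bounded hypersurfaces (by the hypotheses on $D$ and on the collar product structure), and the integrand $\tau^{-d}/|\nabla s|$ is smooth in a neighborhood of $\Sigma$; Fubini's theorem in the collar $\Sigma\times I$ reduces the integral to one depending smoothly on the slicing parameter. Consequently I may write a finite Taylor expansion with remainder,
\begin{equation*}
I(\varepsilon)=\sum_{j=0}^{d-1}\frac{I^{(j)}(0)}{j!}\,\varepsilon^j+\mathcal O(\varepsilon^d)\, ,
\end{equation*}
and each Taylor coefficient $I^{(j)}(0)$ is precisely a hypersurface integral that, upon unwinding the delta-function identity, equals $\int_{\, \widetilde{\!D}}\bm\delta^{(j)}/\bm\tau^{\,d-1}$ type expressions evaluated through the formulas~\nn{coeffs} and~\nn{preQ} for the coefficients.

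The final step is to integrate $-\varepsilon^{-d}I(\varepsilon)$ term by term. Multiplying the Taylor series by $-\varepsilon^{-d}$ produces monomials $-\frac{I^{(j)}(0)}{j!}\,\varepsilon^{j-d}$ for $0\le j\le d-1$, plus a smooth $\mathcal O(1)$ tail from the remainder. Antidifferentiating: each power $\varepsilon^{j-d}$ with $j\neq d-1$ integrates to a multiple of $\varepsilon^{j-d+1}$, contributing the pole terms $v_k/\varepsilon^k$ with $k=d-1-j\in\{1,\dots,d-1\}$; the single resonant term $j=d-1$, i.e.\ the power $\varepsilon^{-1}$, integrates to the logarithm $\mathcal A\log\varepsilon$, which is exactly the origin of the anomaly and why~\nn{preQ} carries the factor $\frac{1}{(d-1)!}$ from $I^{(d-1)}(0)/(d-1)!$; the smooth tail integrates to $\Vol_{\rm ren}+\varepsilon\,\mathcal R(\varepsilon)$, with the $\varepsilon$-independent constant of integration being the renormalized volume. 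Matching the numerical coefficients against~\nn{coeffs} and~\nn{preQ} then yields the displayed formula verbatim.

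I expect the main obstacle to be the rigorous justification of smoothness of $I(\varepsilon)$ and the attendant interchange of differentiation, integration, and the distributional manipulations, rather than the bookkeeping of coefficients. Concretely, one must verify that the distributional identities for $\delta(\sigma/\tau-\varepsilon)$ and its $\varepsilon$-derivatives are applied against legitimate test functions; the excerpt handles this via the fixed cutoff $\chi\equiv 1$ near $\Sigma$ and the product structure $\Sigma\times I$, so that all integrals reduce by Fubini to integrals over $I$ in the coordinate $x$ pulled back from $\sigma$. Granting that framework, the smoothness of $I(\varepsilon)$ and the legitimacy of the Taylor-with-remainder expansion are standard, and the remaining computation is the elementary antidifferentiation above. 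The only genuinely delicate point is the appearance of the logarithm precisely at the resonant order $k=d-1$, which I would track carefully to confirm it matches the regulator-independent expression~\nn{preQ}.
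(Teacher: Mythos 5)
Your proposal is correct and follows essentially the same route as the paper's own derivation in Sections 3.3--3.4: differentiating in $\varepsilon$ via $\theta'=\delta$, recognizing $I(\varepsilon)$ as a smooth hypersurface integral over $\Sigma_\varepsilon$ (justified by the collar product structure and the cutoff $\chi$), Taylor-expanding with remainder, and antidifferentiating $-\varepsilon^{-d}I(\varepsilon)$ so that the resonant $j=d-1$ term yields the $\log\varepsilon$ anomaly while the other powers give the poles and the smooth tail gives $\Vol_{\rm ren}+\varepsilon\,\mathcal R(\varepsilon)$. One small bookkeeping slip worth fixing: $I^{(j)}(0)=(-1)^j\int_{\,\widetilde{\!D}}\bm\delta^{(j)}/\bm\tau^{\,d-1-j}$, not $\bm\delta^{(j)}/\bm\tau^{\,d-1}$, which is exactly what produces the varying powers of $\bm\tau$ in the displayed expansion.
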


\subsection{Holographic formul\ae}\label{HolFor}


The following technical result for powers of
the Laplace--Robin operator acting 
on $\bm \delta$ is the key tool for generating  a holographic formula for the anomaly ${\mathcal A}$.

\begin{proposition}\label{delta_deriv}
Let ${\mathbb Z}_{>0}\ni j\leq d-1$ and suppose the $\bm{\mathcal S}$-curvature is nowhere vanishing. Then 
$$
(\I2^{-1}\!\D)^j \, \bm \delta= (d-j-1)\cdots(d-3)(d-2)\, \bm \delta^{(j)}\, .
$$
\end{proposition}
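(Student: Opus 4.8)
The plan is to reduce everything to a single-step recursion and then iterate. Concretely, I would first prove that for every integer $k\ge 0$
\[
\I2^{-1}\D\,\bm\delta^{(k)}\ =\ (d-k-2)\,\bm\delta^{(k+1)}\,.
\]
Granting this, the proposition follows by an immediate induction on $j$: each application of $\I2^{-1}\D$ advances the order of the delta distribution by one and contributes a factor $(d-k-2)$ with $k$ running from $0$ to $j-1$, assembling the product $(d-2)(d-3)\cdots(d-j-1)=(d-j-1)\cdots(d-3)(d-2)$. This also accounts for the range $j\le d-1$: the factor $(d-k-2)$ vanishes at $k=d-2$, so the chain terminates and $(\I2^{-1}\D)^{d-1}\bm\delta=0$, consistently with the product containing a zero factor at $j=d-1$.

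To establish the recursion I would work in a scale $g_{ab}$ and apply the Laplace--Robin formula~\nn{Ldef} to the weight $w=-k-1$ density $\bm\delta^{(k)}=[\hh g\,;\,\delta^{(k)}(\sigma)]$, so that the prefactor is $d+2w-2=d-2k-4$. The only analytic inputs are the distributional chain-rule identities $\nabla_n\delta^{(k)}(\sigma)=n^2\,\delta^{(k+1)}(\sigma)$ and $\Delta\,\delta^{(k)}(\sigma)=(\Delta\sigma)\,\delta^{(k+1)}(\sigma)+n^2\,\delta^{(k+2)}(\sigma)$, together with the elementary identities $\sigma\,\delta^{(n)}(\sigma)=-n\,\delta^{(n-1)}(\sigma)$; the rigorous meaning of all of these is furnished by the collar/cutoff setup of Section~\ref{distributions}, which I would invoke rather than repeat. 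Substituting, replacing $\Delta\sigma$ by $-d\rho-\J\sigma$ via the definition of $\rho$ in Lemma~\ref{Hlemma}, and using $\sigma\,\delta^{(k)}=-k\,\delta^{(k-1)}$ on the subleading piece, collects the output into terms proportional to $\delta^{(k+1)}$, $\delta^{(k)}$ and $\delta^{(k-1)}$.

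The decisive point is the bookkeeping of these three groups, and this is the step I expect to be the real (if purely computational) obstacle. The $\delta^{(k+1)}$ coefficient is $(d-k-2)n^2$ and the $\delta^{(k)}$ coefficient is $-2(k+1)(d-k-2)\rho$, while the $\delta^{(k-1)}$ coefficient is proportional to $k\,\J\,(w+k+1)$, which vanishes precisely because $w=-k-1$ for $\bm\delta^{(k)}$. Thus the potentially dangerous lowest-order term cancels automatically, a cancellation forced by the conformal weight of the delta distribution. What survives is $(d-k-2)\big(n^2\,\delta^{(k+1)}-2(k+1)\rho\,\delta^{(k)}\big)$, and since $\sigma\,\delta^{(k+1)}=-(k+1)\delta^{(k)}$ this equals $(d-k-2)(n^2+2\rho\sigma)\,\delta^{(k+1)}=(d-k-2)\,{\mathcal S}\,\delta^{(k+1)}$, using the ${\bm {\mathcal S}}$-curvature expression of Equation~\nn{Scurvy}. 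Hence $\D\,\bm\delta^{(k)}=(d-k-2)\,\I2\,\bm\delta^{(k+1)}$, and dividing by the nowhere-vanishing ${\bm {\mathcal S}}$ (legitimate by hypothesis) gives the recursion. Getting the weight-dependent constants right so that the $\delta^{(k-1)}$ terms visibly cancel and the remainder reassembles into ${\mathcal S}$ is where care is needed.

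Finally, as a conceptual cross-check I would note the $\mathfrak{sl}(2)$ interpretation. Since $\s\,\bm\delta=\sigma\,\delta(\sigma)=0$ and $\w\,\bm\delta=-\bm\delta$, the density $\bm\delta$ is a highest-weight vector for the solution generating algebra~\nn{solgen}, annihilated by $\s$ and with $(d+2\w)$-eigenvalue $d-2$; the operator $y=-\I2^{-1}\D$ descends it through the associated $(d-1)$-dimensional module, consistently with $(\I2^{-1}\D)^{d-1}\bm\delta=0$. Granting the proportionality $\I2^{-1}\D\,\bm\delta^{(k)}\propto\bm\delta^{(k+1)}$ obtained above, the enveloping-algebra identity $[y^k,\s]=-k\,y^{k-1}(h-k+1)$ with $h=d+2\w$, applied to $\bm\delta$ and combined with $\s\,\bm\delta^{(k)}=-k\,\bm\delta^{(k-1)}$, pins the descent coefficients by a one-term recursion and recovers the same product, providing an independent confirmation of the computation.
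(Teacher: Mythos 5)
Your proposal is correct and follows essentially the paper's own route: the paper proves precisely your one-step recursion, $\I2^{-1}\!\D\,\bm \delta^{(j-1)}=(d-j-1)\,\bm \delta^{(j)}$, by the same scale computation (using $x\,\delta^{(n)}(x)=-n\,\delta^{(n-1)}(x)$, $\nabla_n\delta^{(k)}=({\mathcal S}-2\rho\sigma)\,\delta^{(k+1)}$ and $\Delta\sigma=-d\rho-\J\,\sigma$, with the $\delta^{(k-1)}$ terms cancelling), and then iterates by induction exactly as you do. Your concluding $\mathfrak{sl}(2)$ cross-check via $[y^k,x]=-k\,y^{k-1}(h-k+1)$ is a sound independent confirmation of the coefficients but is not part of the paper's proof of this proposition, though it mirrors the enveloping-algebra identity the paper uses elsewhere to construct the tangential operators.
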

\begin{proof}
The proof is by induction. Consider first the base case $j=1$. We choose some scale $g$ and then compute
$$
\D \delta( \sigma)=(d-4)(\nabla_n-\rho)\delta(\sigma)-\sigma(\Delta - \J\,  )\delta(\sigma)\, .
$$
Using the distributional identities (see Section~\ref{distributions}) and chain rule  we have
$$x\, \delta(x)=0\, ,\quad x\, \delta'(x)=-\delta(x)\, ,$$ and   $\nabla_n\delta(\sigma)=n^a (\nabla_a \sigma) \delta'(\sigma)=n^2 \delta'(\sigma) = ({\mathcal S}-2\rho\sigma)\delta'(\sigma)$, so that (suppressing the~$\sigma$ dependence of the delta functions)
$$
\D\delta=(d-4){\mathcal S}\, \delta'+(d-4)\rho\, \delta+[\Delta,\sigma]\, \delta\, .
$$
But $[\Delta,\sigma]=2\nabla_n+(\nabla.n)$ and $\nabla.n=-d\rho-\J\sigma$ so 
$
\D\delta=(d-2){\mathcal S}\, \delta'
$ whence
$$
\I2^{-1}\!\D\, \bm\delta = (d-2)\, \bm \delta'\, .
$$ 
\vspace{-1mm}

For the induction step we use the further identity 
\begin{equation}\label{xdj}
x\, \delta^{(j)}(x)=-j\, \delta^{(j-1)}(x)\, ,\quad j\in {\mathbb Z}_{\geq 1}\, ,
\end{equation} 
to compute (again in some choice of scale)
\begin{eqnarray*}
\D\delta^{(j-1)}&=&
(d-2j-2)(\nabla_n-j\rho)\, \delta^{(j-1)}-\sigma(\Delta - j\J\, )\, \delta^{(j-1)}\\[1mm]
&=&
(d-2j)\big(\nabla_n-(j+1)\rho\big)\delta^{(j-1)}+(j-1)\big(\Delta-(j-1)\J\, \big)\delta^{(j-2)}
\, .
\end{eqnarray*}
Now
$\Delta\, \delta^{(j-2)}
=(\nabla.n)\, \delta^{(j-1)}
+n^2\, \delta^{(j)}=\mathcal S\, \delta^{(j)}
-(d-2j)\rho\, \delta^{(j-1)}
+(j-1)\J\, \delta^{(j-2)}
$
and
 $\nabla_n\delta^{(j-1)}=(\mathcal S-2\rho\sigma)\,  \delta^{(j)}=\mathcal S \, \delta^{(j)}
+2j\rho\,\delta^{(j-1)}
$. Thus~$
\D\delta^{(j-1)}=
(d-j-1)\, \mathcal S \, \delta^{(j)}
$, whence
\begin{equation}\label{recursion}
\I2^{-1}\!\D\bm\delta^{(j-1)}=(d-j-1)\bm\delta^{(j)}\, .
\end{equation}
\end{proof}


\noindent
We shall need the following related result:

\begin{proposition}\label{Lfdelta}
Let $\bm f$ be a weight zero density. Then
$$
\D (\bm f\, \bm \delta^{(d-2)}) = (\D \bm f) \bm \delta^{(d-2)}\, .
$$
In particular
$\D \bm \delta^{(d-2)}=0\, .$
\end{proposition}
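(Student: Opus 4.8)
The statement asserts that the Laplace--Robin operator $\D$ commutes past a factor of a weight-zero density $\bm f$ when acting on $\bm\delta^{(d-2)}$, producing no extra terms, and that $\D\bm\delta^{(d-2)}=0$. The plan is to work in an arbitrary scale $g\in\bm c$ and expand $\D(\bm f\,\bm\delta^{(d-2)})$ using the defining formula~\nn{Ldef} for $\D$ acting on a density, applying the Leibniz rule for the derivative terms $\nabla_n$ and $\Delta$. The key feature exploited throughout will be the distributional identity~\nn{xdj}, namely $\sigma\,\delta^{(j)}(\sigma)=-j\,\delta^{(j-1)}(\sigma)$, which converts products of $\sigma$ with derivatives of the delta into lower-order deltas, combined with the chain-rule identities $\nabla_n\delta^{(k)}=\mathcal S\,\delta^{(k+1)}$ modulo lower terms used in Proposition~\ref{delta_deriv}.

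\textbf{Key steps.} First I would note that $\bm\delta^{(d-2)}$ has weight $-(d-1)$, so $\bm f\,\bm\delta^{(d-2)}$ has weight $w=1-d$; this is precisely the weight at which the prefactor $(d+2w-2)=d+2(1-d)-2=-d$ becomes relevant, but more importantly the weight enters the formula~\nn{Ldef} for $\D$. I would write out $\D(\bm f\,\bm\delta^{(d-2)})$ in the chosen scale, distributing $\nabla_n$ and $\Delta$ across the product $f\,\delta^{(d-2)}(\sigma)$ by Leibniz. The terms in which the derivatives land entirely on $f$ reassemble into $(\D\bm f)\,\bm\delta^{(d-2)}$; the remaining terms, where derivatives hit $\delta^{(d-2)}(\sigma)$ or are cross terms such as $(\nabla^a f)(\nabla_a\sigma)\,\delta^{(d-1)}$, must be shown to cancel. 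The cleanest route is to reduce everything using Proposition~\ref{delta_deriv} and its proof: one already knows $\D\bm\delta^{(d-2)}=(\I2^{-1}\D)\bm\delta^{(d-2)}\cdot\mathcal S$ type relations, and in particular the recursion~\nn{recursion} gives $\D\bm\delta^{(j-1)}=(d-j-1)\,\bm{\mathcal S}\,\bm\delta^{(j)}$, so taking $j-1=d-2$, i.e.\ $j=d-1$, yields the factor $(d-(d-1)-1)=0$. This immediately gives the special case $\D\bm\delta^{(d-2)}=0$.

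\textbf{Assembling the result.} For the full Leibniz identity, after establishing $\D\bm\delta^{(d-2)}=0$, the cross terms are the only possible obstruction. I would show that the genuinely mixed contribution coming from $\nabla_n$ acting once on $f$ and once on $\delta^{(d-2)}$, together with the analogous $\Delta$ cross term $2(\nabla^a f)(\nabla_a\delta^{(d-2)})=2(\nabla^a f)(\nabla_a\sigma)\,\delta^{(d-1)}$, is exactly what the vanishing factor $(d-j-1)\big|_{j=d-1}$ annihilates once the $\sigma\,\delta^{(d-1)}=-(d-1)\delta^{(d-2)}$ identity is used to recombine terms. Concretely, collecting the coefficient of $\bm\delta^{(d-1)}$ in $\D(\bm f\,\bm\delta^{(d-2)})-(\D\bm f)\,\bm\delta^{(d-2)}$ should produce a multiple of $(d-1-(d-1))=0$ times $\bm{\mathcal S}\,\bm f\,\bm\delta^{(d-1)}$.

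\textbf{Main obstacle.} The delicate point will be the careful distributional bookkeeping of the cross terms: tracking how each application of $\nabla_n$ or $\Delta$ raises the order of the delta by one while the accompanying factor of $\sigma$ (from $\sigma\Delta$ in~\nn{Ldef}) lowers it again via~\nn{xdj}, and verifying that the surviving coefficient is precisely the vanishing factor $d-(d-1)-1=0$. The weight-counting that fixes $w=1-d$, and hence pins down the numerical coefficients $(d+2w-2)=-d$ and $w\rho$, $w\J$ in~\nn{Ldef}, is what makes the cancellation exact rather than approximate, so I would double-check that substitution first. Once the vanishing factor is isolated, both claims follow simultaneously.
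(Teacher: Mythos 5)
Your proposal is correct and is essentially the paper's own argument: the paper observes that if $\D$ obeyed the Leibniz rule the claim would be immediate from the recursion~\nn{recursion} at $j=d-1$ (where the factor $d-j-1$ vanishes, giving $\D\bm\delta^{(d-2)}=0$), and then verifies by the direct distributional computation of Proposition~\ref{delta_deriv} that the non-Leibniz terms vanish. Your bookkeeping does close as planned — the $(\nabla_n f)$ cross terms cancel via $\sigma\,\delta^{(d-1)}=-(d-1)\,\delta^{(d-2)}$ with total coefficient $-d+2(d-1)-(d-2)=0$ (a separate cancellation from the $d-j-1$ factor, which your coefficient-collecting step would reveal), while the terms with all derivatives on the delta reassemble into $\bm f\,\D\bm\delta^{(d-2)}=0$.
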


\begin{proof}
Were the operator $\D$ to obey the Leibniz rule, the result would be a direct consequence of Equation~\nn{recursion} for $j=d-1$. Thus it suffices to verify that the non-Leibniz terms in $\D (\bm f\, \bm \delta^{(d-2)})$ vanish. This is a straightforward computation that requires only the methods used in the proof of the preceding lemma. 
\end{proof}

\subsubsection{Divergences}

We now apply the above  Proposition~\ref{delta_deriv} and the formal self-adjoint property of the Laplace--Robin operator given in Theorem~\ref{parts}
to translate the
regulated volume expansion coefficients as given in Equation~\nn{coeffs}, into 
explicit, geometric, boundary integrals. Firstly, computing the coefficient of the leading divergence
requires only the integrated delta function identity~\nn{deltaintegral}, which leads to the 
holographic formula 
\begin{equation}\label{leading}
v_{d-1}=\frac{1}{d-1}\, \int_\Sigma \left.\left( \frac{1}{\sqrt{\bm{\mathcal S}}\, \bm \tau^{d-1}}\right)\right|_\Sigma\, .
\end{equation}
For the remaining divergences, 
we use Proposition~\ref{delta_deriv} to rewrite the differentiated delta function densities $\bm \delta^{(j)}$ as powers of the Laplace--Robin operator acting on the undifferentiated delta density~$\bm \delta$ and then integrate these by parts  onto  the power of the  regulator $\bm \tau$ using Theorem~\ref{parts}, and finally perform the delta integration according to Equation~\nn{deltaintegral}. 
At this point we consider the case that~$\Sigma$ is closed. Then, the compactly supported test function~$\chi$ 
introduced in Section~\ref{distributions} ensures that the surface terms 
 generated by the total divergence term $\divergence {\bm j}^a$ of the integration by parts Theorem~\ref{parts}, do not contribute. 
We record the result of this computation in the following proposition:

\begin{proposition}\label{divs}
Let $\Sigma$ be a closed hypersurface. Then the divergences $v_k$  in Equation~\nn{coeffs} are given by
\begin{equation}
\label{subleadingdivas}
v_{k{\sss\in\{d-1,\ldots,1\}}}=\frac{ (k-1)!}{(d-2)!\,(d-k-1)!\, k }\, \int_\Sigma\frac{1
}{\sqrt{\bm{\mathcal S}}}\, \big(\!-\D\I2^{-1}\!\big)^{d-k-1}\, \frac{1}{\bm \tau^k}
\, .\end{equation}
\end{proposition}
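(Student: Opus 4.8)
The plan is to start from the distributional divergence formula for $v_k$ given in Equation~\nn{coeffs}, namely
$$
v_{k}=\frac{(-1)^{d-k-1}}{(d-1-k)!\, k}\, \int_{\, \widetilde{\!D}} \frac{\bm \delta^{(d-1-k)}}{\bm \tau^k}\, ,
$$
and to convert the differentiated delta density $\bm\delta^{(d-1-k)}$ into powers of the Laplace--Robin operator acting on the undifferentiated $\bm\delta$. Setting $j=d-1-k$ in Proposition~\ref{delta_deriv} gives
$$
(\I2^{-1}\!\D)^{\,d-k-1}\, \bm \delta= (k)(k+1)\cdots(d-2)\, \bm \delta^{(d-k-1)}=\frac{(d-2)!}{(k-1)!}\, \bm \delta^{(d-k-1)}\, ,
$$
so that $\bm\delta^{(d-k-1)}=\frac{(k-1)!}{(d-2)!}\,(\I2^{-1}\!\D)^{\,d-k-1}\bm\delta$. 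Substituting this into the formula for $v_k$ immediately produces the numerical prefactor $\frac{(k-1)!}{(d-2)!\,(d-k-1)!\,k}$ that appears in the claimed Equation~\nn{subleadingdivas}, so the combinatorics should fall out cleanly.

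The substantive step is then to move the $d-k-1$ copies of the Laplace--Robin operator off $\bm\delta$ and onto the regulator power $\bm\tau^{-k}$ by repeated integration by parts. Here I would invoke the formal self-adjointness in Theorem~\ref{parts}: each application transfers one $\D$ across the pairing at the cost of a divergence term $\divergence\bm j$, and I must check the weights match so that Theorem~\ref{parts} genuinely applies at every stage (the density being integrated must have total weight $-d$, and the two factors must have the complementary weights $1-d-w$ and $w$ demanded by the theorem — this is guaranteed since $\bm\delta$ has weight $-1$, each $\I2^{-1}\!\D$ lowers weight by one, and $\bm\tau^{-k}$ has weight $-k$). The factors of $\I2^{-1}$ commute past the integration by parts as multiplication operators and simply get reshuffled; tracking their placement is what turns the symmetric string $(\I2^{-1}\!\D)^{d-k-1}$ acting on $\bm\delta$ into the string $(-\D\I2^{-1})^{d-k-1}$ acting on $\bm\tau^{-k}$, with the sign $(-1)^{d-k-1}$ from the integration by parts absorbing the leftover sign in Equation~\nn{coeffs}. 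Finally I apply the integrated delta identity~\nn{deltaintegral}, which replaces $\int_{\widetilde D}\bm\delta\,\sqrt{\bm{\mathcal S}}\,\bm f$ by $\int_\Sigma\bm f|_\Sigma$; inserting $\sqrt{\bm{\mathcal S}}/\sqrt{\bm{\mathcal S}}=1$ at the appropriate moment produces the explicit $1/\sqrt{\bm{\mathcal S}}$ weight in the hypersurface integrand.

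The main obstacle I anticipate is controlling the surface terms $\divergence\bm j$ generated at each integration by parts. This is precisely where the closedness hypothesis on $\Sigma$ enters: I would appeal to the compactly supported cutoff function $\chi$ introduced in Section~\ref{distributions}, which ensures the bulk total-divergence contributions integrate to zero and do not produce boundary-of-$\widetilde D$ corrections. I should verify carefully that every intermediate expression is supported in the product collar $\Sigma\times I$ so that Fubini and the distributional identities of Section~\ref{distributions} are legitimate, and that no $\divergence\bm j$ term survives after the final delta integration. A secondary bookkeeping point is the precise distribution of the noncommuting $\D$ and $\I2^{-1}$ operators across the parts integration; since $\I2^{-1}$ is multiplicative it passes through the pairing freely, and the only subtlety is ensuring the resulting operator ordering is exactly $(-\D\I2^{-1})^{d-k-1}$ rather than some permuted variant. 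Once these points are checked the result follows, giving the stated holographic formula for the divergences $v_k$.
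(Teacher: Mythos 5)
Your proposal is correct and is essentially the paper's own argument: the paper obtains Proposition~\ref{divs} in exactly the way you describe, using Proposition~\ref{delta_deriv} (with $j=d-k-1$, yielding the factor $k(k+1)\cdots(d-2)=(d-2)!/(k-1)!$) to trade $\bm\delta^{(d-k-1)}$ for powers of the Laplace--Robin operator acting on $\bm\delta$, transferring these onto $\bm\tau^{-k}$ via the formally self-adjoint pairing of Theorem~\ref{parts} --- with the surface terms $\divergence\bm j$ discarded precisely because $\Sigma$ is closed and the cutoff~$\chi$ of Section~\ref{distributions} is compactly supported in the collar --- and finally invoking the delta identity~\nn{deltaintegral}, which supplies the $1/\sqrt{\bm{\mathcal S}}$ factor just as you say. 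One small correction to your sign bookkeeping: since Theorem~\ref{parts} makes $\D$ genuinely self-adjoint, each parts integration carries \emph{no} sign, so the $(-1)^{d-k-1}$ absorbed into $\big(\!-\D\,\bm{\mathcal S}^{-1}\big)^{d-k-1}$ comes entirely from Equation~\nn{coeffs} rather than ``from the integration by parts'' --- which is in any case what your final formula correctly encodes.
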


\begin{remark}
In a setting where one is given a distinguished defining density~$\bm{  \sigma}$ smoothly determined  to all orders (for example this not the case for the singular Yamabe problem dealt with in Theorem~\ref{BIGTHE}),
working in a choice of scale, it is   possible to determine the coefficients of finite terms $v_{k\leq-1}$ 
generated by the error term $\varepsilon {\mathcal R}(\varepsilon)$ in Equation~\nn{vs},
in terms of boundary integrals by using the relation $${\mathcal S}\, \delta^{(j)}(\sigma)=(\nabla_n -2j\rho)\, \delta^{(j-1)}(\sigma)$$ to successively remove derivatives from the delta function in Equation~\nn{coeffs}.
\end{remark}

 \begin{remark}
 When the hypersurface~$\Sigma$ has boundary there are surface terms which can be computed using the  
 result quoted in the theorem for the current ${\bm j}^a$. We reserve that computation for a future work.
\end{remark}

\subsubsection{The anomaly}

We now compute the anomaly.
For that, according to Equation~\nn{preQ}, we need to compute $d-1$ derivatives of the delta function $\delta( \sigma)$.
However, Proposition~\ref{delta_deriv}
is no longer of immediate assistance, since this is the critical case
where $$(\I2^{-1} \D)^{d-1}\bm \delta = 0\, .$$ 
The main idea to resolve this problem is to strategically introduce a logarithm of a true scale.
Indeed even though Equation~\nn{preQ} does
not involve the regulating scale $\bm \tau$,  
by reintroducing some true scale~$\bm \tau$ (which need not coincide with the regulating scale, but for efficiency we lose no generality by recycling this quantity, as the final result for ${\mathcal A}$ is independent of any such choice) we can write a holographic formula for the anomaly.
The following lemma is key:

\begin{lemma}\label{lastderiv}
Let~$\Sigma$ be a closed hypersurface and
$\bm \tau$ be a weight one density, and
suppose the $\bm{\mathcal S}$-curvature is nowhere vanishing, then
\begin{equation}\label{logform}
\int_{\, \widetilde{\! D}}\bm{\delta}^{(d-1)}=
-\int_{\,\widetilde{\! D}}
\bm{\delta}^{(d-2)}
\big(\I2^{-1}\circ \D  -
(\bm \nabla_a \bm {\mathcal S}^{-1}) \, {\bm g}^{ab}
  \triangledown^{\scalebox{.7}{$\bm\sigma$}}_b
 \big)  \log \bm \tau
\, .
\end{equation}
\end{lemma}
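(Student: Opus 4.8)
I need to prove the identity
$$
\int_{\widetilde D}\bm\delta^{(d-1)}
=-\int_{\widetilde D}\bm\delta^{(d-2)}
\Big(\I2^{-1}\circ\D-(\bm\nabla_a\bm{\mathcal S}^{-1})\,\bm g^{ab}\,\triangledown^{\scalebox{.7}{$\bm\sigma$}}_b\Big)\log\bm\tau\,,
$$
where the obstruction is that the obvious tool, Proposition~\ref{delta_deriv} at the critical order $j=d-1$, returns $(\I2^{-1}\D)^{d-1}\bm\delta=0$ rather than a usable expression.

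\emph{The plan} is to sneak an extra factor of $\log\bm\tau$ past the top-order operator and then integrate by parts. First I would note that $\bm\delta^{(d-1)}$ is a weight $-d$ density, so it integrates conformally invariantly, and I want to manufacture it from $\bm\delta^{(d-2)}$ (weight $1-d$) by applying a single Laplace--Robin operator. The key trick is that although $(\I2^{-1}\D)^{d-1}\bm\delta=0$, the recursion~\nn{recursion} gives $\I2^{-1}\D\,\bm\delta^{(d-2)}=(d-d+1)\,\bm\delta^{(d-1)}=\ldots$ — more precisely Proposition~\ref{delta_deriv} shows $\D\bm\delta^{(d-2)}$ carries the factor $(d-(d-1)-1)=0$, which is exactly Proposition~\ref{Lfdelta}'s statement $\D\bm\delta^{(d-2)}=0$. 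So $\D$ annihilates $\bm\delta^{(d-2)}$, and the natural move is to write $\bm\delta^{(d-1)}$ as a boundary-integral image of something to which $\D$ can be moved by self-adjointness.

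\emph{The main computation} I would carry out as follows. Start from $\int\bm\delta^{(d-1)}$ and try to realize $\bm\delta^{(d-1)}$ as $\I2^{-1}\D$ applied to $\bm\delta^{(d-2)}$ times a logarithm — the logarithm is what survives the otherwise-vanishing coefficient. Concretely, I expect to use that $\D$ acts on the log-density $\log\bm\tau$ via~\nn{Dlog}, and that $\I2^{-1}\D(\bm f\,\bm\delta^{(d-2)})$ for weight-zero $\bm f$ collapses, by Proposition~\ref{Lfdelta}, to $(\I2^{-1}\D\bm f)\,\bm\delta^{(d-2)}$ plus the single surviving term that pairs $\bm\delta^{(d-1)}$ against $\bm f$. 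Choosing $\bm f=\log\bm\tau$ and tracking the one non-Leibniz term should produce $\bm\delta^{(d-1)}$ with a nonzero coefficient precisely because differentiating $\log\bm\tau$ is what rescues the $(d-j-1)\big|_{j=d-1}=0$ factor. Then I would invoke the integration-by-parts Theorem~\ref{parts} to move $\D$ off $\log\bm\tau$ and onto $\bm\delta^{(d-2)}$; since $\Sigma$ is closed, the cutoff $\chi$ from Section~\ref{distributions} kills the divergence term $\divergence\bm j$, leaving only the bulk pairing. The correction term $-(\bm\nabla_a\bm{\mathcal S}^{-1})\bm g^{ab}\triangledown^{\scalebox{.7}{$\bm\sigma$}}_b$ is exactly the commutator $[\D,\I2^{-1}]$ computed in Lemma~\ref{C1}: after integrating by parts I will have $\I2^{-1}\circ\D$ acting to the left on $\bm\delta^{(d-2)}$, but I want it ordered as $\D\circ\I2^{-1}$ (or vice versa) acting on $\log\bm\tau$, and Lemma~\ref{C1} supplies precisely the difference $[\D,\I2^{-1}]=(\D\bm{\mathcal S}^{-1})-2(\bm\nabla_a\bm{\mathcal S}^{-1})\bm g^{ab}\triangledown^{\scalebox{.7}{$\bm\sigma$}}_b$.

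\emph{The hard part} will be the careful distributional bookkeeping of the single non-Leibniz term and the ordering of $\I2^{-1}$ relative to $\D$. The operator $\D$ is \emph{not} Leibniz, so I cannot naively distribute it across $\log\bm\tau\cdot\bm\delta^{(d-2)}$; I must isolate exactly which cross term survives after the coefficient $(d-j-1)$ vanishes at the critical order — this is the whole reason the logarithm must be introduced, and getting its sign and coefficient right (so that the final answer has the clean form quoted, with coefficient $-1$ and the precise correction operator) is where the delicacy lies. I would also need to confirm that $\I2^{-1}$ passing through the integral against the weight-$(2-d)$ pairing in Theorem~\ref{parts} produces the $\bm{\mathcal S}^{-1}$-derivative term with the correct factor, matching Lemma~\ref{C1} after the spurious $(\D\bm{\mathcal S}^{-1})$ piece is absorbed or cancelled against $\D\bm\delta^{(d-2)}=0$ from Proposition~\ref{Lfdelta}. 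Once those two algebraic reorderings are reconciled, the identity~\nn{logform} follows.
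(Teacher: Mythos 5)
Your guiding idea---introduce $\log\bm\tau$ because its inhomogeneous conformal transformation rescues the factor $(d-j-1)$ that vanishes at the critical order $j=d-1$---is indeed the paper's stated strategy, but both tools you propose to implement it are inapplicable here, and this is a genuine gap rather than a bookkeeping issue. Proposition~\ref{Lfdelta} is an \emph{exact} identity for weight-zero densities: $\D(\bm f\,\bm\delta^{(d-2)})=(\D\bm f)\,\bm\delta^{(d-2)}$ with no residue, so there is no ``single surviving term that pairs $\bm\delta^{(d-1)}$ against $\bm f$'' to track, and for the log-density $\log\bm\tau$ the proposition does not apply at all. Likewise Theorem~\ref{parts} is formal self-adjointness for \emph{densities} and cannot be invoked to move $\D$ off $\log\bm\tau$: the would-be adjoint term is a density multiplied by a log-density, hence not a density and not invariantly integrable, and the weights do not match either ($\bm\delta^{(d-2)}$ has weight $1-d$, whereas pairing against a weight-one object in Theorem~\ref{parts} requires weight $-d$). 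The cleanest way to see that your plan collapses is to work in the scale $\bm\tau=[\hh g\, ;\, 1]$: there the representative of $\log\bm\tau$ is identically zero, so any argument that treats $\log\bm\tau$ as an honest density to be integrated by parts returns zero for the right-hand side of~\nn{logform}, while the left-hand side is generically nonzero---the lemma \emph{is} precisely the anomaly in integration by parts for log-densities. (Consistently, for an honest weight-zero density $\bm\varphi$ the paper shows, in the discussion after Theorem~\ref{anomalytheorem}, that $\int\bm\delta^{(d-2)}\big(\I2^{-1}\circ\D-(\bm\nabla_a\bm{\mathcal S}^{-1})\,\bm g^{ab}\triangledown^{\scalebox{.7}{$\bm\sigma$}}_b\big)\bm\varphi=0$.) Your identification of the correction operator with Lemma~\ref{C1} is also numerically off: the commutator there is $(\D\bm{\mathcal S}^{-1})-2(\bm\nabla_a\bm{\mathcal S}^{-1})\,\bm g^{ab}\triangledown^{\scalebox{.7}{$\bm\sigma$}}_b$, which differs from the correction in~\nn{logform} by a factor of two and an extra term; ``absorbed or cancelled'' is not a computation, and in fact Lemma~\ref{C1} is not used in the paper's proof of this lemma.

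What the paper actually does is a short direct computation in the scale $\bm\tau=[\hh g\, ;\, 1]$, where only the inhomogeneous pieces of Equations~\nn{Dlog} and~\nn{trianglelog} survive, giving for the right-hand integrand the representative $\delta^{(d-2)}\,\mathcal S^{-1}\big((d-2)\rho-\sigma\J\,\big)+\delta^{(d-2)}\,\nabla_n\mathcal S^{-1}$. Then the identities $\sigma\,\delta^{(j)}(\sigma)=-j\,\delta^{(j-1)}(\sigma)$, $\nabla\cdot n=-d\rho-\sigma\J$ and $n^2=\mathcal S-2\rho\sigma$, together with a single ordinary divergence-theorem integration by parts applied to $\int\mathcal S^{-1}(\nabla\cdot n)\,\delta^{(d-2)}$ (no surface terms, since $\Sigma$ is closed), produce a linear relation whose solution is exactly $-\int_{\widetilde D}\bm\delta^{(d-1)}$. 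None of Proposition~\ref{delta_deriv}, Proposition~\ref{Lfdelta}, Theorem~\ref{parts} or Lemma~\ref{C1} enters at this stage; the first three are used only afterwards, when the lemma is combined with them to derive Theorem~\ref{anomalytheorem}. If you want to rescue your outline, replace the abstract integration by parts against $\log\bm\tau$ by this scale-fixed evaluation, in which the log anomaly is manifest and the coefficient and sign come out automatically.
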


\begin{proof}
First note that
 $\log\bm \tau$ is a weight one log density 
 as described in Section~\ref{confdenintro}.
Let us 
work in the scale ${\bm \tau}=[\hh g\, ; 1]$.
From Equations~\nn{Dlog} and~\nn{trianglelog},
and using 
 Equation~\nn{xdj}
 we have 
\begin{equation*}
\begin{split}
\bm{\delta}^{(d-2)}
\big(\I2^{-1}\circ \D & -
(\bm \nabla_a {\mathcal S}^{-1}) \, {\bm g}^{ab}
  \triangledown^{\scalebox{.7}{$\bm\sigma$}}_b
 \big)  \log \bm \tau \\[1mm]
&=\big[\hh g\, ;\,
\delta^{(d-2)}\,  {\mathcal S}^{-1}\big(
 (d-2)\rho  -\sigma \J\,  \big)
 +
 \delta^{(d-2)} \nabla_n {\mathcal S}^{-1}
 \big]
 \\[1mm]
 &=
 \big[\hh g\, ;\, (d-2)\, {\mathcal S}^{-1}(\rho\, \delta^{(d-2)}+\J\, \delta^{(d-3)})
 +
 \delta^{(d-2)} \nabla_n {\mathcal S}^{-1}\big]
\\[1mm]
&=
\Big[\hh g\, ;\, \frac{d-2}{d}\, {\mathcal S}^{-1}\big(-(\nabla.n)\, \delta^{(d-2)}+{2\J}\, (d-1)\delta^{(d-3)}\big)
+
 \delta^{(d-2)} \nabla_n {\mathcal S}^{-1}\Big]\, .
 \end{split}
\end{equation*}
We now concentrate on the divergence of the normal vector term:
\begin{equation*}
\begin{split}
\int_{\,\widetilde{\! D}}\ext \! V^g 
{\mathcal S}^{-1}\, & (\nabla.n)\, \delta^{(d-2)} \ \ =\ \ 
-\int_{\,\widetilde{\! D}}\ext \! V^g 
\big( {\mathcal S}^{-1} n^2\, \delta^{(d-1)} +\delta^{(d-2)}
 \nabla_n {\mathcal S}^{-1} \big)
\\[2mm] 
& =  
-\int_{\,\widetilde{\! D}}\ext \! V^g 
\,\big( (1-2\rho  \sigma  {\mathcal S}^{-1})\, \delta^{(d-1)} 
+\delta^{(d-2)} 
 \nabla_n {\mathcal S}^{-1}\big)
\\[2mm]
&=
-\int_{\,\widetilde{\! D}}\ext \! V^g \,\delta^{(d-1)}
 -\int_{\,\widetilde{\! D}}\ext \! V^g 
 \big(2(d-1)
{\mathcal S}^{-1}
\rho +
 (\nabla_n {\mathcal S}^{-1})
\big)  
\delta^{(d-2)} \\[2mm]
&=
\frac{d}{d-2}\, \int_{\,\widetilde{\! D}}\ext \! V^g \,\big(\delta^{(d-1)}
 +
 (\nabla_n {\mathcal S}^{-1} ) 
\delta^{(d-2)} 
\big)
+\, 2(d-1)\int_{\,\widetilde{\! D}}\ext \! V^g 
{\mathcal S}^{-1}J \,  \delta^{(d-3)}\, .
\end{split}
\end{equation*}
Because we are in the case where~$\Sigma$ is closed, the integrands have no support along~$\partial \,\widetilde{\! D}$. Hence, 
in the first line of the above computation, 
there no  surface terms
generated by an integration by parts.
Combining the above two displays gives the quoted result.
\end{proof}

Computing the anomaly is now simple: 
Proposition~\ref{delta_deriv} can now be used to handle the differentiated delta-density 
$
\bm{\delta}^{(d-2)}$ appearing on the right hand side of~\nn{logform}, 
and thereafter, following the same method employed for the computation of the divergences,   one applies the    integration by parts result of Theorem~\ref{parts}. We record the result in the following theorem:

\begin{theorem}\label{anomalytheorem}
Suppose the $\bm{\mathcal S}$-curvature is nowhere vanishing, and $\Sigma$ is closed, then the anomaly is given by
\begin{equation}\label{Q}
{\mathcal A}=\frac{1}{(d-1)!(d-2)!}\, \int_{\Sigma} 
\, \bm Q^{\scalebox{.7}{$\bm\sigma$}}\, ,
\end{equation}
where
\begin{equation}\label{Qs}
\bm Q^{\scalebox{.7}{$\bm\sigma$}}:=-\frac1{\sqrt{\bm{\mathcal S}}}\, (-\D\I2^{-1})^{d-2}\circ 
\big(\I2^{-1}\circ \D  -
(\bm \nabla_a {\mathcal S}^{-1}) \, {\bm g}^{ab}
  \triangledown^{\scalebox{.7}{$\bm\sigma$}}_b
 \big) 
\,  \log \bm \tau\, \Big|_\Sigma\, .
\end{equation}

\end{theorem}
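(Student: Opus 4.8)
The plan is to start from the regulator-independent distributional formula for the anomaly, Equation~\nn{preQ}, namely ${\mathcal A}=\frac{(-1)^d}{(d-1)!}\int_{\widetilde D}\bm\delta^{(d-1)}$, and to convert the bulk integral of the top derivative $\bm\delta^{(d-1)}$ into a local hypersurface integral. The obstruction that shapes the whole argument is that Proposition~\ref{delta_deriv} degenerates at the critical order $j=d-1$, where $(\I2^{-1}\D)^{d-1}\bm\delta=0$, so one cannot directly express $\bm\delta^{(d-1)}$ as a Laplace--Robin power acting on $\bm\delta$. This is exactly what Lemma~\ref{lastderiv} is for: by reintroducing a true scale $\bm\tau$ and its log-density $\log\bm\tau$, it trades the unusable $\int_{\widetilde D}\bm\delta^{(d-1)}$ for $-\int_{\widetilde D}\bm\delta^{(d-2)}\,\bm\Phi$, where $\bm\Phi:=\big(\I2^{-1}\circ\D-(\bm\nabla_a{\mathcal S}^{-1})\,\bm g^{ab}\triangledown^{\scalebox{.7}{$\bm\sigma$}}_b\big)\log\bm\tau$ is a genuine (now subcritical) weight $-1$ density.

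Next I would apply Proposition~\ref{delta_deriv} at the subcritical order $j=d-2$, where the product collapses to $(\I2^{-1}\D)^{d-2}\bm\delta=(d-2)!\,\bm\delta^{(d-2)}$, hence $\bm\delta^{(d-2)}=\tfrac{1}{(d-2)!}(\I2^{-1}\D)^{d-2}\bm\delta$. Substituting gives $\int_{\widetilde D}\bm\delta^{(d-1)}=-\tfrac{1}{(d-2)!}\int_{\widetilde D}\big((\I2^{-1}\D)^{d-2}\bm\delta\big)\,\bm\Phi$. The heart of the computation is then an integration by parts. Since $\I2^{-1}$ is multiplication by the weight-zero density ${\mathcal S}^{-1}$, it is symmetric under the pairing $\int_{\widetilde D}(\,\cdot\,)(\,\cdot\,)$, while $\D$ is formally self-adjoint by Theorem~\ref{parts}; consequently the adjoint of $\I2^{-1}\D$ is $\D\I2^{-1}$, and $(\I2^{-1}\D)^{d-2}$ transposes to $(\D\I2^{-1})^{d-2}$. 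I would move all $d-2$ operators off $\bm\delta$ and onto $\bm\Phi$, then rewrite $(\D\I2^{-1})^{d-2}=(-1)^{d-2}(-\D\I2^{-1})^{d-2}$ to match the sign convention in the definition~\nn{Qs} of $\bm Q^{\scalebox{.7}{$\bm\sigma$}}$.

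Finally I would collapse the remaining bulk integral $\int_{\widetilde D}\bm\delta\,\big((-\D\I2^{-1})^{d-2}\bm\Phi\big)$ to a hypersurface integral via the integrated Dirac-delta identity~\nn{deltaintegral}, which extracts the factor $\tfrac{1}{\sqrt{\bm{\mathcal S}}}$ and restricts to $\Sigma$; comparing with~\nn{Qs} identifies the integrand as $-\bm Q^{\scalebox{.7}{$\bm\sigma$}}$. Collecting the signs (the $(-1)^d$ from~\nn{preQ}, the $(-1)^{d-2}=(-1)^d$ from the transposition, and the $-1$ carried by $\bm Q$) together with the factorials $\tfrac{1}{(d-1)!}\cdot\tfrac{1}{(d-2)!}$ yields Equation~\nn{Q}. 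I expect the main obstacle to be justifying each integration by parts in the presence of the distributional factor $\bm\delta$ and verifying that no surface contribution from the current $\bm j_a$ of Theorem~\ref{parts} survives; here closedness of $\Sigma$ together with the compactly supported cutoff $\chi$ of Section~\ref{distributions} eliminates these boundary terms. A secondary bookkeeping point is confirming the weights line up at every stage, so that Theorem~\ref{parts} applies with the required weights $1-d-w$ and $w$: the integrand stays of weight $-d$ throughout (with $\bm\Phi$ of weight $-1$ and $(-\D\I2^{-1})^{d-2}\bm\Phi$ of weight $1-d$), which is routine but worth checking explicitly.
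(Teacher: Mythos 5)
Your argument is correct and is essentially the paper's own proof: you start from Equation~\nn{preQ}, use Lemma~\ref{lastderiv} to step below the critical order at which Proposition~\ref{delta_deriv} degenerates, apply Proposition~\ref{delta_deriv} at $j=d-2$ to get $\bm\delta^{(d-2)}=\tfrac1{(d-2)!}(\I2^{-1}\D)^{d-2}\bm\delta$, transpose $(\I2^{-1}\D)^{d-2}$ onto the log-density term via Theorem~\ref{parts} (with closedness of~$\Sigma$ and the cutoff~$\chi$ of Section~\ref{distributions} disposing of the ${\bm j}_a$ surface current), and finally collapse to~$\Sigma$ with the identity~\nn{deltaintegral}, which supplies the $1/\sqrt{\bm{\mathcal S}}$ factor --- precisely the sequence of steps the paper records after Lemma~\ref{lastderiv}. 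The only nit is that your closing sign tally omits the overall minus sign from Lemma~\ref{lastderiv} itself (which you correctly stated earlier in the trade $\int_{\,\widetilde{\!D}}\bm\delta^{(d-1)}=-\int_{\,\widetilde{\!D}}\bm\delta^{(d-2)}\,\bm\Phi$); including it, the product of signs is $+1$, exactly as required by~\nn{Q}.
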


\begin{remark}
The non-vanishing requirement on the $\bm{\mathcal S}$-curvature 
results in no essential loss of generality since
there must exist a neighborhood of $\Sigma$ where $\bm{\mathcal S}\neq 0$ by virtue of the definition of
a defining density in Section~\ref{DD}. 
\end{remark}

The quantity ${\bm Q}^{\scalebox{.7}{$\bm\sigma$}}$ is a weight $1-d$ density along~$\Sigma$. Moreover it matches
the holographic formula for the Branson $Q$-curvature in the special case of Poincar\'e--Einstein
 structures given in~\cite[Theorem 4.7]{GW} since in that case $\bm{\mathcal S}=1$. 
 
 The integral of $Q$-curvature is a conformal invariant.
 The analogous result
 for  the integral of~$\bm Q^{\scalebox{.7}{$\bm\sigma$}}$ holds here:
 According to Equation~\nn{preQ}, the anomaly~${\mathcal A}$ does not depend on the choice of regulator~$\bm \tau$, so nor does $\int_\Sigma\bm Q^{\scalebox{.7}{$\bm\sigma$}}$ by virtue of the above theorem; but changing the choice of true scale~$\bm \tau$ amounts to changing the choice of metric $g\in \bm c$.
 It is also interesting to construct a direct version of  this argument.
For that we study the behavior of $\bm Q^{\scalebox{.7}{$\bm\sigma$}}$ upon replacing the scale~$\bm \tau$ by $\exp(\bm \varphi)\, \bm \tau$ where $\bm \varphi$ is any smooth weight~$0$ density.
Since $\log \bm \tau$ then becomes $\bm \varphi+ \log \bm \tau$, the corresponding change in $\bm Q^{\scalebox{.7}{$\bm\sigma$}}$ is given by
$$
\bm Q^{\scalebox{.7}{$\bm\sigma$}}\longmapsto 
\bm Q^{\scalebox{.7}{$\bm\sigma$}}+
\widetilde{\rm P}^{\scalebox{.7}{$\bm\sigma$}} \bm \varphi
$$
where the operator~$\widetilde{\rm P}^{\scalebox{.7}{$\bm\sigma$}}$ is given by
$$
\widetilde{\rm P}^{\scalebox{.7}{$\bm\sigma$}}:=
\frac1{\sqrt{\bm{\mathcal S}}}\, (-\D\I2^{-1})^{d-2}\circ 
\big(\I2^{-1}\circ \D  -
(\bm \nabla_a \bm {\mathcal S}^{-1}) \, {\bm g}^{ab}
  \triangledown^{\scalebox{.7}{$\bm\sigma$}}_b
 \big) \, .
$$
For Poincar\'e--Einstein
 structures the above reproduces the holographic formula for the GJMS conformal Laplacian powers presented in~\cite{GW}. For general scales $\bm \sigma$, it amounts to a version of the tangential operator appearing in Equation~\nn{tangID} modified precisely so that $\int_\Sigma \widetilde{\rm P}^{\scalebox{.7}{$\bm\sigma$}} {\bm \varphi}=0$, which implies that that $\int_\Sigma \bm Q^{\scalebox{.7}{$\bm\sigma$}}$ is conformally invariant. To prove this, we use that 
 $$\int_\Sigma \widetilde{\rm P}^{\scalebox{.7}{$\bm\sigma$}} {\bm \varphi}=\int_{\widetilde D} \bm \delta \, \sqrt{\bm{\mathcal S}}\ \widetilde{\rm P}^{\scalebox{.7}{$\bm\sigma$}} {\bm \varphi}
 \propto
 \int_{\widetilde D} \bm \delta^{(d-2)}
 \big(\I2^{-1}\circ \D  -
 (\bm \nabla_a \bm {\mathcal S}^{-1}) \, {\bm g}^{ab}
  \triangledown^{\scalebox{.7}{$\bm\sigma$}}_b
 \big) \bm \varphi\, .
 $$
 The last expression above follows from Proposition~\ref{delta_deriv}. From Theorem~\ref{parts} and Proposition~\ref{Lfdelta}
 we see that the first term on the right hand side above equals $\int_{\widetilde D} \bm \delta^{(d-2)}\bm \varphi
 \D\bm {\mathcal S}^{-1}$. Therefore we must compute the final term of the above display:
 \begin{equation*}
 \begin{split}
 \int_{\widetilde D} \bm \delta^{(d-2)}
  (\bm \nabla_a \bm {\mathcal S}^{-1}) \,& {\bm g}^{ab}
  \triangledown^{\scalebox{.7}{$\bm\sigma$}}_b
 \bm \varphi
 =-\int_{\widetilde D}
dV^g \, \varphi
\nabla_a\big(\sigma 
(\nabla^a {\mathcal S}^{-1})
\delta^{(d-2)}\big)\\[1mm]
&=\int_{\widetilde D} dV^g\,  \delta^{(d-2)}
\varphi 
\big(
(d-2)\nabla_n{\mathcal S}^{-1}
-\sigma \Delta  {\mathcal S}^{-1}
\big)=\int_{\widetilde D} \bm \delta^{(d-2)} \bm \varphi \D\bm {\mathcal S}^{-1}\, .
\end{split}
\end{equation*}
For the first equality above we made a choice of scale and used that we are in the case that $\Sigma$ is closed to integrate the operator $\triangledown^{\scalebox{.7}{$\bm\sigma$}}_b$ by parts without incurring surface terms. The second equality relied on the identity~\nn{xdj} and the final result follows from Equation~\nn{Ldef}. We have therefore proved the following result twice:
\begin{proposition}\label{intQ}
Let $\Sigma$ be a closed hypersurface with defining function~$\bm \sigma$. Then~$\int_\Sigma \bm Q^{\scalebox{.7}{$\bm\sigma$}}$
is a conformal hypersurface invariant depending only on the data of the conformal embedding and the defining density $\bm \sigma$. 
\end{proposition}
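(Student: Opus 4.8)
The plan is to give two independent arguments, since both are short and mutually illuminating. The quickest route is indirect: Equation~\nn{preQ} exhibits the anomaly~${\mathcal A}$ as the integral of~$\bm\delta^{(d-1)}$, an expression that makes no reference whatsoever to the regulating scale~$\bm\tau$. Since Theorem~\ref{anomalytheorem} identifies ${\mathcal A}=\frac{1}{(d-1)!(d-2)!}\int_\Sigma\bm Q^{\scalebox{.7}{$\bm\sigma$}}$, the integral $\int_\Sigma\bm Q^{\scalebox{.7}{$\bm\sigma$}}$ inherits this independence. Because a change of true scale~$\bm\tau$ is precisely a change of metric representative $g\in\bm c$, this already shows that $\int_\Sigma\bm Q^{\scalebox{.7}{$\bm\sigma$}}$ depends only on $(\bm c,\Sigma)$ and the defining density~$\bm\sigma$, which is the assertion.

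For a direct proof I would instead track how $\bm Q^{\scalebox{.7}{$\bm\sigma$}}$ responds to a Weyl rescaling, implemented as $\bm\tau\mapsto\exp(\bm\varphi)\,\bm\tau$ for an arbitrary weight-$0$ density~$\bm\varphi$. The only $\bm\tau$-dependence in formula~\nn{Qs} enters linearly through $\log\bm\tau$, and $\log\bm\tau\mapsto\log\bm\tau+\bm\varphi$, so the variation is $\bm Q^{\scalebox{.7}{$\bm\sigma$}}\mapsto\bm Q^{\scalebox{.7}{$\bm\sigma$}}+\widetilde{\rm P}^{\scalebox{.7}{$\bm\sigma$}}\bm\varphi$, where $\widetilde{\rm P}^{\scalebox{.7}{$\bm\sigma$}}$ is the displayed operator. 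The whole claim thus reduces to proving $\int_\Sigma\widetilde{\rm P}^{\scalebox{.7}{$\bm\sigma$}}\bm\varphi=0$ for every such~$\bm\varphi$.

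To verify this vanishing I would first convert the surface integral to a bulk integral over~$\widetilde D$ using the Dirac-delta density identity~\nn{deltaintegral}, then apply Proposition~\ref{delta_deriv} to collapse the string $(-\D\I2^{-1})^{d-2}$ into the differentiated delta density~$\bm\delta^{(d-2)}$, up to an overall nonzero constant. This splits the result into two bulk integrals, one for each summand of $\widetilde{\rm P}^{\scalebox{.7}{$\bm\sigma$}}$. For the Laplace--Robin piece I would integrate $\D$ by parts via Theorem~\ref{parts} — producing no surface term because $\Sigma$ is closed and the cutoff~$\chi$ of Section~\ref{distributions} localizes the integrand — and then invoke Proposition~\ref{Lfdelta} to pass $\D$ through $\bm\delta^{(d-2)}$, arriving at $\int_{\widetilde D}\bm\delta^{(d-2)}\,\bm\varphi\,\D\bm{\mathcal S}^{-1}$. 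For the $\triangledown^{\scalebox{.7}{$\bm\sigma$}}$ piece I would integrate the operator $\triangledown^{\scalebox{.7}{$\bm\sigma$}}_b$ by parts (again with no boundary contribution) and use the distributional identity~\nn{xdj} together with the definition~\nn{Ldef} of $\D$ to rewrite it as the \emph{same} expression $\int_{\widetilde D}\bm\delta^{(d-2)}\,\bm\varphi\,\D\bm{\mathcal S}^{-1}$. Since $\widetilde{\rm P}^{\scalebox{.7}{$\bm\sigma$}}$ is the difference of these two operators, the two identical contributions cancel and the integral vanishes.

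The main obstacle is the distributional bookkeeping in the last step: one must match coefficients carefully as derivatives are shifted on and off the delta density using $x\,\delta^{(j)}(x)=-j\,\delta^{(j-1)}(x)$, and confirm that the normal-derivative and $\J$-terms generated when $\D$ acts on $\bm{\mathcal S}^{-1}$ reassemble into exactly $\D\bm{\mathcal S}^{-1}$ in \emph{both} computations, so that the cancellation is precise rather than merely up to constants. The nowhere-vanishing hypothesis on $\bm{\mathcal S}$ is required so that $\I2^{-1}$ is defined throughout a neighborhood of~$\Sigma$, which holds near~$\Sigma$ by the defining-density property of~$\bm\sigma$. Finally, closedness of~$\Sigma$ is essential at every integration by parts: it is exactly the hypothesis that annihilates the surface terms carried by the current $\bm j_a$ of Theorem~\ref{parts} and by the $\triangledown^{\scalebox{.7}{$\bm\sigma$}}$ integration.
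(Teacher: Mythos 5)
Your proposal is correct and takes essentially the same approach as the paper, which in fact proves Proposition~\ref{intQ} twice in exactly your two ways: indirectly by combining the regulator independence manifest in Equation~\nn{preQ} with Theorem~\ref{anomalytheorem}, and directly by showing $\int_\Sigma \widetilde{\rm P}^{\scalebox{.7}{$\bm\sigma$}}\bm\varphi=0$ through the delta-density identity~\nn{deltaintegral}, Proposition~\ref{delta_deriv}, Theorem~\ref{parts} together with Proposition~\ref{Lfdelta}, and the reduction of the $\triangledown^{\scalebox{.7}{$\bm\sigma$}}$ term via~\nn{xdj} and~\nn{Ldef}, so that both summands equal $\int_{\widetilde{D}}\bm\delta^{(d-2)}\,\bm\varphi\,\D\bm{\mathcal S}^{-1}$ and cancel. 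Your accounting of where closedness of~$\Sigma$ (killing surface terms) and nonvanishing of~$\bm{\mathcal S}$ (defining $\I2^{-1}$ near~$\Sigma$) enter also matches the paper.
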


\subsection{Asymptotically hyperbolic spaces}\label{AHs}

To illustrate our method's efficacy 
we compute, in terms of standard Riemannian quantities, the anomaly for an almost hyperbolic~3-manifold: Given any conformally compact manifold with boundary~$\Sigma$, there is a conformally related singular metric  $g^o$ with the property that the 
scalar curvature $\Sc^{g^o}$
is non-singular and approaches the strictly negative constant 
$-d(d-1)$ along~$\Sigma$.
In this case $(M,g^o)$ is said to be {\it asymptotically hyperbolic} (AH) and 
$$g^o=\frac{g}{\sigma^2}\, ,
$$
where $g$ is a smooth metric on the manifold with boundary $\, \overline{\! M}$
and $\sigma$ is a defining function for $\Sigma=\partial \overline{\! M}$ such that 
\begin{equation}\label{one1}
|\!\ext \! \sigma|^g\big|_\Sigma=1\, .
\end{equation}
Observe that the $\bm{\mathcal S}$-curvature  of $\bm \sigma=[\hh g\, ;\, \sigma]$ then obeys
\begin{equation}\label{AH}
{\bm {\mathcal S}}|_\Sigma = 1\, ,
\end{equation}
which may also be taken as the definition of asymptotic hyperbolicity.

For a given fixed AH singular metric $g^o$, it is possible to 
find conformal representatives 
for the defining density $\bm \sigma=[\hh g\, ;\, \sigma]$ such that the defining function obeys the unit length condition~\nn{one1} not only along~$\Sigma$, but also in some collar neighborhood thereof. 
Using this defining function  as a coordinate $x$, 
 there exist further coordinates such that the singular metric 
  takes 
 the Graham--Lee normal form~\cite{GL}
$$
g^o=\frac{dx^2+h(x)}{x^2}\, .
$$
Clearly the $\bm{\mathcal S}$-curvature is left unchanged.

It is also possible
via a normal coordinate construction 
to instead fix $g$ and  find a new~AH singular metric with  
defining function obeying~\nn{one1} in a neighborhood of $\Sigma$ (see for example~\cite{Wald} or \cite[Proposition 2.5]{GW}
for an explicit asymptotic construction). The $\bm{\mathcal S}$-curvature then still obeys the AH condition~\nn{AH} but is changed  away from~$\Sigma$.

In the following example both situations are covered:  We assume that
the defining density
$\bm \sigma=[\hh g\, ;\, \sigma]$
  obeys the  AH condition~\nn{AH}, and  $n=\nabla\sigma$ obeys $|n|^g=1$ in some neighborhood of $\Sigma$ in which we now work. 
In $d=3$ dimensions  
the $\bm{\mathcal S}$-curvature of $\bm \sigma$ is then given by
$$\bm {\mathcal S}=[\hh g\, ;\, 1 + 2\rho\sigma]\, ,$$
where $\rho=-\tfrac13(\Delta \sigma+J\sigma)$.
Now we consider the weight one density $\bm \tau=[\hh g\, ;\, 1]$ determined by $g$ and a weight zero density $\bm f=[\hh g\, ;\, e^\varphi]$
so that $\bm f \bm \tau$ may be viewed as an arbitrary true scale.
We want to compute~$\bm Q^{\scalebox{.7}{$\bm\sigma$}}$ as given in Equation~\nn{Qs}:
\begin{equation*}
\begin{split}
-\frac1{\sqrt{\bm{\mathcal S}}}\, (-&\D\I2^{-1})^{d-2}\circ 
\big(\I2^{-1}\circ \D  -
(\bm \nabla_a \bm {\mathcal S}^{-1}) \, {\bm g}^{ab}
  \triangledown^{\scalebox{.7}{$\bm\sigma$}}_b
 \big) 
\,  \log (\bm f \bm \tau)\, \Big|_\Sigma\\
&=\Big[\hh g\, ;\, 
-(\nabla_n-\rho)\left( {\mathcal S}^{-2^{\phantom{o'}}}\!\!\!
\big(\nabla_n \varphi+\rho
-\sigma \Delta \varphi
-\sigma J
+(\nabla^a \log{\mathcal S})
(\sigma\nabla_a \varphi -
n_a)
\big)\right)\Big]\Big|_\Sigma\\[1mm]
&
=\Big[\hh g\, ;\, 
-(\nabla_n-5\rho)
\big(
\nabla_n \varphi+\rho
-\nabla_n \log{\mathcal S}
\big)
+\Delta\varphi-(\nabla^a \log{\mathcal S})\nabla_a\varphi
+J
\Big]\Big|_\Sigma
\\[1mm]
&=
\Big[\hh g\, ;\, 
\bar\Delta \varphi
-\nabla_n\rho
+5\rho^2
+(\nabla_n^2 -5\rho \nabla_n)\log {\mathcal S}
+J
\Big]\Big|_\Sigma\, .
\end{split}
\end{equation*}
In the above we used that $|n|=1$ implies that  $(\Delta \varphi
+3\rho \nabla_n \varphi - \nabla_n^2 \varphi)
|_\Sigma=\bar \Delta \varphi$.
Thus we see that $\bm Q^{\scalebox{.7}{$\bm\sigma$}}$ depends on $\varphi$ only through the hypersurface total divergence $\bar \Delta \varphi$ so that, in concordance with Proposition~\ref{intQ}, its integral along $\Sigma$ is independent of the choice of regulator $\bm f\bm \tau$.   
We still wish to express the remaining terms 
as curvatures:
\begin{equation*}
\begin{split}
-\nabla_n\rho
+5\rho^2
&+(\nabla_n^2 -5\rho \nabla_n)\log {\mathcal S}
+J\stackrel\Sigma=
3\nabla_n\rho
-9\rho^2
+J
\stackrel\Sigma=
K+P_{\hat n\hat n}-H^2+J\, .
\end{split}
\end{equation*}
Here we used that $|n|=1$ implies that $\nabla_a n_b|_\Sigma=\II_{ab}$, $\rho|_\Sigma=-\tfrac 23 H$ and  $
\nabla_n\rho|_\Sigma=H^2+
\tfrac13 (K+P_{\hat n\hat n})
$.
Using the hypersurface identity $J=\bar J+P_{\hat n\hat n}-H^2+\tfrac12 K$ we have the general result for the anomaly in almost hyperbolic 3-space
$$
{\mathcal A}=\frac12
\int_\Sigma \bm Q^{\scalebox{.7}{$\bm\sigma$}}
=\int_\Sigma dA^{\bar g} \Big(\bar J-\frac{K}2\Big)
+
\int_\Sigma dA^{\bar g} \big(P_{\hat n\hat n}+ K-H^2\big)\, .
$$
The rigidity density $K$ is a conformal hypersurface invariant
and the integral over $\bar J$ is proportional to the Euler characteristic of $\Sigma$, so the first  term is an invariant of the  conformal embedding. In fact, 
the integrand of the second term equals $\tfrac34\big(\nabla_n^2 {\mathcal S}-
3(\nabla_n{\mathcal S})^2\big)\big|_\Sigma$.
Thus imposing a  condition $\bm {\mathcal S}=1+{\mathcal O}(\bm \sigma^3)$, the anomaly would then be an invariant of the conformal embedding $\Sigma\hookrightarrow (M,\bm c)$. This further motivates the singular Yamabe problem studied in the next section.

%

\section{The singular Yamabe problem}\label{singYam}

The volume $\Vol(\, \widehat{\!D};{\bm \sigma})$ defined in Equation~\nn{VolDhat}
depends on the choice of defining density~${\bm \sigma}$, or equivalently the bulk metric $g^o$. When given only the conformal embedding~$\Sigma\hookrightarrow (M,\bm c)$,
there is a  canonical choice of defining density (determined up to the order required to compute a renormalized version of the volume integral). The divergences simplify considerably in that setting, and the anomaly is an invariant of  the conformal embedding~\cite{Grahamnew}.
Indeed, the singular Yamabe problem underlies a general program for  the study of conformal hypersurface invariants~\cite{CRMouncementCRM,GW}.

On a compact manifold, every metric is conformal to a metric of constant scalar curvature. On closed manifolds, the problem of finding a conformal rescaling $\Omega$ such that $\Sc^{\Omega^2\SSmidge g}$ is constant is called the {\it Yamabe problem}. We term the   analogous problem for conformally compact manifolds 
 the  {\it singular Yamabe problem} ({\it cf.}~\cite{MazzeoC}). This is formulated simply in terms of the ${\bm {\mathcal S}}$-curvature:

Firstly consider an arbitrarily chosen defining density ${\bm \sigma}_0=[\hh g\, ;\, \sigma_0]$. Then since $\ext \!\sigma_0$ is non-vanishing along~$\Sigma$, its ${\bm {\mathcal S}}$-curvature is positive in a neighborhood of~$\Sigma$. Hence, at least in this neighborhood of~$\Sigma$, the new defining density ${\bm \sigma}={\bm \sigma}_0/\sqrt{\bm {\mathcal S}(\bm \sigma_0)}$ is well-defined and its ${\bm {\mathcal S}}$-curvature obeys the AH condition
$$
{\bm {\mathcal S}}|_\Sigma = 1\, .
$$
In the following discussion,   let us assume that the chosen defining density ${\bm \sigma}$ obeys the unit  property in the above display. Now suppose it were also possible to choose ${\bm \sigma}$ such that the unit property held throughout $M$, namely
$$
1=: [\hh g\, ;\, 1]={\bm {\mathcal S}}= [\hh g\, ;\, {\mathcal S}]\, .
$$
Then in the interior $\,\widehat{\! M}$,  evaluating ${\mathcal S}$ in the scale $g^o$ (so that $\sigma=1$) we would have
$
1=-\frac{2\smallJ^{\, g^{\scriptscriptstyle o}}}{d}
$,
corresponding to an interior metric with constant negative scalar curvature $$\Sc^{g^o}=-d(d-1)\, .$$
Hence the singular Yamabe problem amounts to finding smooth defining densities such that 
$${\bm {\mathcal S}}=1\, .$$

In general   smooth solutions to the singular Yamabe problem do not exist~\cite{ACF}. However, approximate solutions to sufficiently high orders to define a renormalized volume do exist, as encapsulated by the following theorem (based in part on~\cite{ACF}):

\begin{theorem}[\cite{GW15}]\label{BIGTHE}
Given a defining density~${\bm \sigma}_0$, there exists an improved defining density \begin{equation}\label{expansion}\bm{\bar\sigma}={\bm \sigma}\big(1+{\bm \alpha_1}\bm \sigma + \cdots +{\bm \alpha_{\db}}{\bm \sigma}^{\db}\big)\, ,\end{equation}
where $\bm\sigma=\bm \sigma_0/{\sqrt{\bm {\mathcal S}(\bm \sigma_0)}}$ in a neighborhood of $\Sigma$, and
${\bm \alpha}_k$ are smooth densities, such that the ${\bm {\mathcal S}}$-curvature of $\bm{\bar\sigma}$ obeys
\begin{equation}\label{one}
\bm {\mathcal S}=1+{\bm \sigma}^d {\bm {\mathcal B}}\, .
\end{equation}
Moreover, the weight~$w=-d$ density $\bm {\mathcal B}({\bm \sigma}_0)$ is a preinvariant
for a  conformal hypersurface invariant 
$${\bm B}:=\bm{\mathcal B}|_\Sigma\, ,$$
which depends only on the data of the conformal embedding~$\Sigma\hookrightarrow (M,\bm c)$.
\end{theorem}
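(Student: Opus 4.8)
The plan is to solve the singular Yamabe equation $\bm{\mathcal S}=1$ asymptotically in powers of $\bm\sigma$, and to identify $\bm{\mathcal B}$ as the coefficient at the first order that cannot be corrected. The useful starting observation is that the density calculus already linearizes the problem: writing the $\bm{\mathcal S}$-curvature~\nn{Scurvy} in a scale as $\bm{\mathcal S}=[\hh g\,;\,n^2+2\rho\sigma]$ with $n_a=\nabla_a\sigma$ and $\rho=-\tfrac1d(\Delta+\J)\sigma$, a short computation from~\nn{Ldef} gives the identity $\D\bm\sigma=d\,\bm{\mathcal S}$, so that the singular Yamabe condition is exactly $\D\bm\sigma=d$. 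The normalization $\bm\sigma=\bm\sigma_0/\sqrt{\bm{\mathcal S}(\bm\sigma_0)}$, valid near $\Sigma$ since $\bm{\mathcal S}(\bm\sigma_0)>0$ there, arranges $\bm{\mathcal S}\stackrel{\Sigma}{=}1$ and hence $|\nabla\sigma|\stackrel{\Sigma}{=}1$. Because the construction is required only to finite order, there are no analytic subtleties: at each order one solves an algebraic equation along $\Sigma$ and extends the solution smoothly into the bulk.

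The key lemma is the indicial behaviour of the $\bm{\mathcal S}$-curvature. Perturbing $\bm\sigma\mapsto\bm\sigma+\bm\phi$ with $\bm\phi:=\bm\alpha_k\,\bm\sigma^{k+1}$ a weight-one density, I would vary~\nn{Scurvy} directly and note that the first variation is again controlled by the Laplace--Robin operator, $\delta\bm{\mathcal S}=\tfrac2d\,\D\bm\phi$ (with $\bm\phi$ carried at weight one); this is the infinitesimal shadow of the solution-generating $\mathfrak{sl}(2)$ algebra~\nn{algebra}. Evaluating the leading term with the chain and Leibniz rules and $|\nabla\sigma|\stackrel{\Sigma}{=}1$ shows that adding $\bm\alpha_k\,\bm\sigma^{k+1}$ shifts the coefficient of $\sigma^k$ in $\bm{\mathcal S}$ by $\tfrac{2(k+1)(d-k)}{d}\,\bm\alpha_k$ along $\Sigma$, while leaving all lower orders fixed. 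Since the indicial factor $(d-k)$ is nonzero for $1\le k\le \db$, the restriction $\bm\alpha_k|_\Sigma$ is uniquely fixed so as to annihilate the order-$\sigma^k$ error; running this induction from $k=1$ to $k=\db$ produces~\nn{expansion} with the $\bm{\mathcal S}$-curvature obeying~\nn{one}, where $\bm{\mathcal B}=(\bm{\mathcal S}-1)/\bm\sigma^d$ is a smooth weight $-d$ density by Hadamard's lemma. At the critical order $k=d$ the factor $(d-k)$ vanishes, so the order-$\sigma^d$ term can no longer be removed: this is precisely the obstruction $\bm{\mathcal B}$. The analysis mirrors that of~\cite{ACF}, recast in density language.

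The invariance assertion is where the real content lies, and I expect it to be the main obstacle, since it rests on a delicate order-counting that must be done with care. The strategy is to use the vanishing of the indicial factor at $k=d$ a \emph{second} time. On the one hand, the uniqueness in the induction pins down the coefficients of $\bm{\bar\sigma}$ (as an expansion in $\bm\sigma$) canonically through order $\sigma^d$, so any two improved densities---arising from different initial data, in particular from replacing $\bm\sigma_0$ by $v\bm\sigma_0$---agree modulo $\mathcal O(\sigma^{d+1})$. On the other hand, the indicial computation specialized to $k=d$ gives shift factor $\tfrac{2(d+1)(d-d)}{d}=0$, so altering $\bm{\bar\sigma}$ at order $\sigma^{d+1}$ does not change the order-$\sigma^d$ coefficient of $\bm{\mathcal S}$. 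Combining these, $\bm{\mathcal B}|_\Sigma$ is insensitive both to the freedom in extending the $\bm\alpha_k$ off $\Sigma$ and to the choice of initial defining density; this is exactly the statement that $\bm{\mathcal B}(\bm\sigma_0)$ is a hypersurface preinvariant and that $\bm B=\bm{\mathcal B}|_\Sigma$ depends only on $(M,\bm c,\Sigma)$. Conformal covariance then requires no extra work: every step is performed within the weight-graded density calculus, so $\bm{\mathcal B}$ is automatically a conformal density of weight $-d$ and its restriction is a weight $-d$ conformal hypersurface invariant.
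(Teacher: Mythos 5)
Your proposal is correct and follows essentially the same route as the proof in \cite{GW15} that the paper cites for this theorem (the paper itself imports the statement without reproving it): the exact linearization $\updelta\bm{\mathcal S}=\tfrac2d\,\D\,\updelta\bm\sigma$, which this paper independently records and verifies as Lemma~\ref{Ldsigma}, combined with the $\mathfrak{sl}(2)$-derived indicial factor $(k+1)(d-k)$ drives the same order-by-order recursion, with the vanishing root at $k=d$ producing the obstruction and, together with uniqueness of $\bm{\bar\sigma}$ modulo $\mathcal O(\bm\sigma^{d+1})$ terms (stated after the theorem in the paper), the preinvariance of $\bm{\mathcal B}$. Your treatment of invariance---uniqueness through order $\bm\sigma^{d}$ plus insensitivity of the $\bm\sigma^{d}$ coefficient to order-$\bm\sigma^{d+1}$ shifts, with conformal covariance automatic from the weight-graded density calculus---is exactly the mechanism used there, so there is nothing substantive to correct.
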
 

The density $\bm{\bar\sigma}$ of the theorem is unique up to terms $\bm \sigma^{d+1} \bm \alpha$, where $\bm \alpha$ is a smooth weight~$-d$ density, and~$\bm{\bar\sigma}$ is termed a {\it conformal unit defining density}. Since the  density~${\bm B}$ obstructs smooth solutions to the singular Yamabe problem, it is called the {\it obstruction density}. For surfaces embedded in conformally Euclidean 3-space,  in a Euclidean scale~$3B$ equals the {\it Willmore invariant}
\begin{equation}\label{EWillmore}
\bar\Delta H+2H(H^2-\kappa)\, ,
\end{equation}
where $\kappa=\, \overline{\!\Sc}/2$ is the Gau\ss\  curvature. It follows that the above quantity, which appears as one side of the Willmore equation, is 
invariant under rigid conformal motions, a fact which is well known.

\subsection{Divergences}

It is not difficult to generate general formul\ae\ for the divergences in the regulated volume~\nn{vregd} for singular metrics solving the  singular Yamabe problem. 
We  focus on the case where $\Sigma$ is closed throughout this section.
Computations are simplified by working in the  scale~$\bm \tau$. For the leading divergence of Equation~\nn{leading}, 
this yields the hypersurface integral\begin{equation}\label{leadingSINGYAM}
v_{d-1}=\frac{1}{d-1}\, \int_\Sigma \ext \! A^{\bar g}\, .
\end{equation}
In a Poincar\'e--Einstein setting this behavior of the leading divergence is well known, see for example~\cite{Gra00}.

By virtue of Equation~\nn{one},  the subleading divergences~\nn{subleadingdivas} become
\begin{equation}\label{divergences}
v_{k{\sss \in\{d-1,\ldots,1\}}}=
\frac{(k-1)!}{(d-2)!\,(d-k-1)!\, k }
\, \int_\Sigma \, 
(-\D)^{d-k-1}\, \frac{1}{\bm \tau^k}\, ,
\end{equation}
and it is not difficult to develop explicit formul\ae\ for the first few values of $k$:
Using  Equation~\nn{Ldef}   we compute
\begin{eqnarray*}
\D \bm \tau^{2-d}&\stackrel\Sigma=& [\hh g\, ;\, (d-2)^2 \rho\, 
]\, ,
\end{eqnarray*}
where again $\bm \tau=[\hh g\, ;\, 1]$.
Thus, using Lemma~\ref{Hlemma}, it follows that in this scale the coefficient of the next-to-leading order (nlo) divergence is
\begin{equation}\label{next-to-leading}
v_{d-2}=\int_\Sigma \ext \! A^{\bar g} \, H\, .
\end{equation}

To compute the next-to-next-to-leading order (nnlo) divergence
we must calculate $
\D^2 \bm \tau^{3-d}|_\Sigma$. The geometric data required for this computation 
 is given in~\cite[Lem\-ma~7.9]{GW}. In the scale  $\bm \tau=[\hh g\, ; \, 1]$, using Equation~\nn{K2J}, we then find
\begin{equation}\label{L2taui}
\D^2 \bm \tau^{3-d}\stackrel\Sigma=-(d-2)(d-3)\Big[\hh g\, ;\, 
\J  \, +\, (d-4)\Big(\Rho^{ab}\hat n_a\hat n_b-(d-2)H^2
 \, +\,\frac{K}{d-2}\, \Big)
\Big]\, .
\end{equation}
Thus, in this scale,  the coefficient of the nnlo  divergence is
\begin{equation}\label{nnlo}
v_{d-3}=-\frac1{2(d-3)}\, \int_\Sigma \ext \! A^{\bar g}\, 
\Big\{\J  \, +\, (d-4)\Big(\Rho^{ab}\hat n_a\hat n_b-(d-2)H^2
 \, +\,\frac{K}{d-2}\, \Big)
\Big\}
\, .
\end{equation}

The computation of the next-to-next-to-next-to-leading (nnnlo) divergence is somewhat more involved and has been relegated to Appendix~\ref{nnnlo}.

To summarise, 
given  the data of a compact hypersurface embedded in a Riemannian manifold~$(M,g)$
and the corresponding choice of true scale $\bm \tau:=[\hh g\, ;\, 1]$,  the regulated volume for a conformal unit defining density
is given by
%
%
\begin{equation}\label{firsttwo}
\Vol_\varepsilon(D,\Sigma)\ =\ \frac1{d-1}\,  \frac{\int_\Sigma \ext \! A^{\bar g}}{
\varepsilon^{d-1}}\ +\ 
\frac{\int_\Sigma \ext \! A^{\bar g} H}{\varepsilon^{d-2}}\ +\ \cdots\   +\ 
{\mathcal A}\,  \log\varepsilon
\ +\  \Vol_{\rm ren}\ +\  {\mathcal O}(\varepsilon)\, .
\end{equation}
Formul\ae\  for the nnlo $1/\varepsilon^{d-3}$ and nnnlo $1/\varepsilon^{d-4}$ divergences can be found in Equation~\nn{nnlo}
and Appendix~\ref{nnnlo}.

\subsection{The anomaly}

We can also generate explicit results for the anomaly in the singular Yamabe setting. These 
are of particular interest, since they generate integrated conformal invariants depending only on the conformal embedding.  
 
 To begin with note that Theorem~\nn{anomalytheorem}
 simplifies considerably for defining densities satisfying Equation~\nn{one}. In this case $\bm Q$ is independent of $\Sigma$ and given by $(-\D)^{d-1} \log \bm \tau \, \big|_\Sigma$.
 Thus, for closed $\Sigma$, the anomaly is 
 given by
 $$
{\mathcal A}=\frac{1}{(d-1)!(d-2)!}\, \int_\Sigma (-\D)^{d-1} \log \bm \tau \, \Big|_\Sigma\, .
$$
We now develop the above formula for embedded surfaces and spaces.

 \subsubsection{Surfaces embedded in 3-manifolds}

To compute the $\log$ term in the $\varepsilon$ expansion of the regulated volume  when the host space is three dimensional we need to compute the square of 
the Laplace--Robin operator acting on a log-density.  
An explicit formula for the square of the Laplace--Robin operator of a conformal unit density acting on general densities (and tractors) is known (see for example~\cite[Lemma 7.9]{GW15})
and is given by 
$$
\D^2  \log\bm \tau\stackrel\Sigma=\Big[\hh g\, ;\, 
\, \bar{\!\!\J}-\frac K2\, \Big]\, .
$$
Orchestrating the above with our results for the leading and subleading divergences
in Equations~\nn{leading} and~\nn{next-to-leading},
the regulated volume in the scale $\bm \tau$ is given by
\begin{equation*}
\Vol_\varepsilon(D,\Sigma)\ =\  \frac{\int_\Sigma \ext \! A^{\bar g}}{2\varepsilon^2}\ +\ 
\frac{\int_\Sigma \ext \! A^{\bar g} H}{\varepsilon}\ +\ \frac{\log\varepsilon}2\, \int_\Sigma \ext \! A^{\bar g}\Big(\ \bar{\!\!\J} -\frac K2 \, \Big)  \ +\  \Vol_{\rm ren}\ +\  {\mathcal O}(\varepsilon)\, .
\end{equation*}
Thus, remembering that the Gau\ss\  curvature equals $\bar J$,  we see that the anomaly for closed hypersurfaces and singular metrics defined by a conformal unit defining density
is
\begin{equation}\label{Willd=2}
{\mathcal A}=\pi \chi -\frac 14 
\, \int_\Sigma \bm K\, , 
\end{equation}
where the Euler characteristic 
$\chi$ of $\Sigma$ is clearly conformally invariant and the rigidity density is a local conformal hypersurface invariant. 
Hence ${\mathcal A}$ depends only on the conformal embedding.
Of course, the integral of intrinsic scalar curvature does not contribute to the functional gradient of ${\mathcal A}$ so that
for Euclidean ambient spaces, 
  Equation~\nn{K2J}
  shows that the only remaining variational term is
  the integral of mean curvature-squared, or in other words the classical Willmore energy functional.

\subsubsection{Spaces embedded in 4-manifolds}

Here we need the square of the Laplace--Robin operator acting on $1/{{\bm \tau}}$ and its cube acting on a log-density. The former is given in Equation~\nn{L2taui} and for $d=4$ yields
$$
\D^2\frac1{\bm \tau}\, \stackrel\Sigma=\, [\hh g\, ;\,  -2\J\, 
]\, .
$$
The cubic computation is more involved although significantly simplified by calculating in the $\bm \tau$ scale. First we use the definition of the Laplace--Robin operator in Equations~\nn{Ldef},~\nn{Dlog} and find along the hypersurface
$$
\D^3\,\log\bm \tau\stackrel\Sigma=\big[\hh g\, ;\, \big(2\nabla_n\circ\bar\sigma(\Delta-\J\, )\big)(2\rho -\bar\sigma \J\, )\big]\stackrel\Sigma=[\hh g\, ;\, 2(\Delta-\J\, )(2\rho -\bar\sigma \J\, )]\, .
$$
The second equality above used that for a conformal unit defining density $\nabla_n\bar\sigma\stackrel\Sigma=1$. Furthermore,
along $\Sigma$ we also have the operator identities (see~\cite{GW15})
\begin{equation}\label{box}\Delta\circ \bar\sigma \stackrel\Sigma=
2\nabla_n+d\, H\, \mbox{ and }\, 
\Delta\stackrel\Sigma=
\bar\Delta+\nabla_n^2+(d-2)H\nabla_n\, .
\end{equation}
Hence
$$
\D^3\,\log\bm \tau\stackrel\Sigma=[\hh g\, ;\,
-4\, \bar\Delta H
+4\, \nabla_n^2\rho
+8H\nabla_n\rho
-4(\nabla_n+H)\J
\, ]\, .
$$
The quantities appearing above have been computed in~\cite[Lemmas 6.6 \& 6.8]{GW15},
in particular
\begin{equation}\label{1}
\nabla_n\rho\stackrel\Sigma=\Rho_{ab}n^an^b\, +\, \frac{K}{d-2}\, ,
\end{equation}
and
\begin{eqnarray}
\qquad \nabla_n^2\rho&\stackrel\Sigma=&
(\nabla_n + H)\J 
-\frac{1}{(d-2)(d-3)}\, \Big(
\nablab^a\nablab^b\, \IIo_{ab}+(d-2)(d-4)\, \IIo^{ab}\bar\Rho_{ab}
\Big)
\label{2}
\\[2mm]
&&-\ \frac{d-2}{d-3}\ \,  \IIo^{ab}\, {\mathcal F}_{ab}
-\ 
\nablab^a(\Rho_{ab}n^b)^\top
-H\big((d-2)\Rho_{ab}n^an^b
+K\big)\, .
\nonumber
\end{eqnarray}
Also, we have the hypersurface identity
\begin{equation}\label{3}
\bar\Delta H=\frac{1}{d-2}\,\nablab^a\nablab^b\, \IIo_{ab}-\nablab^a(\Rho_{ab}n^b)^\top\, .
\end{equation}
Orchestrating the above gives
$$
\D^3\,\log\bm \tau\stackrel\Sigma=[\hh g\, ;\,-4\nablab^a\nablab^b \, \IIo_{ab}-8\, \IIo^{ab}\, {\mathcal F}_{ab}]\, .
$$
Up to the leading divergence (and so non-variational) term, this matches the higher Willmore energy density found for embedded spaces in~\cite{GGHW15}.

A useful check of our result is the linear shift property of the $Q$-curvature discussed in the introduction: For that, notice that under a conformal transformation $g\mapsto \Omega^2g$, we have 
$$
\nablab^a \nablab^b \, \IIo_{ab}\mapsto \Omega^{-3} \big[(\nablab^a \nablab^b \, \IIo_{ab})
+2\big(\IIo^{ab}\nablab_a\nablab_b+(\nablab_a \, \IIo^{ab})\nablab_b
\big)
\log \Omega\big]\, ,
$$
which implies the correct shift transformation:
$$
Q:=4\, \big(\nablab^a\nablab^b \, \IIo_{ab}+2\, \IIo^{ab}\, {\mathcal F}_{ab}\big)\mapsto
\Omega^{-3} \big(Q - \GJMS_3 \log\Omega)\, ,
$$
where the third order extrinsic conformal ``Laplacian power'' $P_3$ acts on weight zero, scalar densities in host dimension $d=4$ according to
$$\GJMS_3:=-8\, \IIo^{ab}\nablab_a\nablab_b-8\, (\nablab_a \, \IIo^{ab})\nablab_b
=-8\nablab_a\circ \IIo^{ab}\circ 
\nablab_b\, ,$$
see~\cite[Proposition 8.5]{GW15}. 
Like the standard, even dimension parity GJMS Laplacian powers of~\cite{GJMS}, the above operator
is formally self adjoint and annihilates constant functions. 
In using the term Laplacian power for odd dimensional hypersurfaces, we  view the trace-free second fundamental form as a metric-like tensor.

Altogether, the regulated volume in the scale~$\bm \tau$ reads
\begin{equation}\label{d=3Will}
\begin{split}
\Vol_\varepsilon(D,\Sigma)\ =\  \frac{\int_\Sigma \ext \! A^{\bar g}}{3\varepsilon^3}\: &+\: 
\frac{\int_\Sigma \ext \! A^{\bar g} H}{\varepsilon^2}\: -\: 
\frac{\int_\Sigma \ext \! A^{\bar g} \J
}{2\varepsilon}\\[2mm] &+\: \frac13 \,
\log\varepsilon \, \int_\Sigma \ext \! A^{\bar g}\Big(
\nablab^a\nablab^b \, \IIo_{ab}+2\, \IIo^{ab}\, {\mathcal F}_{ab}\Big)  \ +\  \Vol_{\rm ren}\ +\  {\mathcal O}(\varepsilon)\, .
\end{split}
\end{equation}

\section{The functional gradient}\label{functgrad}

We now use our boundary calculus to compute the variation of the anomaly. This confirms 
 the result of~\cite{Grahamnew} that the functional gradient of the anomaly~${\mathcal A}$, for singular metrics determined by a conformal unit defining density, is the obstruction density~$\bm{ B}$. More precisely:
\begin{equation}\label{varn}
\frac{\, \updelta{\mathcal A}\, }{\updelta\Sigma}\, = \, \frac{d(d-2)}{2} \, \bm B\, ,
\end{equation}
where $\updelta {\mathcal A}/\updelta\Sigma$ denotes the functional gradient with respect to  variations of the embedding of the hypersurface~$\Sigma$.

In~\cite{GGHW15}, a holographic approach for variations of embeddings was developed and exploited for computations of higher Willmore energy variations. This method is well adapted to the current situation where our starting point is the bulk integral expression in Equation~\nn{regvol} for the regulated volume.
The main idea of the method is as follows:
Given a  functional $E(\Sigma)=\int_\Sigma P$ where $P$ is a hypersurface invariant, we first express $\Sigma$ as the zero locus of a defining function $\sigma_0$ and $P$ as the restriction of a preinvariant  ${\mathcal P}$. As explained in Section~\ref{Dirac}, we can then express $E(\Sigma)$ as a bulk integral 
$$E[\sigma_0]=\int_{\, \widetilde{\! D}} \ext \! V^g \delta(\sigma) \sqrt{{\mathcal S}} \, \mathcal P\, ,$$
where~$\mathcal S$ is the $\bm{\mathcal S}$-curvature of $\bm \sigma_0=[\hh g\, ;\, \sigma_0]$ in a scale $g$. 
Then the embedding can be varied by functionally varying the defining function~$\sigma_0$. 
For that we introduce a smooth one-parameter family of hypersurfaces $\Sigma_t$ such 
$\Sigma_0=\Sigma$ and $\Sigma_t=\Sigma$ outside some compactly supported region. We also define the variational operator
$\updelta(\cdot):=\frac{d\, \cdot\, }{dt}\big|_{t=0}$. The functional gradient  is then defined by 
$$
\updelta E[\sigma_0]=:\int_\Sigma \ext \! A^{\bar g}\,  \hat\updelta \sigma_0\ 
\frac{\updelta{\mathcal A}}{\updelta\Sigma}\ ,
$$
where $\hat\updelta \sigma_0$ is the hypersurface invariant defined by the preinvariant $\updelta \sigma_0/|\nabla \sigma_0|$ (this is the variational analog of the preinvariant formula for the unit normal in Equation~\nn{unorm}).

For conformal hypersurface invariants defined in terms of the jets of a conformal unit defining density, there is one further useful simplification afforded by the holographic variational calculus.
Namely, the uniqueness property of conformal unit defining densities ({\it i.e.},  $\bar\sigma(\sigma_0)$ see~\cite[Theorem 4,5]{GW15}) ensures that the integrand $Q[\bar\sigma(\sigma_0)\big]$ of the anomaly is a preinvariant.  Since  the  functional derivative $\updelta\bar \sigma(\sigma_0)/\updelta\sigma_0$  along $\Sigma$ is given by $\updelta\sigma_0/|\nabla \sigma_0|$ (this follows directly from the functional dependence $\bar \sigma(\sigma_0)$ implied by the expansion in Equation~\nn{expansion}), we have
$$\updelta\bar \sigma\stackrel\Sigma=\frac{\updelta \sigma_0 }{
|\nabla \sigma_0|}\, = \hat \updelta \sigma_0\, .$$ 
%
%
Hence the functional gradient can be computed by functionally varying~$\bar\sigma$. Our strategy, therefore is to consider the one parameter family of regulated volume integrals
$$\Vol_\varepsilon(D,\Sigma_t)=
 \Vol_\varepsilon(D,\bar \sigma_t)=\int_{\, \widetilde{\! D}}\frac{\ext \! V^g}{\bar\sigma_t^d}\, \theta(\bar\sigma_t/\tau-\varepsilon)
$$
corresponding to conformal unit densities $\bm{\bar\sigma}_t$ of hypersurfaces $\Sigma_t$.
Then, since we have already shown that $\Vol_\varepsilon$ is the sum of a Laurent series in~$\varepsilon$ plus $\log \varepsilon$ times the anomaly, we need only compute the $\log \varepsilon$ contribution to  $\updelta=\frac{d\, \cdot\, }{dt}\big|_{t=0}$ of the above expression.

%
%
%
%
%

\subsection{Varying the defining density}

The variation of the regulated volume breaks into two terms
$$
\updelta \Vol_\varepsilon(D,\Sigma)=-d
\int_{D} {\ext \! V^g}\, \frac{\updelta \bar \sigma}{\bar \sigma^{d+1}}
\, \theta(\bar\sigma/\tau-\varepsilon)\ + \ 
\int_{\, \widetilde{\! D}} \frac{\ext \! V^g}{\bar \sigma^d}\, 
\frac{\updelta \bar \sigma}{\tau}\, \delta(\bar \sigma/\tau-\varepsilon)\, . 
$$
By performing the delta function integration,
the second term can be rewritten as $\varepsilon^{-d}$ multiplying a hypersurface integral:
$$
\frac{1}{\varepsilon^d}\, \int_{\Sigma_\varepsilon}\ext \! A^{\bar g_\varepsilon}\, 
\frac{\updelta \bar\sigma}{\tau^{d+1}}
\, .
$$
Since this hypersurface integral depends smoothly on $\varepsilon$ and is well-defined at $\varepsilon=0$, the above display yields some Laurent series in~$\varepsilon$ and does not produce a $\log\varepsilon$ contribution. 
Hence we must focus on the first term in the functional gradient above. For this we will need a pair of lemmas.

\begin{lemma}\label{Ldsigma}
Let $\bm {\bar\sigma}_t$ be a smooth one parameter family of conformal unit defining densities with $\bm {\bar\sigma}_0=\bm {\bar\sigma}$.
Then the variation $\updelta \bm{\bar\sigma}$ obeys
$$
\D \updelta \bm{\bar\sigma} = \frac{d^2}2\, \bm {\bar \sigma}^{d-1} \Big(  \bm{\mathcal B} \, \updelta \bm {\bar \sigma}+\frac1d\, \bm {\bar \sigma}\, \updelta \bm {\mathcal B}\Big)\, .
$$
\end{lemma}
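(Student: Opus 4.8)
The plan is to reduce the lemma to a single algebraic identity relating the Laplace--Robin operator to the $\bm{\mathcal S}$-curvature, and then to exploit a \emph{polarization} symmetry that produces precisely the factor of $\tfrac12$ in the claimed formula.

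First I would record the pointwise identity
\begin{equation*}
\D^{\bm s}\bm s = d\,\bm{\mathcal S}(\bm s)\, ,
\end{equation*}
valid for an arbitrary weight-one density $\bm s$, where $\D^{\bm s}$ is the Laplace--Robin operator built from $\bm s$ via \nn{Ldef}. This is a direct computation in a scale: with $w=1$, $n_a=\nabla_a s$ and $\rho=-\tfrac1d(\Delta+\J)s$, one has $\nabla_n s=n^2$ and $s(\Delta+\J)s=-d\rho s$, so that $\D^{\bm s}\bm s=[\hh g\,;\,d n^2+2d\rho s]=d\,\bm{\mathcal S}$, using the scale formula $\bm{\mathcal S}=[\hh g\,;\,n^2+2\rho s]$ following \nn{Scurvy}. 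Applying this to each member of the family and invoking the defining relation \nn{one} for conformal unit defining densities (Theorem~\ref{BIGTHE}) gives $\D_t\bm{\bar\sigma}_t = d\,\bm{\mathcal S}(\bm{\bar\sigma}_t)=d\big(1+\bm{\bar\sigma}_t^{d}\bm{\mathcal B}_t\big)$, an identity I can then differentiate in $t$.

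The crucial observation is that $\D^{\bm s}\bm v$, regarded as a function of the weight-one pair $(\bm s,\bm v)$, is a symmetric bilinear form. Indeed, writing it out in a scale,
\begin{equation*}
\D^{\bm s}\bm v=\big[\hh g\,;\,d\,(\nabla^a s)(\nabla_a v)-s\,\Delta v-v\,\Delta s-2\J\, s v\big]\, ,
\end{equation*}
which is manifestly symmetric under $\bm s\leftrightarrow\bm v$ and whose diagonal restriction reproduces $\D^{\bm s}\bm s=d\,\bm{\mathcal S}(\bm s)$. Hence $d\,\bm{\mathcal S}$ is the quadratic form associated to this bilinear form, and the only subtle point in the proof---that the operator $\D_t$ itself depends on $t$---is thereby handled automatically, since differentiating a quadratic form yields twice its polarization:
\begin{equation*}
\updelta\big(\D_t\bm{\bar\sigma}_t\big)=2\,\D\,\updelta\bm{\bar\sigma}\, ,
\end{equation*}
where $\D=\D^{\bm{\bar\sigma}}$ is the operator of the base point. (Equivalently, one checks directly that the contribution from varying the operator, $(\updelta\D)\bm{\bar\sigma}$, coincides with $\D\,\updelta\bm{\bar\sigma}$, since both equal the symmetric expression above with one slot filled by $\bm{\bar\sigma}$ and the other by $\updelta\bm{\bar\sigma}$.)

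Finally I would differentiate the right-hand side by the Leibniz rule, $\updelta\big(d\,\bm{\bar\sigma}^{d}\bm{\mathcal B}\big)=d\big(d\,\bm{\bar\sigma}^{d-1}\bm{\mathcal B}\,\updelta\bm{\bar\sigma}+\bm{\bar\sigma}^{d}\,\updelta\bm{\mathcal B}\big)$, equate this with $2\,\D\,\updelta\bm{\bar\sigma}$, and divide by $2$ to recover the stated formula. The main obstacle here is conceptual rather than computational: one must correctly account for the $t$-dependence of the Laplace--Robin operator, and it is exactly the symmetric-bilinear-form (polarization) viewpoint that resolves this and furnishes the factor of $\tfrac12$.
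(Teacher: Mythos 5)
Your proof is correct, and it shares the paper's overall skeleton: both arguments vary the defining relation $\bm{\mathcal S}_t = 1 + \bm{\bar\sigma}_t^{\hh d}\, \bm{\mathcal B}_t$ of Theorem~\ref{BIGTHE} and apply the Leibniz rule to the right-hand side. Where you genuinely differ is in how the left-hand side is handled. The paper computes $\updelta \bm{\mathcal S}$ directly from the scale formula~\nn{Scurvy} and then recognizes the result, by inspection of~\nn{Ldef} at $w=1$, as $\tfrac{2}{d}\, \D\hh \updelta\bm{\bar\sigma}$; you instead observe that $(\bm s,\bm v)\mapsto \D^{\bm s}\bm v$ is a symmetric $\mathbb{R}$-bilinear form on weight-one densities whose quadratic form is $d\hh\bm{\mathcal S}(\bm s)$, so that $\updelta\big(d\hh\bm{\mathcal S}_t\big)=\updelta\big(\D_t\hh\bm{\bar\sigma}_t\big)=2\hh\D\hh\updelta\bm{\bar\sigma}$ by polarization. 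Both of your ingredient identities check out against~\nn{Ldef}: $\D^{\bm s}\bm s = [\hh g\, ;\, d\hh n^2+2d\rho s\hh] = d\hh\bm{\mathcal S}(\bm s)$, and $\D^{\bm s}\bm v=[\hh g\, ;\, d\hh(\nabla^a s)(\nabla_a v)-s\Delta v - v\Delta s - 2\J\hh s v\hh]$ is manifestly symmetric, with the paper's definition of the Laplace--Robin operator licensing $\D^{\bm s}$ for an arbitrary weight-one density $\bm s$, as your argument requires. What your packaging buys is conceptual: it explains the origin of the factor $\tfrac12$ and makes the $t$-dependence of the operator $\D_t$ explicit and harmless (a point the paper's version sidesteps entirely by varying $\bm{\mathcal S}$ rather than $\D_t\hh\bm{\bar\sigma}_t$, so no gap exists there either way). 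Incidentally, your diagonal identity $\D\hh\bm{\bar\sigma}=d\hh\bm{\mathcal S}$ also follows without computation from the paper's own machinery: apply the commutator~\nn{algebra} to the unit density and use $\D 1=0$.
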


\begin{proof}
The key is to vary the defining relation for a conformal unit defining density
$$
\bm {\mathcal S}_t=1+\bm {\bar \sigma}^d_t \, \bm {\mathcal B}_t\, . 
$$
The variation of the $\bm{\mathcal  S}$-curvature is easily computed
\begin{eqnarray*}
\updelta\bm {\mathcal S}&=&\Big[\hh g\, ;\ 
2 g^{ab} (\nabla_a \bar \sigma) \nabla_b \updelta \bar \sigma
-\frac{2\updelta \bar\sigma}{d}(
\Delta +\J\, )\bar \sigma
-\frac{2\bar\sigma}{d}(\Delta+\J\, )\updelta\bar\sigma\Big]\\[2mm]
&=&
\Big[\hh g\, ;\, 
2\, \Big(\nabla_n+\rho-\frac{\bar \sigma}d(\Delta+\J\, )\Big)\, \updelta \bar\sigma\Big]=\frac{2}{d}\D \updelta \bm{\bar \sigma}\, ,
\end{eqnarray*}
while the variation of the right hand side
 is $d\,\bm{\bar\sigma}^{d-1} \bm{\mathcal B}\, \updelta \bm{\bar\sigma}+
\bm{\bar \sigma}^d \, \updelta\bm{\mathcal B}$.
\end{proof}

Because $\updelta \bm{\bar \sigma}$ is a weight~1 density, it is not difficult to verify (see~\cite[Lemma 3.1]{GW}) that  the algebra~\nn{algebra} implies that
$$
\D\Big( \frac{\updelta\bm{\bar\sigma}}{\bm{\bar\sigma}^{d+1}}\Big)=
 \frac{\D \updelta\bm{\bar\sigma}}{\ \bm{\bar\sigma}^{d+1}\, }\, ,
$$
whence via Lemma~\ref{Ldsigma} we have
\begin{equation}\label{LRvar}
\D\Big( \frac{\ \updelta\bm{\bar\sigma}\,}{\bm{\bar\sigma}^{d+1}}\Big)
 = \frac {d^2}{2\, \bm{\bar\sigma}^2}\,  \big( \bm{\mathcal B} \, \updelta \bm {\bar \sigma}
 +\frac1d\, \bm {\bar \sigma}\, \updelta \bm {\mathcal B} \big)\, .
\end{equation}

The second lemma relates the left hand side of the above display to~$\updelta\bm{\bar\sigma}/\bm{\bar\sigma}^{d+1}$. 
\begin{lemma}
Let $\bm f$ be a weight $-d$ density. Then (for any defining density $\bm \sigma$), 
$$
\bm{\mathcal S}\,  \bm f \, =\,  \frac{\bm\sigma}{d}\ \! \D \bm f+\big[\hh g\, ;\, \nabla_cj^c\big]\, ,
$$
where $j^c=\frac1d \big(\sigma \nabla^c (\sigma f)+(d-1) \sigma(\nabla^c\sigma)f\big)$.
\end{lemma}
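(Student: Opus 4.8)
The plan is to reduce the claimed identity to an elementary computation in a fixed scale $g$ and then promote it to a density identity. Writing $\bm f=[\hh g\, ;\, f]$ and $n_a:=\nabla_a\sigma$, I would first simplify the current: expanding $\nabla^c(\sigma f)=n^c f+\sigma\nabla^c f$ collapses the given expression to the compact form $j^c=\sigma n^c f+\tfrac1d\sigma^2\nabla^c f$, which is more convenient to differentiate.

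Next I would compute the two sides in the scale $g$. For the divergence, the Leibniz rule together with $\nabla_c n^c=\Delta\sigma$, $n^c\nabla_c=\nabla_n$ and $n^2=|\nabla\sigma|^2$ gives $\nabla_c j^c=n^2 f+\sigma(\Delta\sigma)f+\tfrac{d+2}d\sigma\nabla_n f+\tfrac1d\sigma^2\Delta f$; here I would substitute $\Delta\sigma=-d\rho-\J\sigma$, which is nothing but the definition $\rho=-\tfrac1d(\Delta+\J)\sigma$ from Section~\ref{DD}. For the other term I would insert the weight $w=-d$ into the definition~\nn{Ldef} of the Laplace--Robin operator to obtain $\tfrac\sigma d\D\bm f=-\tfrac{d+2}d\sigma\nabla_n f+(d+2)\sigma\rho f-\tfrac1d\sigma^2\Delta f+\J\sigma^2 f$.

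The decisive step is simply to add these two expressions. The terms proportional to $\sigma\nabla_n f$, to $\sigma^2\Delta f$, and to $\J\sigma^2 f$ cancel exactly, while the $\rho$-terms combine to $2\rho\sigma f$; what survives is precisely $(n^2+2\rho\sigma)f$, which by~\nn{Scurvy} is the scale-$g$ representative of $\bm{\mathcal S}\bm f$. This establishes $\bm{\mathcal S}\bm f=\tfrac{\bm\sigma}d\D\bm f+[\hh g\, ;\, \nabla_c j^c]$ in the scale $g$.

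Finally I would address scale independence. Since $\bm{\mathcal S}\bm f$ has weight $-d$ and $\tfrac{\bm\sigma}d\D\bm f$ has weight $1+(-d-1)=-d$, their difference is a well-defined weight $-d$ density; having matched it to $[\hh g\, ;\, \nabla_c j^c]$ in one scale, the identity holds as a density identity. Equivalently, one checks directly that $\bm j_a:=[\hh g\, ;\, j_a]$ is a weight $2-d$ covector-valued density, so that $\divergence\bm j$ is the well-defined weight $-d$ density of Section~\ref{confdenintro}. I do not expect any genuine obstacle here: the content is the bookkeeping of the four pairwise cancellations, and the only point requiring a line of care is confirming that the divergence term is a bona fide density rather than merely a scale-dependent function.
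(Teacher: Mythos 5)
Your proof is correct, but it takes a genuinely different route from the paper's. I verified your computation: the compact form $j^c=\sigma n^c f+\tfrac1d\sigma^2\nabla^c f$, the divergence $\nabla_cj^c=n^2f+\sigma(\Delta\sigma)f+\tfrac{d+2}d\sigma\nabla_nf+\tfrac1d\sigma^2\Delta f$, the $w=-d$ specialization of~\nn{Ldef}, and the cancellations leaving $(n^2+2\rho\sigma)f$, which is the scale-$g$ representative of $\bm{\mathcal S}\bm f$ by~\nn{Scurvy}, are all right. The paper instead argues structurally: since $(d+2\w)\bm f=-d\,\bm f$, the solution generating algebra~\nn{algebra} gives $\bm\sigma\D\bm f=d\,\bm{\mathcal S}\bm f+\D(\bm\sigma\bm f)$, and then Theorem~\ref{parts} applied to the pair $(1,\bm\sigma\bm f)$, together with $\D 1=0$, identifies $\D(\bm\sigma\bm f)$ with minus the divergence of precisely the quoted current (the current of Theorem~\ref{parts} at $w=1-d$ is $d$ times your $j^c$). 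That route is shorter, explains where the otherwise unmotivated combination $\sigma\nabla^c(\sigma f)+(d-1)\sigma n^c f$ comes from, and delivers the weight $2-d$ covariance of $\bm j_a$ for free from Theorem~\ref{parts}. Your computation buys independence from both the commutator algebra and Theorem~\ref{parts}, and serves as a concrete check of every coefficient. One small remark on your final paragraph: since the bracket $\big[\hh g\,;\,\nabla_cj^c\big]$ by definition denotes the weight $-d$ density with that representative in the scale $g$, and $g$ was arbitrary from the outset, matching representatives in the single scale $g$ already establishes the density identity; the covariance of $\bm j_a$ is then a corollary rather than an additional hypothesis you need to verify.
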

\begin{proof}
Firstly, since $(d+2\w)\bm f=-d \bm f$, the algebra~\nn{algebra} implies
$$
\bm \sigma \D \bm f = d\, \bm{\mathcal S} \, \bm f + \D (\bm \sigma  \bm f)\, .
$$
Then Theorem~\ref{parts} applied to densities $1$ and $\bm \sigma \bm f$ yields
$$
\D(\bm \sigma  \bm f)+\big[\hh g\, ;\, \nabla_c \big(\sigma\nabla^c(\sigma f)+(d-1)\,  \sigma n^c f
\big)\big]=0\, .
$$
Here we used the identity $\D 1=0$.
\end{proof}

For the case of a conformal unit defining density and $\bm f=\updelta \bm {\bar \sigma}/\bm {\bar \sigma}^{d+1}$, applying this lemma to Equation~\nn{LRvar} and subsequently using Equation~\nn{one}
gives
\begin{equation}\label{dsos}
  \frac{\ \updelta\bm{\bar\sigma}\,}{\bm{\bar\sigma}^{d+1}} =\frac{d-2}{2}\,  \bm{\mathcal B}\,  \, \frac{\updelta \bm {\bar \sigma}}{\bm {\bar \sigma}}+ \frac12\,  \updelta \bm {\mathcal B}
 +\big[\hh g\,;\, \nabla_c j^c\big]\, ,\end{equation}
where 
$$
j^c=\frac1d\left(\bar \sigma\,  \nabla^c\Big(\frac{\updelta\bar \sigma}{\bar \sigma^d}\Big)+(d-1)\, \frac{ n^c\, \updelta\bar\sigma}{\bar\sigma^d} \right)\, .
$$
The first term on the right hand side of~\nn{dsos} will be responsible for the
$\log \varepsilon$ contribution. Before studying it in detail, we first establish that the other two terms can only produce Laurent series contributions:
The obstruction density and therefore its variation are regular along~$\Sigma$ while the $\bm {\mathcal S}$-curvature is unity there. Hence the second term on the right hand side of  Equation~\nn{dsos} can only produce terms analytic in~$\varepsilon$.
For the total divergence term we employ Green's theorem, 
$\int_{D_\varepsilon} \ext \! V^g \, \nabla_a j^a=\int_{\partial D_\varepsilon} \ext \! A^{\bar g_{_{\scalebox{.8}{$\sss\partial\! D_\varepsilon$}}}}\,  \hat n_a^{\scalebox{.8}{$\sss\partial\! D_\varepsilon$}} j^a$ where $\hat n_a^{\scalebox{.8}{$\sss\partial\! D_\varepsilon$}}$ is the unit outward normal. We thus find a  contribution to the variation proportional to
$$
\int_{\Sigma_\varepsilon} \ext \! A^{\bar g_\varepsilon}\,  \hat n_c^\varepsilon \, \left(\bar \sigma\,  \nabla^c\Big(\frac{\updelta\bar \sigma}{\bar \sigma^d}\Big)+(d-1)\, \frac{ (\nabla^c\bar\sigma)\, \updelta\bar\sigma}{\bar\sigma^d} \right)\, ,
$$
where we have dropped the contribution from the surface term integrated over~$\partial D_\varepsilon\backslash \Sigma_\varepsilon$ as this term is   not responsible for a $\log\varepsilon$ contribution. In the above display the outward unit normal vector to $\Sigma_\varepsilon$ is given by
$$
\hat n^\varepsilon_a=-\left.\frac{\nabla_a(\bar\sigma/\tau)}{|\nabla(\bar\sigma/\tau)|}\right|_{\Sigma_\varepsilon}=\left.-\frac{n_a -\varepsilon\nabla_a \tau}{\sqrt{n^2-2\, \varepsilon \nabla_n\tau +\varepsilon^2 |\nabla \tau|^2}}\, \right|_{\Sigma_\varepsilon}\, ,\qquad n_a:=\nabla_a\bar \sigma, 
$$ 
because $\bar\sigma/\tau-\varepsilon$ is a defining function for $\Sigma_\varepsilon$.
Since $n_a$ is well-defined along $\Sigma$, it follows that $\hat n^\varepsilon_a$ is regular around $\varepsilon=0$. Furthermore, along $\Sigma_\varepsilon$ we have
$$\bar \sigma\,  \nabla^c\Big(\frac{\updelta\bar \sigma}{\bar \sigma^d}\Big)+(d-1)\, \frac{ (\nabla^c\bar\sigma)\, \updelta\bar\sigma}{\bar\sigma^d} =
-
\frac1{\varepsilon^d}\,\frac{n^c\, \updelta\bar\sigma-\varepsilon\, \tau\,  \nabla^c \updelta\bar\sigma}{\tau^d}\, .
$$
Since $\updelta\bar \sigma$ and $n^c$ are regular as $\varepsilon$ approaches zero and 
$\bm \tau$ is a true scale, the above is a Laurent series in $\varepsilon$. This establishes that the total divergence term of~\nn{dsos} yields a Laurent series in~$\varepsilon$ but no $\log\varepsilon$ term.

It now remains only to study the contribution to the variation given by
$$
-\frac {d(d-2)}2\, \int_{D_\varepsilon}\, \frac{\bm{\mathcal B}\, \updelta \bm{\bar\sigma}}{ \bm{\bar\sigma}}\, .
$$ 
As discussed in Section~\ref{distributions}, we can employ $\bar\sigma$ 
 as a coordinate in a collar neighborhood of~$\Sigma$. Ignoring a finite contribution, it will be sufficient to restrict the above integral to this collar. Since~$|\nabla\bar\sigma|=1$ along $\Sigma$ for any scale $g$,  the volume form can be written as
 $$
\ext \! V^g = d\bar \sigma\,   \ext \! A(\bar\sigma)
$$
where $\ext \! A(\bar\sigma)$ is a measure for constant $\bar \sigma$ hypersurfaces $\Sigma_{\bar \sigma}$. Then by Fubini's theorem  
the collar restriction of
integral displayed above 
is (in some scale $g$ where $\bm{\mathcal B}=[g\, ;\, {\mathcal B}]$)
$$
-\frac{d(d-2)}2\, \int_\varepsilon^\star \frac{d\bar\sigma}{\bar\sigma} \int_{\Sigma_{\bar \sigma}} \ext \! A(\bar\sigma) \, \updelta \bar\sigma\, {\mathcal B}\, ,
$$
where $\star$ indicates our choice of collar neighborhood.
Noting that 
$\ext \! A|_\Sigma=\ext \! A(0)=\ext \! A^{\bar g}$, and using that  the obstruction density is non-singular along $\Sigma$  
it follows 
that the  behavior of this integral  is 
$$
-\frac{d(d-2)}2\,
\log(1/\varepsilon)\, 
 \int_{\Sigma} \ext \! A^{\bar g} \, \updelta \bar\sigma\, B+{\mathcal O}(\varepsilon^0) \, .
$$
Remembering that $\updelta \bar\sigma \stackrel\Sigma=\hat\updelta \sigma_0$,
we can read off the variation of the 
anomaly from the above display. Thus we find
that the functional gradient of the regulated volume is a Laurent series plus the desired log term:\begin{equation*}
\bm {\rm Laurent}(\varepsilon) +\frac {d(d-2)}2 \,\log\varepsilon\  \bm B\, .
\end{equation*}
Equation~\nn{varn} for the functional gradient of the anomaly follows accordingly.

\subsection{Examples}

Let $(M,g)$ be a  Riemannian $d$-manifold. Since we are given a metric~$g$ as data, we may define a true scale $\bm \tau=[\hh g\, ;\, 1]$. Now suppose we are further given a hypersurface $\Sigma$ as the zero locus of some function $\sigma_0:M\to {\mathbb R}$. As explained in Section~\ref{AHs}, we may improve this to a {\it unit defining function} meaning that $|\nabla \sigma|^g=1$ also away from~$\Sigma$. 
This yields a corresponding defining density for ~$\Sigma$
$$
\bm \sigma=[\hh g\, ;\,  \sigma]\, ,
$$
which, for our renormalized volume computation, we wish to further  improve  to a conformal unit defining density~$\bm {\bar \sigma}$. A closed form algorithm for this was given in~\cite{GW15}. In dimension~$d=3$  (see~\cite{GGHW15} for explicit expressions in dimensions $d=4,5$)
the algorithm gives $\bm{\bar \sigma}=[\hh g\, ;\, \bar \sigma]$ where (here $n:=\nabla\sigma$ rather than $\nabla\bar\sigma$)
$$
\bar \sigma =  \sigma\, \Big(1+\frac\sigma4\, 
\nabla.n+\frac{\sigma^2}{12}\, \Big[
2\, (\nabla.n)^2+\nabla_n\nabla.n
+4\, \J\, 
\Big]
\Big)\, .
$$
An elementary computation shows that the $\bm {\mathcal S}$-curvature of the above conformal unit defining scale $\bm{\bar \sigma}$ is
$$
\bm {\mathcal S}=1-\frac{
\ \bm{\sigma}^3}{12}\, \Big[\hh g\, ;\,
2\, \Delta\nabla.n+2\, \nabla_n^2\, \nabla.n+8\, (\nabla.n)\, \nabla_n\nabla.n+3\, (\nabla.n)^3+8\, \nabla.n\, \J + 8\, \nabla_n\J\,  \Big]\, .
$$
Then a simple calculation based on the above formula~\cite{CRMouncementCRM}---or a general holographic formula, or a general recursion (see~\cite{GW15})---gives the obstruction density for surfaces in terms of the extrinsic BGG operator of Equation~\nn{dualBGG}
\begin{equation}\label{surfaceB}
-3 \bm B = \D_{ab}^*\,  \bm{\IIo}^{ab}\, .
\end{equation}
A formula for the generally conformally  curved surface obstruction density was first found in~\cite{ACF} (see also~\cite{YuriThesis} for a related result); this
reduces to the Euclidean result~\nn{EWillmore} when the host metric is conformally flat. The two-dimensional obstruction density~$\bm B$ is well known
to be the functional gradient of the Willmore energy 
 $-\frac16\int_\Sigma\bm K=-\frac 16\int_\Sigma  \bm{\IIo}_{ab}\, \bm{\IIo}^{ab}$. 
Since the Euler characteristic does not contribute to the functional gradient, this establishes that the variation of the anomaly~${\mathcal A}$ in Equation~\nn{Willd=2}
is given by $\frac{3}{2}\bm B$ in accordance with Equation~\nn{varn}.

In dimension $d=4$,   the obstruction density~$\bm B$ was computed explicitly in~\cite{GGHW15} by using the holographic formula of~\cite[Theorem 8.11]{GW15}:
\begin{equation*}
\begin{split}
\bm B\, =\, \frac16\Big[
\D_{ab}^* \big(2\, \bm{\IIo}^{c(a}\, \bm{\IIo}^{b)\circ}_c
&+\bm{\mathcal F}^{(ab)\circ}\big)
-\ \bm{\IIo}^{ab}\! {\bm B}_{ab}
\\[1mm]&
+\frac12 {\bm K}^2
+\bm{\mathcal F}^{(ab)\circ}
\big(\, 
\bm{\IIo}^c_a\bm\IIo_{bc}^{\phantom{c}}+2
\bm{\mathcal F}_{ab}
\big)
+
(\bm{\hat n}^d\bm W_{\!dabc})^{\!\top} \bm{\hat n}^e \bm W_{\!e}{}^{abc}
\Big]\, .
\end{split}
\end{equation*}
This  density was proved to be  the functional gradient of  $\frac 16\int_\Sigma \bm{\IIo}_{ab}\, \bm{\mathcal F}^{ab}$ (see~\cite[Proposition~1.2]{GGHW15}).
For compactly supported variations, the double divergence term in the three dimensional extrinsic $Q$-curvature formula~\nn{d=3Will} does not contribute to the functional gradient. Hence the variation of the $d=4$ anomaly ${\mathcal A}$ is  precisely $4\bm B$, which once again agrees with Equation~\nn{varn}.

\section*{Acknowledgements}

We would like to thank Robin Graham for showing us the details of his proof that the obstruction density is variational. This work would not have been possible without his input.
 Both authors gratefully acknowledge support from the Royal Society of New Zealand via Marsden Grant 13-UOA-018 and A.W. was  supported in part by a Simons Foundation Collaboration Grant for Mathematicians ID 317562.

\appendix

\section{Explicit metrics}

Given an explicit metric and hypersurface
$$
ds^2=g\, ,\quad \Sigma={\mathcal Z}(\sigma)\, ,
$$
and a choice of scale $\bm \tau=[\hh g\, ; \, \tau]$, with the aid of computer software (see for example~\cite{grtensor})  it is not difficult to 
calculate the divergences and anomaly for the regulated volume
for a singular metric determined by a asymptotic solution to the singular Yamabe 
as described in Theorem~\ref{BIGTHE}.
These are given by
 our formula:
\begin{equation}\label{Vreg}
\Vol_\varepsilon\ =\ \sum_{k=1}^{d-1}\, 
 \frac{(k-1)!\, \int_\Sigma (-\D)^{d-k-1}\, \frac1{\scalebox{.9}{${\bm \tau}$}^k}}
 {(d-k-1)!\, (d-2)!\, k} \ \frac{1}{\varepsilon^k} \ + \ 
  \frac{\int_\Sigma (-\D)^{d-1}\log{\bm \tau}}{(d-1)!\, (d-2)!}\ \log\varepsilon
 \ + \ {\mathcal O}(\varepsilon^0)\, .
\end{equation}
The divergences will in general depend on the choice of true scale ${\bm \tau}$ while the anomaly given by the coefficient of the logarithm is a 
conformal invariant.
Given~$g$, a natural choice for the scale is $\bm \tau=[\hh g\, ;\, 1]$.
We will compute
the above formula in that scale for explicit four and five dimensional hypersurfaces~$\Sigma$.

\subsection{The Kasner metric}\label{Kasner}

The Kasner metric models spatially inhomogeneous expanding cosmologies; see for example~\cite[Chapter 14]{Landau}. Consider the following metric and hypersurface:
\begin{equation}\label{KM}ds^2=dt^2 + t^{2\alpha} dx^2
+t^{2\beta} dy^2 + t^{2\gamma}dz^2\, ,\quad \Sigma\subset{\mathcal Z}(t-1)\, .\end{equation}
Here $\Sigma$ is some bounded region in the $t=1$ coordinate slice. 
Thus, in this example the hypersurface~$\Sigma$ is not closed, and {\it a priori} the anomaly and divergences can acquire contributions integrated along $\partial \Sigma$, arising from the divergence term in the integration by parts result of Theorem~\ref{BIGTHE}. In fact, by choosing a bulk integration region intersecting~$\Sigma$ orthogonally, these terms vanish for this example. Again we defer a detailed study of these terms
to a sequel article.
We work in Euclidean signature but it is not difficult to extend our results to the physical Lorentzian signature in which $t$ becomes a time coordinate and $\Sigma$ is a spatial region.  

The mean curvature and the traced-square of the second fundamental form for the hypersurface~$\Sigma$ have simple expressions in terms of the parameters $(\alpha,\beta,\gamma)$:
$$
H=\frac{\alpha+\beta+\gamma}{3}\, \qquad
\II^2:=\II_{ab}\II^{ab}=\alpha^2+\beta^2+\gamma^2\, .
$$
Also, the rigidity density of~$\Sigma$ is given by
$$
K:=\IIo_{ab}\, \IIo^{ab}=\II^2-3H^3\, .
$$
Note that along~$\Sigma$, the scalar curvature obeys
$$
\J\, |_\Sigma=-\, \frac{\, K+6H(2H-1)\, }{6}\, .
$$
Imposing the Kasner conditions
$$
H=0=\II^2
$$
on the parameters $(\alpha,\beta,\gamma)$, the metric~$g:=ds^2$ becomes the Ricci-flat Kasner metric, 
but for added generality, we relax these conditions in the following computation.

Denoting $\sigma=t-1$, we introduce the defining density $\bm \sigma=[\hh g\, ,\, \sigma]$ which can, according to Theorem~\ref{BIGTHE}, be improved to a conformal unit defining density~$\bar \sigma$. An explicit recursion for finding $\bm \sigma=[\hh g\, ;\, \bar \sigma]$ is given in~\cite[Proposition 4.9]{GW} (see also the examples in~\cite[Appendix A]{GGHW15}). Applying this recursion we find  
\begin{equation}\label{bs}\bar\sigma=
\sigma\Big(1\, +\, \frac{H}{2}\ \sigma\, 
+\, \Big[
\frac{J|_\Sigma}{6}+\frac{H(H-1)}{2}
\Big]\ \sigma^2
\, -\,
\Big[
\frac{(5H-6)K}{72}-\frac{H(H-2)(H-3)}{24}
\Big]\sigma^3
\Big)\, .
\end{equation}
The corresponding $\bm{\mathcal S}$-curvature obeys 
$$\bm{\mathcal S}=1+\bm \sigma^4\, [\hh g\, ;\, B+{\mathcal O}(\sigma)]\, ,$$
with obstruction density given by $\bm B=[\hh g\, ; B]$ where
$$
B=\frac{K(H-1)^2}{4}\, .
$$
Choosing the true scale ${\bm \tau}=[\hh g\,  ;\,  1]$ and using Equations~\nn{Ldef} and~\nn{Dlog} it is not difficult to compute
$$
\D \frac1{\tau^2}\ \Big|_\Sigma=-4H\, , \qquad
\D^2 \frac1\tau\ \Big|_\Sigma=-2\J\, |_\Sigma
\, ,\qquad
\D^3 \log\tau\, \big|_\Sigma=4K(H-1)\, .
$$
Hence, using Equation~\nn{Vreg}, we have
\begin{equation}\label{Kasvol}
\Vol_\varepsilon=A_\Sigma^{\bar g}\, \Big(
\frac{1}{3\varepsilon^3}\ +\ \frac{H}{\varepsilon^2}\ -\ \frac{\J\, |_\Sigma}{2\varepsilon}\ - \
\log\varepsilon \  \frac{K(H-1)}{3}
\Big)+{\mathcal O}(1)\, ,
\end{equation}
where $A_\Sigma^{\bar g}$ is the area of the hypersurface~$\Sigma$.
This equation should be compared with our general result for spaces embedded in four-manifolds in~\nn{d=3Will}.

As a final check on this result, given
 the simplicity of the Kasner-type metric in Equation~\nn{KM}, we can compute the integral defining the regulated volume in Equation~\nn{vregd} by brute force. In particular, we wish to compute 
\begin{equation}\label{vole}
\Vol_\varepsilon=\int_{D_\varepsilon} \frac{\sqrt{\det g}}{\bar \sigma^4}\, .
\end{equation}
For simplicity, we take $D_\varepsilon$ to be  the volume determined by the solid coordinate cylinder 
$$
 \{(t,x,y,z)\, |\, (x,y,z)\in \Sigma\,,\: \varepsilon \leq \bar\sigma(t)\mbox{ and } t< R\}\, .
$$
This corresponds to the volume of the dark gray trumpet-shaped space-time region depicted below:
\vspace{-.1cm}
\begin{center}
\includegraphics[scale=.32]{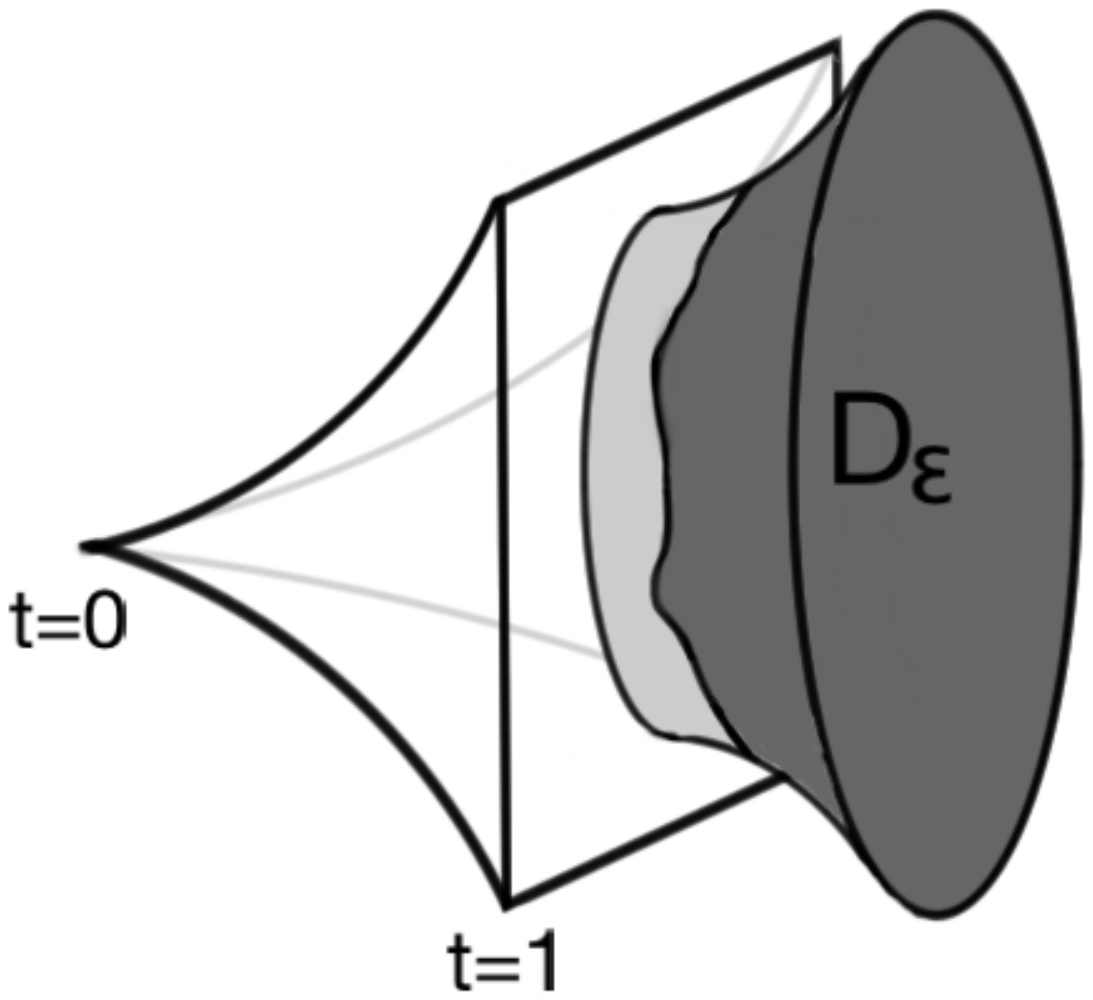}
\end{center}
\vspace{-5.3cm}
To compute the Laurent series expansion in~$\varepsilon$ of the integral in Equation~\nn{vole}, we expand the integrand in powers of $\sigma=t-1$ and find
$$
\frac{\sqrt{\det g}}{\bar\sigma^4}=
\frac1{(t-1)^4}\ +\ \frac{H}{(t-1)^3}\ +\ 
\frac{K+3H(H-3/2)}{9(t-1)^2}\ + \ \frac{K(H-1)}{3(t-1)}\ +\ {\mathcal O}(1)\, .
$$
We must also solve
 $$\bar \sigma(t_0) = \varepsilon\, ,$$
 with $\bar \sigma$ given by Equation~\nn{bs}, for the starting point of the $t$-integral
 as a power series in~$\varepsilon$. For that we find
  $$
 t_0=1+\varepsilon+
 \frac{H}{2}\, \varepsilon^2\, +\, 
 \frac{K+12H(H+1)}{36}\, \varepsilon^3
 \, + \, {\mathcal O}(\varepsilon^4)\, .
 $$
 Assembling the above data, the integration over $t$ in Equation~\nn{vole} is easy to perform and gives
$$
 \int_{t_0}^R dt\ \frac{\sqrt{\det g}}{\bar\sigma^4}=\frac1{3\varepsilon^3}\, +\, \frac{H}{\varepsilon^2}\, -\, 
 \frac{\J\, |_\Sigma}{2\varepsilon}
 -\, \frac{K(H-1)}{3} \, \log\varepsilon\,+{\mathcal O}(1)\, .
 $$ 
 This matches perfectly the regulated volume expression~\nn{Kasvol}.

\subsection{Generalized Hawking energies}

The Hawking energy associated to a compact spatial region with boundary~$\Sigma$ depends on the integral of mean curvature squared $\int_\Sigma H^2$. For conformally flat structures, this quantity recovers the Willmore energy of $\Sigma$. Therefore it is interesting to wonder whether the higher dimensional generalizations of the Willmore functional provided by the anomaly~${\mathcal A}$ are relevant to the problem of constructing quasi-local conserved quantities for general relativity in dimensions greater than four. We will not consider this problem any further except as motivation to compute the regulated volume for spatial regions of a six dimensional Schwarzschild black hole.

The six dimensional Schwarzschild metric is given by
$$
-\, \Big(\, 1-\, \frac{\, r_{\rm s_{\phantom{.\!}}}^3}{\, r^3}\, \Big)\, dt^2 + ds^2\, ,
$$
where the Euclidean signature spatial metric 
$$
ds^2=\frac{dr^2}{1-\frac{r_{\rm s_{\phantom{.\!}}}^3}{r^3}}+r^2 d\Omega^2\, ,
$$
and  $d\Omega^2$ is the metric for a round 4-sphere.
We take as data for our regulated volume the pair
$$
g=ds^2\, ,\qquad \Sigma={\mathcal Z}(r-r_0)\, .
$$
Here~$\Sigma$ is the closed hypersurface given by a 4-sphere of radius $r_0>r_{\rm s}$. We then consider the regulated volume of a bounded region $D$ with inner boundary $\Sigma$.

The hypersurface $\Sigma$ is
umbilic (vanishing trace-free second fundamental form) with mean curvature
$$
H=\frac{\sqrt{1-\frac{r_{\rm s_{\phantom{.\!}}}^3}{r_0^3}}\, }{r_0}\, .
$$
The metric $ds^2$  has vanishing (and therefore constant) scalar curvature $\J=0$.
However, the hypersurface~$\Sigma$ is not a conformal infinity of $ds^2$ so this metric does not solve the our singular Yamabe problem. Indeed
$$
\bar\sigma=H r_0^2 s \Big(
1
- \frac{\scriptstyle 5\mu-2}{\scriptstyle4}\, s
+ \frac{\scriptstyle\mu(5\mu+22)}{\scriptstyle24}\, s^2
+\frac{\scriptstyle\mu(5\mu^2-154\mu-256)}{\scriptstyle192}\,s^3
+\frac{ \scriptstyle\mu (3\mu^3+50\mu^2+944\mu+704)}{\scriptstyle384}\, s^4\Big)\, ,$$
where 
$$
s:=\frac{r-r_0}{H^2 r_0^3}\,  \ \ \mbox{ and }\ \ \mu:=\frac{r_{\rm s}^3}{r_0^3}\, ,
$$
determines a conformal unit defining density $\bm \sigma=[\hh g\, ;\, \bar \sigma]$. Moreover, we find that the corresponding $\bm {\mathcal S}$-curvature obeys
$$
\bm{\mathcal S}=1+  {\mathcal O}(\bm \sigma^6)\, ,
$$
so that the obstruction density vanishes.
This implies that the surface $\Sigma$ is a critical point of the generalized Willmore functional~${\mathcal A}$.

Once again, choosing  the true scale ${\bm \tau}=[\hh g \, ; \, 1]$ and using Equations~\nn{Ldef} and~\nn{Dlog}, we can compute the local terms appearing in divergences and the anomaly:
$$
\D \frac1{\tau^3}\ \Big|_\Sigma=-\, 9H\, , \quad\!\!\!
\D^2 \frac1{\tau^2}\ \Big|_\Sigma\!=6\Big(H^2+\frac{2}{r_0^2}\Big)
\, ,\quad\!\!\!
\D^3 \frac1\tau\ \Big|_\Sigma\!=6H\Big(H^2-\frac{4}{r_0^2}\Big)\, ,
\quad\!\!\!
\D^4 \log\tau\, \big|_\Sigma=-\, \frac{54}{r_0^4}\, .
$$
Equation~\nn{Vreg}
then gives the regulated volume
$$
\Vol_{\varepsilon}=
\frac{8\pi^3}{3}\Big(\frac{r_0^4}{4\varepsilon^4}
\ +\ \frac{Hr_0^4}{\varepsilon^3}
\ +\ 
\frac{r_0^2(H^2r_0^2+2)}
{4\varepsilon^2}
\ + \
\frac{Hr_0^2(H^2r_0^2-4)}{6\varepsilon}
\Big)-\pi^3\log\varepsilon+\mathcal O(1)\, .
$$
The coefficients of the four divergences above match our general results given in 
Equations~\nn{Vreg},~\nn{nnlo} and Appendix~\ref{nnnlo}.

\section{Nnnlo divergence}\label{nnnlo}

The nnnlo divergence for the case of a 
 conformal unit defining density
 is given, according to Equation~\nn{divergences}, in dimension $d\geq 5$  by
$$
-\, \frac{\int_\Sigma \D^3 \bm \tau^{4-d}}{3!(d-2)(d-3)(d-4)^2\, \varepsilon^{d-4}}\, .
$$
The main ingredients required to compute  $\D^3 \bm \tau^{4-d}\, |_\Sigma$ were given in~\cite{GW}.

We work in the scale $\bm \tau=[\hh g\, ;\; 1]$ and first  use Equation~\nn{Ldef} to compute one power of the Laplace--Robin operator
$$
\D \bm \tau^{4-d}=\big[\hh g\, ;\, 
(d-4)\big((d-6)\rho+\bar \sigma \J\, \big)
\big]\, .
$$
Thus 
\begin{eqnarray*}
\D^2 \bm \tau^{4-d}
\ =\  -(d-4)\Big[\hh g\!\!\!\! &\!\! ; \!\!& \!\!
(d-4)\Big((d-6)\big(\nabla_n\rho-(d-3)\rho^2\big)+\J\, \Big)\\[1mm]
&\!\!+\!\!&
\!\!
\bar\sigma\big((d-2)(\nabla_n-(d-2)\rho)\J\, 
+(d-6)\Delta\rho\big)
\ +\ {\mathcal O}(\bar\sigma^2)\, \Big]\, .
\end{eqnarray*}
In the  above we used that for a conformal unit defining density $n^2=1-2\rho\bar\sigma+\mathcal O(\bar\sigma^d)$ and that $\nabla.n=-d\rho-\bar\sigma \J$.
In turn
\begin{eqnarray*}
\D^3 \bm \tau^{4-d}
\ \stackrel\Sigma=\ (d-2)(d-4)
\Big[\hh g\! \!\!&\!\! ;\!\! &\!\! 
(d-4)(d-6)\big(\nabla_n^2\rho
+(3d-8)H\nabla_n \rho
-(d-2)(d-3)H^3
\big)
 \\
&\!\!+\!\!&\!\!
2(d-3)\big(\nabla_n\J+(d-2) H\J\, \big) 
+(d-6)\Delta\rho
\Big]\, .
\end{eqnarray*}
Here we have again used the aforementioned conformal unit defining density properties as well as Lemma~\ref{Hlemma}. By virtue of the second identity in Equation~\nn{box} we have
\begin{eqnarray*}
\D^3 \bm \tau^{4-d}\!
\ \stackrel\Sigma=\ (d-2)(d-4)
\Big[\hh g\! \!\!&\!\! ;\!\! &\!\! \!\!
(d-3)(d-6)\big(\nabla_n^2\rho
+(3d-10)H\nabla_n \rho
-(d-2)(d-4)H^3
\big)
 \\
&\!\!+\!\!&\!\!
2(d-3)\big(\nabla_n\J+(d-2) H\J\, \big) 
-(d-6)\bar\Delta H
\Big]\, .
\end{eqnarray*}
Now we employ Equations~\nn{1},~\nn{2} and~\nn{3} to obtain the required result:
\begin{equation*}
\begin{split}
\D^3  \bm \tau^{4-d}\!
  \stackrel\Sigma= (d\!-\!4)\Big[\hh g\, ;\!
&- 2\, (d\!-\!6)\, \Big(
\nablab^a\nablab^b \IIo_{ab}-(d\!-\!3)(d\!-\!4)H\big((d\!-\!2)\Rho_{ab}\hat n^a\hat n^b+K\big)\Big)
\\&
-(d\!-\!2)(d\!-\!6)\, \Big((d\!-\!2)\ \IIo^{ab}{\mathcal F}_{ab}+(d\!-\!4)\big(\IIo^{ab}\bar\Rho_{ab}
+\nablab^a(\Rho_{ab}\hat n^b)^{\!\top}\big)
\Big)
\\&
+(d\!-\!2)(d\!-\!3)\big(\hat n^a \nabla_a \J\,  +\, (3d\!-\!10)H\J\, 
 -\, (d\!-\!2)(d\!-\!4)(d\!-\!6)H^3
\big)
\Big]\, .
\end{split}
\end{equation*}

\newcommand{\msn}[2]{\href{http://www.ams.org/mathscinet-getitem?mr=#1}{#2}}
\newcommand{\hepth}[1]{\href{http://arxiv.org/abs/hep-th/#1}{arXiv:hep-th/#1}}
\newcommand{\maths}[1]{\href{http://arxiv.org/abs/math/#1}{arXiv:math/#1}}
\newcommand{\mathph}[1]{\href{http://lanl.arxiv.org/abs/math-ph/#1}{arXiv:math-ph/#1}}
\newcommand{\arxiv}[1]{\href{http://lanl.arxiv.org/abs/#1}{arXiv:#1}}


\begin{thebibliography}{KUVV12}


\bibitem[{\SSmidge}A{\SSmidge}{\SSmidge}G{\SSmidge}M{\SSmidge}O{\SSmidge}00]{AdSCFTreview}
\msn{1743597}{O.\ Aharony, S. S.\ Gubser, J.M.\ Maldacena, H.~Ooguri and Y.~Oz,}
  \textsl{Large $N$ field theories, string theory and gravity},
  Phys.\ Rept.\  {\bf 323}, 183--386 (2000),
  \href{http://arxiv.org/abs/hep-th/9905111}{arXiv:hep-th/9905111}.


\bibitem[ACF92]{ACF}
\msn{1186044}{L.~Andersson, P.~T. Chru{\'s}ciel and H.~Friedrich,} \textsl{ On the regularity
  of solutions to the {Y}amabe equation and the existence of smooth
  hyperboloidal initial data for {E}instein's field equations},
\newblock Comm. Math. Phys. \textbf{ 149}(3), 587--612 (1992),  \arxiv{0802.2250}.


\bibitem[AGS14]{Astaneh}
A.~F. Astaneh, G.~Gibbons and S.~N. Solodukhin, \textsl{ {What surface
  maximizes entanglement entropy?}},
\newblock Phys. Rev. \textbf{ D90}(8), 085021--085031 (2014), \arxiv{1407.4719}.


\bibitem[AM10]{Alexakis}
\msn{2653898}{S.~Alexakis and R.~Mazzeo,} \textsl{Renormalized area and properly embedded
  minimal surfaces in hyperbolic 3-manifolds},
\newblock Comm. Math. Phys. \textbf{ 297}(3), 621--651 (2010), \arxiv{0802.2250}.

\bibitem[BEG94]{BEG}
\msn{1322223}{T.~N. Bailey, M.~G. Eastwood and A.~R. Gover,} \textsl{ Thomas's structure
  bundle for conformal, projective and related structures},
\newblock Rocky Mountain J. Math. \textbf{ 24}(4), 1191--1217 (1994).

\bibitem[B95]{BQ}
\msn{1316845}{T. P. Branson,} \textsl{Sharp inequalities, the functional determinant, and the complementary series}, Trans. Amer. Math. Soc. {\bf 347}, no. 10, 3671--3742.  (1995).

\bibitem[BG08]{BG}
\msn{2407527}
{T. P.~Branson and A.~R. Gover} \textsl{
Origins, applications and generalisations of the $Q$-curvature},
\newblock
Acta Appl. Math. \textbf{102} 131--146 (2008).

\bibitem[BG01]{BrGoCNV}
\msn{1867890}{T. P.~Branson and A.~R. Gover,} \textsl{ Conformally invariant non-local
  operators},
\newblock Pacific J. Math. \textbf{ 201}(1), 19--60 (2001).

\bibitem[CSS01]{CSS} 
\msn{1847589}{A. \v{C}ap, J. Slov{\'a}k  and V. Sou\v{c}ek,} \textsl{Bernstein-Gelfand-Gelfand sequences}, Ann. of Math. {\bf 154}, 97--113 (2001), \maths{0001164}.


\bibitem[CEOY08]{WhatQ}
\msn{2407525}{S.-Y. Chang, M. Eastwood, 
B. \Odane rsted, Paul C. Yang,}
\textsl{What is $Q$-curvature?},
Acta Appl. Math.
{\bf 102}, Issue 2, 119--125 (2008).

\bibitem[Che84]{cherrier}
\msn{0749522}{P.~Cherrier,} \textsl{ Probl\`emes de {N}eumann non lin\'eaires sur les
  vari\'et\'es riemanniennes},
\newblock J. Funct. Anal. \textbf{ 57}(2), 154--206 (1984).

\bibitem[CG15]{CG15}
S. Curry and A. R. Gover, \textsl{An introduction to conformal geometry and tractor calculus, with a view to applications in general relativity}, Lond. Math. Soc. Lecture Note Series, Cambridge University Press, in press, \arxiv{1412.7559}.

\bibitem[DM08]{DM}
\msn{2456884} 
{Z.~Djadli, 
A.~Malchiodi}, 
\textsl{
Existence of conformal metrics with constant $Q$-curvature}, 
Ann.  Math. {\bf 168} 813--858 (2008), \maths{0410141}. 


\bibitem[ES97]{EastwoodSlovak}
\msn{1483772}{M.~Eastwood and J.~Slov{\'a}k,} \textsl{ Semiholonomic {V}erma modules},
\newblock J. Algebra \textbf{ 197}(2), 424--448 (1997).


\bibitem[EW14]{Engelhardt} 
  N.~Engelhardt and A.~C.~Wall,
  \textsl{Quantum Extremal Surfaces: Holographic Entanglement Entropy beyond the Classical Regime},
  JHEP {\bf 1501}, 073--098 (2015), \arxiv{1408.3203}.
 

\bibitem[FG02]{FGQ} \msn{1909634}{C. Fefferman and C. R. Graham,} \textsl{$Q$-curvature and Poincar\'e metrics}, Math. Res. Lett. {\bf 9},  no. 2-3, 139--151 (2002), \maths{0110271}.



\bibitem[{\SSmidge}G{\SSmidge}G{\SSmidge}HW{\SSmidge}1{\SSmidge}5{\SSmidge}]{GGHW15}
{\SSmidge}M.~Glaros, A.~R.~Gover, M.~Halbasch and A.~Waldron, 
\textsl{Singular Yamabe Problem Willmore Energies},
\newblock \arxiv{1508.01838}.




\bibitem[Gov07]{GoSigma}
\msn{2366922}{A.~R. Gover,} \textsl{ Conformal {D}irichlet-{N}eumann maps and
  {P}oincar\'e-{E}instein manifolds},
\newblock SIGMA Symmetry Integrability Geom. Methods Appl. \textbf{ 3}, 
  100--121 (2007), \arxiv{0710.2585}.

\bibitem[Gov10]{Goal}
\msn{2587388}{A.~R. Gover,} \textsl{ Almost {E}instein and {P}oincar\'e-{E}instein manifolds
  in {R}iemannian signature},
\newblock J. Geom. Phys. \textbf{ 60}(2), 182--204 (2010), \arxiv{0803.3510}.

\bibitem[GLW15]{GLW}
\msn{3338300}{A.~R. Gover, E.~Latini and A.~Waldron,} \textsl{ Poincar\'e-{E}instein
  holography for forms via conformal geometry in the bulk},
\newblock Mem. Amer. Math. Soc. \textbf{ 235}(1106), (2015),
\arxiv{1205.3489}.

\bibitem[GSW08]{GoverS} 
  \msn{2502272}{A.~R.~Gover, A.~Shaukat and A.~Waldron,}
  \textsl{Tractors, Mass and Weyl Invariance},
  Nucl.\ Phys.\ B {\bf 812}, 424--455 (2009), \arxiv{0812.3364};
  \textsl{Weyl Invariance and the Origins of Mass,}
  Phys.\ Lett.\ B {\bf 675}, 93--97 (2009), \arxiv{0810.2867}.


\bibitem[GSS08]{GSS08}
\msn{2372762}{A. R. Gover, P. Somberg and V. Sou\v{c}ek,} \textsl{
Yang-Mills detour complexes and conformal geometry},
Comm. Math. Phys. {\bf 278} 307--327 (2008), \maths{0606401}.


\bibitem[GW13]{CRMouncementCRM}
A.~R. Gover and A.~Waldron, \textsl{ {Submanifold conformal invariants and a
  boundary Yamabe problem
   {Conference on Geometrical Analysis-Extended Abstract}, CRM Barcelona (2013),
 {\rm arXived as} Generalising the Willmore equation:
  submanifold conformal invariants from a boundary Yamabe problem}},
  \arxiv{1407.6742}.

\bibitem[GW14]{GW}
\msn{3218267}{A.~R. Gover and A.~Waldron,} \textsl{ Boundary calculus for conformally compact
  manifolds},
\newblock Indiana Univ. Math. J. \textbf{ 63}(1), 119--163 (2014),
\arxiv{1104.2991}.

\bibitem[GW15]{GW15}
A.~R. Gover and A.~Waldron, \textsl{ {Conformal hypersurface geometry via a
  boundary Loewner-Nirenberg-Yamabe problem}},
\newblock (2015), \arxiv{1506.02723}.

\bibitem[Gra00]{Gra00} \msn{1758076}{C. R. Graham,}
\textsl{Volume and area renormalizations for conformally compact Einstein metrics},
Proceedings of the 19th Winter School ``Geometry and Physics'' (Srn\'i, 1999). 
Rend. Circ. Mat. Palermo (2) Suppl. No. {\bf 63}, 31--42 (2000), \maths{9909042}.




\bibitem[Gra16]{Grahamnew} C. R. Graham,
\textsl{Volume renormalization for singular Yamabe metrics},
\arxiv{1606.00069}.

\bibitem[GH05]{GraHi}
\msn{2160867}{C.~R.~Graham and K.~Hirachi,}
\textsl{The ambient obstruction tensor and $Q$-curvature}, in  \textsl{AdS/CFT correspondence: Einstein metrics and their conformal boundaries}, 
IRMA Lect. Math. Theor. Phys. {\bf 8}, 59--71, European Math. Society, Z\"urich, 2005, \maths{0405068}.

\bibitem[GJMS92]{GJMS}
\msn{1190438}{C.~R. Graham, R.~Jenne, L.~J. Mason and G.~A.~J. Sparling,} \textsl{ Conformally
  invariant powers of the {L}aplacian. {I}. {E}xistence},
\newblock J. London Math. Soc. (2) \textbf{ 46}(3), 557--565 (1992).

\bibitem[GJ07]{GJ}
\msn{2351380}{C. R. Graham and A.~Juhl}, 
\textsl{Holographic formula for Q-curvature}, 
Adv. Math. \textbf{216} 841--853 (2007),
\arxiv{0704.1673}.

\bibitem[GL91]{GL}
\msn{1112625}{C.R. Graham, and J.M. Lee,} \textsl{Einstein metrics with prescribed conformal infinity on the ball}, Adv. Math. {\bm 87}, 186--225 (1991).




\bibitem[GW99]{GrahamWitten}
\msn{1682674}{C.~R. Graham and E.~Witten,} \textsl{ Conformal anomaly of submanifold
  observables in {A}d{S}/{CFT} correspondence},
\newblock Nuclear Phys. B \textbf{ 546}(1-2), 52--64 (1999),  \hepth{9901021}.

\bibitem[GZ03]{GZ}
\msn{1965361}{C.R.  Graham and M. Zworski,}
\textsl{Scattering matrix in conformal geometry},
Invent. Math. {\bf 152} (1), 89--118 (2003),
\maths{0109089}.

\bibitem[Gra03]{Grant}
D.~Grant,
\newblock
  \href{http://www.math.auckland.ac.nz/mathwiki/images/5/51/GrantMSc.pdf}{\it A
  conformally invariant third order Neumann-type operator for hypersurfaces},
\newblock Master's thesis, University of Auckland, New Zealand, 2003.





\bibitem[Guv05]{Guven}
 \msn{2169323}
    {J.~Guven,} \textsl{ Conformally invariant bending energy for hypersurfaces},
\newblock J. Phys. A \textbf{ 38}(37), 7943--7955 (2005), \href{http://lanl.arxiv.org/abs/cond-mat/0507320}{arXiv:cond-mat/0507320}.

\bibitem[HS98]{Henningson}
  \msn{1644988}{M.\ Henningson and K.\ Skenderis,}
  \textsl{The Holographic Weyl anomaly},
  JHEP {\bf 9807}, 023 (1998),
 \hepth{9806087}. 

\bibitem[LL51]{Landau} \msn{0475345 }{L.D. Landau and E.M Lifshitz,} \textsl{The Classical Theory of Fields}, Course of Theoretical Physics Series, Volume 2, 4 Edition, Butterworth-Heinemann,  Oxford, 1980. 

\bibitem[LMP01]{grtensor} K. Lake, P. Musgrave and D.  Pollney, GRTensorII, http://grtensor.phy.queensu.ca/, 2001, Maple and Mathematica package. 

\bibitem[LM13]{Lewkowycz}
\msn{3106348}{A. Lewkowycz and J. Maldacena,} \textsl{Generalized gravitational entropy}, JHEP {\bf 1308} (2013), 090, \arxiv{1304.4926}.

\bibitem[LN74]{Loewner}
\msn{0358078}{C.~Loewner and L.~Nirenberg,}
\newblock Partial differential equations invariant under conformal or
  projective transformations,
\newblock in \textsl{ Contributions to analysis (a collection of papers
  dedicated to {L}ipman {B}ers)}, pages 245--272, Academic Press, New York,
  1974.
  
  \bibitem[Mal98]{Mal} \msn{1633016}{J.\ Maldacena,} \textsl{The large $N$ limit of superconformal
    field theories and supergravity},  Adv.\ Theor.\ Math.\ Phys.\  {\bf 2}, 231--252, (1998), \hepth{9711200}. 

\bibitem[Maz91]{MazzeoC}
\msn{1142715}{R.~Mazzeo,} \textsl{ Regularity for the singular {Y}amabe problem},
\newblock Indiana Univ. Math. J. \textbf{ 40}(4), 1277--1299 (1991).





\bibitem[OF03]{Osher}
\msn{1939127}{S.~Osher and R.~Fedkiw,}
\newblock \textsl{ Level set methods and dynamic implicit surfaces}, volume 153
  of \textsl{ Applied Mathematical Sciences},
\newblock Springer-Verlag, New York, 2003.






\bibitem[PRR15]{Perlmutter}
E.~Perlmutter, M.~Rangamani and M.~Rota, \textsl{ {Positivity, negativity, and
  entanglement}},
\newblock \arxiv{1506.01679}.


\bibitem[P86]{Polyakov}
\msn{0834521}{A. M. Polyakov,} \textsl{Fine Structure of Strings}, Nucl. Phys. B {\bf 268} (1986), 406--412.

\bibitem[SW10]{Shaukat} 
  \msn{2580074}{A.~Shaukat and A.~Waldron,}
  \textsl{Weyl's Gauge Invariance: Conformal Geometry, Spinors, Supersymmetry, and Interactions},
  Nucl.\ Phys.\ B {\bf 829}, 28--47 (2010), \arxiv{0911.2477}.


\bibitem[RT06]{RT}
\msn{2221050}{S. Ryu and T. Takayanagi,} \textsl{Holographic derivation of entanglement entropy from AdS/CFT}, Phys.
Rev. Lett. {\bf 96}, 181602--181607 (2006),  \hepth{0603001}; \textsl{Aspects of Holographic Entanglement Entropy}, JHEP {\bf 0608}, 045--099 (2006), 
 \hepth{0605073}

\bibitem[Sta05]{Stafford}
R.~Stafford,
\newblock
  \href{www.math.auckland.ac.nz/mathwiki/images/c/cf/StaffordMSc.pdf}{\it
  Tractor Calculus and Invariants for Conformal Sub-Manifolds},
\newblock Master's thesis, University of Auckland, New Zealand, 2005.


\bibitem[Vya13]{YuriThesis}
Y.~Vyatkin,
\newblock \textsl{
  \href{http://librarysearch.auckland.ac.nz/UOA2_A:uoa_alma21234030790002091}{Manufacturing
  conformal invariants of hypersurfaces}},
\newblock PhD thesis, University of Auckland, 2013.

\bibitem[Wal84]{Wald}
\msn{0757180}{R. M. Wald,} \textsl{General Relativity}, University of Chicago Press, 2010. 

\bibitem[Wil65]{Willmore}
\msn{0202066}{T.~J. Willmore,} \textsl{ Note on embedded surfaces},
\newblock An. \c Sti. Univ. ``Al. I. Cuza'' Ia\c si Sec\c t. I a Mat. (N.S.)
  \textbf{ 11B}, 493--496 (1965).

\end{thebibliography}
\end{document}